\documentclass[a4paper,11pt,reqno]{article}
\usepackage[margin=25mm]{geometry}
\usepackage{amsmath, amssymb, amsthm}
\usepackage{mathtools}
\usepackage{dsfont} 
\usepackage{enumitem}

\numberwithin{equation}{section} 

\newcommand{\R}{\mathbb{R}}

\newcommand{\bN}{\mathbb{N}}

\newcommand{\cB}{\mathcal{B}}

\newcommand{\cF}{\mathcal{F}}
\newcommand{\cH}{\mathcal{H}}

\newcommand{\cN}{\mathcal{N}}

\newcommand{\dd}{\mathrm{d}}

\newcommand{\fa}{\mathfrak{a}}
\newcommand{\hc}{\mathrm{h.c.}}
\newcommand{\vphi}{\varphi}
\newcommand{\vep}{\varepsilon}

\newcommand{\1}{\mathds 1}

\newcommand{\weyl}{W_{\hspace{-.06cm} N_0}}
\newcommand{\weyleta}{W_{\hspace{-.06cm} N_0(1+\eta) \,}}

\newcommand{\supp}{{\rm supp}}
\renewcommand{\geq}{\geqslant}
\renewcommand{\leq}{\leqslant}
\renewcommand{\hat}{\widehat}
\renewcommand{\tilde}{\widetilde}
\newcommand{\Tr}{\operatorname{Tr}}
\newcommand{\tr}{\operatorname{Tr}}

\newcommand{\nn}{\nonumber}

\usepackage{braket}

\usepackage{xcolor}

\usepackage{scalerel,stackengine} 
\stackMath
\newcommand\reallywidehat[1]{%
\savestack{\tmpbox}{\stretchto{%
  \scaleto{%
    \scalerel*[\widthof{\ensuremath{#1}}]{\kern-.6pt\bigwedge\kern-.6pt}%
    {\rule[-\textheight/2]{1ex}{\textheight}}
  }{\textheight}%
}{0.5ex}}%
\stackon[1pt]{#1}{\tmpbox}%
}

\usepackage{xstring} 

\usepackage[
  backend=biber,
  url=false,         
  doi=false,         
  isbn=false,        
  eprint=false,       
  maxnames=7,        
  giveninits=true,  
]{biblatex}

\usepackage[hidelinks]{hyperref} 

\DeclareFieldFormat{journaltitle}{%
  \ifboolexpr{ test {\iffieldundef{shortjournal}} }
    {\mkbibemph{#1}}
    {\mkbibemph{\thefield{shortjournal}}}%
}

\DeclareFieldFormat*{volume}{\mkbibbold{#1}}   
\DeclareFieldFormat*{number}{\mkbibbold{#1}}   

 \renewbibmacro{in:}{\addcomma\space}
 
\DeclareNameAlias{family-given}{author}  

\AtEveryBibitem{%
  \clearfield{month}%
    \clearfield{day}
    \clearfield{note}
}

\DeclareFieldFormat*{title}{%
  \iffieldundef{url}
    {%
      \iffieldundef{doi}
        {#1}%
        {\href{https://doi.org/\thefield{doi}}{#1}}%
    }%
    {\href{\thefield{url}}{#1}}%
}

\AtEveryBibitem{%
  \ifentrytype{book}{%
    \clearfield{labelyear}
    \renewbibmacro*{date}{%
      \printtext[parens]{\printfield{year}}%
    }%
  }{}%
}

\AtEveryBibitem{%
  \ifentrytype{incollection}{%
    \clearfield{labelyear}
    \renewbibmacro*{date}{%
      \printtext[parens]{\printfield{year}}%
    }%
  }{}%
}

\DeclareBibliographyDriver{misc}{%
  \usebibmacro{author/translator+others}%
  \setunit{\labelnamepunct}\newblock
  \printfield[title]{title}%
 \newunit\newblock
  \iffieldequalstr{eprinttype}{arxiv}{%
   \StrBefore{\thefield{eprint}}{ }[\arxivnumber]%
    \printtext{arXiv:\arxivnumber\space(\printfield{year})}%
    }{%
    \printfield{url}%
  }%
  \newunit\newblock
  \usebibmacro{finentry}%
}

\DeclareFieldFormat{pages}{%
  \ifboolexpr{
    test {\iffieldnums{pages}}
    and
    not test {\iffieldnum{pages}}
  }
    {pp.\,#1}
    {#1}%
}

\usepackage[capitalize]{cleveref}
\crefname{equation}{}{}

\newtheorem{lemma}{Lemma}
\newtheorem*{lemma*}{Lemma}
\newtheorem{theorem}[lemma]{Theorem}
\newtheorem*{theorem*}{Theorem}
\newtheorem{proposition}[lemma]{Proposition}

\theoremstyle{definition} 
\newtheorem*{remark}{Remark}
\title{A second order upper bound to the free energy of the two
dimensional Bose gas}

\author{
Florian Haberberger\thanks{Department of Mathematics, LMU Munich, Theresienstra\ss e 39, 80333 Munich, Germany. 
mail: haberberger@math.lmu.de, lukas.junge@math.lmu.de} ,
Lukas Junge\footnotemark[1]
}

\begin{document}
	\maketitle
    \begin{abstract}
We consider a two-dimensional Bose gas in the dilute regime where $\rho \fa^2$ is small. For temperatures below the Berezinskii--Kosterlitz--Thouless critical temperature, we derive an explicit upper bound for the free energy density using Bogoliubov theory. Our result captures the contribution of quasiparticle modes with dispersion relation $\sqrt{p^4 + 8\pi \rho\, \delta\, p^2}$ and where $\delta = 2 / (|\log(\rho \fa^2)| + \log |\log(\rho \fa^2)|)$.
\end{abstract}
\section{Introduction}
The thermodynamic properties of the ideal Bose gas have been understood since the pioneering works of Bose and Einstein \cite{Bose1924,Einstein1925}. In contrast, a rigorous derivation of interacting Bose gases from the many-body Schrödinger equation remains a central open problem in mathematical physics. In the dilute regime, interaction effects are weak but produce universal corrections to both the ground state energy and the excitation spectrum, depending only on the density $\rho$ and scattering length $\fa$.

At zero temperature $T=0$, the ground-state energy density $e(\rho)$ of a two-dimensional Bose gas exhibits a logarithmic dependence on the gas parameter $\rho \fa^2$, as first predicted by Schick in 1961 \cite{Schick}. 
Subsequent work has provided increasingly accurate expansions of the ground-state energy density \cite{Andersen2002, Cherny2D}. The expression given below is due to Mora and Castin \cite{Mora2D}.

\begin{equation*}\label{Lee huang yang in 2dimensons}
e(\rho)=2\pi\rho^2\delta\left(1+\Big(\gamma+\frac{1}{4}+\frac{\log (\pi)}{2}\Big)\delta +o (\delta) \right),
\qquad \delta=\frac{2}{\vert\log(\rho \fa^2)\vert+\log\vert \log(\rho \fa^2)\vert},
\end{equation*}
where $\gamma=0.577\ldots$ denotes the Euler--Mascheroni constant. 
The leading term was rigorously established by Lieb and Yngvason \cite{Lieb2001}, and the second-order correction was proved by Fournais et al. \cite{FGJMO-22}. 
In particular, these results confirm the universality of the dilute Bose gas in two dimensions, in the sense that up to second order the dependence on $v$ only appears through the scattering length $\fa$. 
Moreover, the Bogoliubov excitation spectrum has been rigorously studied in two dimensions. 
Using Bogoliubov theory it was established in \cite{caraciExcitationSpectrumTwodimensional2023} that in the Gross--Pitaevskii scale the low energy spectrum is given by
\begin{equation*}\label{excitation 2d}
    \sum_{p}n_p \sqrt{p^4+8\pi \delta \rho p^2},\quad n_p \in \mathbb{N}_0.
\end{equation*}
This suggests that for temperatures where the Bogoliubov theory applies, the free energy density is described by
\begin{equation}\label{free energy 2d}
    e(\rho) + \frac{T}{(2\pi)^2}\int_{\mathbb{R}^2}\log \left( 1-e^{-\frac{1}{T}\sqrt{p^4+8\pi \rho \delta p^2}} \right) \dd p.
\end{equation}
We show that this result, and in particular the Bogoliubov approximation, is valid up to temperatures comparable to the Berezinskii--Kosterlitz--Thouless critical temperature, see \cite{JMKosterlitz_1973,Berezinskii:1972fet}, 
\begin{equation}\label{Kosterlitz Thouless critical temperature}
    T_c=\frac{4\pi \rho}{\vert \log \delta\vert}.
\end{equation}
For the rigorous derivation of the first-order expansion at higher temperatures, see \cite{DEUCHERT_MAYER_SEIRINGER_2020,Mayer2020}.

It is somewhat surprising that \cref{free energy 2d} remains valid for $T>0$ despite the absence of Bose--Einstein condensation, cf. \cite{Mermin_Wagner_1966,Hohenberg1967}. This can be understood by noting that the free energy is a local quantity, even in the thermodynamic limit. Since the system exhibits Bose--Einstein condensation on sufficiently short length scales at positive temperature, the Bogoliubov prediction continues to provide an accurate description.

The analogous formula for the free energy density in three dimensions has been derived to Lee--Huang--Yang precision in \cite{haberberger2024freeenergydilutebose,haberbergerUpperBound2024}; see also \cite{FGJMOT-24} for a generalization to singular potentials. However, in three dimensions the temperature range is currently restricted to $T \sim \rho \fa \sim (\rho \fa^3)^{1/3} T_c^{\rm 3D}$. 
For results on the leading term at temperatures of order $T_c^{3D}$ we refer to \cite{Seiringer2008_FreeEnergy_3D_Lower, Yin2010_Free_UpperBound, Deuchert_2025_newupperboundfree}.
We emphasize that our result is not a straightforward adaptation of the three-dimensional case. The logarithmic behavior requires an analysis similar to the one needed in the hardcore setting. Furthermore, we obtain a formula that remains valid for temperatures comparable to the critical temperature, a regime that has not yet been accessible in the three-dimensional setting.

\subsection{Setting and Main Result}
	We consider the Hamiltonian
	\begin{equation}\label{hamiltonian}
		H_n=\sum_{i=1}^n-\Delta_i+\sum_{i<j}^n v(x_i-x_j)
	\end{equation}
acting on the symmetric $n$-particle Hilbert space $L_s^2(\Omega^n)$, where $\Omega\subset \R^2$ is a square with Dirichlet boundary conditions and $v\geq 0$ is the interaction potential.
The free energy at temperature $T\geq 0$ is given by
\begin{equation}\label{free energy}
	F_\Omega(n) = \inf\left\{\Tr(H_n\Gamma)-TS(\Gamma) \mid \Gamma \geq 0, \tr\Gamma = 1 \right\} = -T\log \tr e^{-H_n/T}.
\end{equation}
Here $S(\Gamma)=-\Tr(\Gamma\log\Gamma)$ is the von Neumann entropy.
Let $\rho >0$ denote the given particle density of the system.
The free energy density in the thermodynamic limit is defined as
\begin{equation}
    f(\rho,T) := \lim_{\substack{n,|\Omega| \to \infty \\ n/|\Omega| \to \rho}} \frac{F_\Omega(n)}{|\Omega|}.
\end{equation}
It is well established that the limit exists and that it is independent of the boundary conditions \cite{ruelle1969statistical, robinson_thermodynamic_1971}.
We denote by $\fa$ the scattering length of the potential $v$, see \Cref{sect:Scattering}.

\begin{theorem} \label{thm:main}
	Let $v$ be a positive, radial, compactly supported potential with scattering length $\fa>0$. There exists a constant $c>0$ depending only on the support of $v$ and its scattering length, such that for $T\leq c T_c$ and $\rho \fa^2\leq c$ we have
	\begin{align*}
		f(\rho,T) &\leq 2\pi\rho^2\delta\left(1+\Big(\gamma+\frac{1}{4}+\frac{\log (\pi)}{2}\Big)\delta\right) + \frac{T}{(2\pi)^2}\int_{\mathbb{R}^2} \log\left(1- e^{-\frac{1}{T}\sqrt{ p^4+8 \pi\rho \delta p^2}}\right) \dd p \\ &\quad+ C \rho^2\delta\left(\delta\vert\log(\delta)\vert+\frac{T}{T_c} \right)^2
	\end{align*}
    for some constant $C>0.$
	Here $\gamma=0.577\ldots$ is the Euler-Mascheroni constant and
	\begin{equation*}
		\delta = \frac{2}{\vert\log(\rho \fa^2)\vert+\log\big(\vert\log(\rho  \fa^2)|\big)}.
	\end{equation*}
\end{theorem}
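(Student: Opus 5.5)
We will prove the upper bound via the Gibbs variational principle: it suffices to construct, for a box $\Omega$ of side length $L$ with Dirichlet boundary conditions, a trial state $\Gamma$ with expected particle number $n\approx\rho|\Omega|$ whose free functional satisfies $\Tr(H\Gamma)-TS(\Gamma)\leq |\Omega|\cdot(\text{right-hand side})$. We then pass to the thermodynamic limit by the standard box-gluing argument: copies of the small box are separated by corridors of width larger than the range of $v$, and the Dirichlet conditions make them non-interacting. A grand-canonical state is converted to a canonical one with an error that is negligible in the limit, either by equivalence of ensembles or by a Legendre transform argument using convexity of $f$ in $\rho$.

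The box size $L$ must be chosen with care. It has to be large compared with the healing length $(\rho\delta)^{-1/2}$ and with the thermal wavelength $T^{-1/2}$, so that Riemann sums over $p\in(\pi/L)\bN^2$ reproduce the integral and the Dirichlet localization costs only $O(\rho/L^2)$ per unit area. It must also be small enough that a macroscopic condensate exists at $T\leq cT_c$. In two dimensions the number of thermally depleted particles in a box of size $L$ is about $T\log(LT^{1/2})/(4\pi)$ per unit area. This stays a small fraction of $\rho$ as long as $\log(L^2\rho\delta)\lesssim \rho/T$, which allows $L^2\rho\delta\sim e^{c'/\delta}$ up to temperatures of order $T_c$; the term $T/T_c$ in the error reflects exactly this depletion. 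In the box we take the condensate in a slightly smoothed Dirichlet ground state $\phi_0$, implemented by a Weyl operator $\weyl$ with $N_0=\rho_0|\Omega|$ and $\rho_0=\rho-\rho_{\rm th}$.

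On top of $\weyl$ we apply two unitaries. The first is a quadratic (and cubic, as in the hardcore-type 2D analyses) transformation built from the Neumann-type scattering solution on scale $\ell\ll$ healing length. This replaces $v$ by the renormalized coupling $8\pi\delta$ and produces the energy $2\pi\rho^2\delta(1+(\gamma+\tfrac14+\tfrac{\log\pi}{2})\delta)$, i.e.\ the ground-state result of \cite{FGJMO-22}. The second is a Bogoliubov transformation diagonalizing the low-momentum quadratic Hamiltonian with $\epsilon(p)=\sqrt{p^4+8\pi\rho\delta p^2}$. The resulting state is the image of the Gibbs state of $\sum_p\epsilon(p)a_p^*a_p$, so its entropy is explicit. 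Since the free functional of a quasi-free Gibbs state equals $T\sum_p\log(1-e^{-\epsilon(p)/T})$, it gives the integral in the theorem. We take the state to be the full quasi-free thermal state conjugated by these unitaries; the unitaries do not change the entropy.

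The main work, and the expected obstacle, is controlling the error terms at temperatures up to $cT_c$, where the thermal excitation density $\rho_{\rm th}\sim T|\log\delta|/(4\pi)$ is no longer small compared with $\delta\rho$. Three steps are needed.

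\begin{enumerate}
\item Carry out the computation relative to $\rho_0$ and re-expand in terms of $\rho$. The interaction between condensate and thermal cloud, $8\pi\delta\rho_0\rho_{\rm th}$ (exchange included), combines with $2\pi\delta\rho_0^2$ into $2\pi\delta\rho^2+O(\delta\rho_{\rm th}^2)$. This gives the error $C\rho^2\delta(T/T_c)^2$.
\item Bound the cubic and quartic terms in the excitations, for example $\sum v\,a^*a^*aa$ among thermal particles and condensate--excitation cubic terms, by Cauchy--Schwarz using quasi-free (Wick) estimates. This yields errors of order $\delta\rho_{\rm th}^2$ and $\delta^3\rho^2|\log\delta|^2$; the cross terms produce the square $(\delta|\log\delta|+T/T_c)^2$.
\item Control the logarithmic infrared sums $\sum_p 1/\epsilon(p)$ and the low-momentum cut of the scattering transform, which create the $|\log\delta|$ factors.
\end{enumerate}

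The delicate point is step 2 together with the choice of $\ell$ and $L$. The $\log$-divergences of the two-dimensional thermal sums must be cut off at the box scale without exceeding $\delta^3\rho^2|\log\delta|^2$, and this is where the restriction $T\leq cT_c$ enters.
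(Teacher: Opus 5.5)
Your overall architecture matches the paper: variational principle with localization and gluing, a Weyl operator plus a Bogoliubov transformation applied to the Gibbs state of $\sum_p\epsilon(p)a_p^*a_p$, and thermal depletion producing the $T/T_c$ term. However, Step~2 fails, and the cause is a missing idea rather than a technicality.

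\textbf{Step 2: the thermal--thermal interaction is not renormalized.} Your only short-range correlations come from unitaries built on condensate pair scattering, with quadratic and cubic generators. These renormalize the condensate--condensate and condensate--excitation interactions. The quartic term among excitations, however, keeps the bare $v$ and is evaluated in a quasi-free thermal state. Wick's theorem then gives $\frac{1}{2|\Lambda|}\sum_{p,q}(\hat v(0)+\hat v(p-q))\langle n_p\rangle\langle n_q\rangle\approx\hat v(0)\rho_{\rm th}^2|\Lambda|$, not $\delta\rho_{\rm th}^2|\Lambda|$.

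In two dimensions the thermal density is genuinely of order $\rho T/T_c$: the modes with $p^2\geq\rho\delta$ alone contribute $\frac{T}{4\pi}\log\frac{T}{\rho\delta}\approx\frac{T}{4\pi}|\log\delta|$. Your error is therefore $\sim\hat v(0)\rho^2(T/T_c)^2$, too large by the factor $\hat v(0)/\delta\sim|\log(\rho\fa^2)|$. For hard-core potentials, which the theorem covers with constants depending only on the support and $\fa$, it is infinite. Note also that the 2D analyses handling hard cores, e.g.\ \cite{FGJMO-22}, use a Jastrow factor, not cubic unitaries.

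\textbf{What the paper does instead.} You need a correlation acting on every pair of particles, thermal ones included. The paper uses the Jastrow factor $F=\prod_{i<j}f(x_i-x_j)$, with $f$ the scattering solution truncated at $b$. By Lemma~\ref{from hard to soft}, this replaces $v$ by $\tilde v=\frac{2}{b\log(b/\fa)}\delta_{|x|=b}$ for all particles, with $\hat{\tilde v}(0)=4\pi/\log(b/\fa)=4\pi\delta(1+O(Y|\log Y|))$. After that, crude terms such as $\frac{\hat{\tilde v}(0)}{|\Lambda|}\bigl(\sum_p s_p^2+\sum_p(s_p^2+c_p^2)\tr(a_p^*a_p\Gamma_D)\bigr)^2$ are already of the admissible size $\rho^2Y(Y+T/T_c)^2|\Lambda|$.

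The price is that $F$ is not unitary. One must control the three-body term $\mathcal R$ and the normalization $\tr(F\Gamma_BF)$, which requires $b^2\rho N\ll1$. Your remark that the unitaries leave the entropy unchanged also no longer suffices. The paper compares $S(\Gamma_0)$ with $S(\Gamma_D)$ through $\|\Gamma_0-\Gamma_W\|_1\lesssim\sqrt{b^2\rho N}$ and a Fannes-type inequality, applied after truncating to occupations $n_p<M$ and momenta $|p|\leq\sqrt K$ (Lemma~\ref{lem:trial_entropy}).

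\textbf{Box size.} Your choice $L^2\rho\delta\sim e^{c'/\delta}$ destroys the condensate. The depletion $\frac{T}{4\pi}\log(L^2\rho\delta)\sim\frac{c'T}{4\pi\delta}$ exceeds $\rho$ as soon as $T\gtrsim\rho\delta$, which is far below $T_c=4\pi\rho/|\log\delta|$. For $T$ near $cT_c$ you only have $\rho/T\sim|\log\delta|$, so $L$ must be polynomial in $\delta^{-1}$. This is also what makes the depletion $\lesssim NT/T_c$, as in \eqref{bound on Q2}, rather than merely $\ll N$; the stated error term needs the former. The paper takes $L=\rho^{-1/2}Y^{-5/2}$ with periodic boundary conditions. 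It passes to Dirichlet boxes only afterwards, via cosine factors on corridors of width $\ell=Y^2L/5$, which avoids a non-translation-invariant Bogoliubov analysis around a Dirichlet condensate.

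\textbf{Step 1 accounting.} Only the direct half of $8\pi\delta\rho_0\rho_{\rm th}$ combines with $2\pi\delta\rho_0^2$ into $2\pi\delta\rho^2$. The exchange half, $4\pi\delta\rho_0\rho_{\rm th}$, is first order in $T/T_c$ and cannot be placed in $O(\delta\rho_{\rm th}^2)$. It belongs to the dispersion $\epsilon(p)\approx p^2+4\pi\rho\delta$; in the paper it is part of $D_p$.
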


\begin{remark}
\begin{enumerate}[label=\bf\arabic*.]
\item After a computation one finds that the ideal gas of density $\rho$ exhibits BEC in boxes of size $L$ and temperature $T$ if
    \begin{equation}\label{BEC inequality}
        \rho \geq T\log(T L^2).
    \end{equation}
This suggests that condensation occurs at the GP scale $L\sim \rho^{-\frac{1}{2}}\delta^{-\frac{1}{2}}$ for $T\leq T_c$. In this way the Berezinskii--Kosterlitz--Thouless critical temperature arises naturally in our analysis.

\item The diluteness parameter $\delta$ is not the same in the literature. It agrees with the first two terms of the diluteness parameter in \cite{Cherny2D} (“$u$") and in \cite{Mora_2003} (“$\epsilon(\rho)$"), which is accurate enough for our level of precision.
The diluteness parameter used in \cite{Lieb2001,GreenBook} is
\begin{equation} \label{eq:def_Y}
    Y:= |\log (\rho\fa^2)|^{-1}.
\end{equation}
Note that 
$$
\delta = \frac{2Y}{1+Y|\log Y|} = 2Y \left(1 - Y |\log Y| +   \frac{Y^2|\log Y|^2}{1+Y|\log Y|} \right) \leq 2Y.
$$
Theorem~\ref{thm:main} then reads
\begin{align*}
f(\rho,T) &\leq 4\pi\rho^2Y\left(1-Y\vert \log Y\vert+ \Big(2\gamma+\frac{1}{2}+\log (\pi)\Big)Y\right) \\&+\frac{T}{(2\pi)^2}\int_{\mathbb{R}^2} \log\left(1- e^{-\frac{1}{T}\sqrt{ p^4+16 \pi\rho Yp^2}}\right) \dd p+ C \rho^2Y\left(Y\vert\log(Y)\vert+\frac{T}{T_c} \right)^2,
\end{align*}
where we used the monotonicity of the thermal term.

    \item Theorem~\ref{thm:main} yields a better precision than \cite{Mayer2020} for $T/T_c \leq \sqrt{\delta|\log \delta|}$. In this regime, \cite{Mayer2020} gives
    $$
    f(\rho,T) \leq \frac{T^2}{(2\pi)^2}\int_{\R^2}\log(1-e^{-p^2})\dd p + 4\pi \rho^2Y(1+2 T/T_c) + C\rho^2 Y^2|\log Y|
    $$
    where we used that for $T\ll T_c$ the free energy density of the ideal gas satisfies 
    $$
    f_0(\rho,T) = \frac{T^2}{(2\pi)^2}\int_{\R^2}\log(1-e^{-p^2})\dd p + \mathcal{O}\left(\rho^2Y \frac{1}{|\log Y|^2}(T/T_c)^2\right).$$
    Expanding the square root in \Cref{thm:main} yields
    \begin{align*}
    & \frac{T}{(2\pi)^2}\int_{\R^2}\log\left(1-\exp^{-\frac{1}{T}\sqrt{p^4+8\pi\rho\delta p^2}}\right) \dd p 
    \leq \frac{T}{(2\pi)^2}\int_{\R^2}\log\left(1-\exp^{-\frac{1}{T}(p^2+4\pi\rho\delta)}\right) \dd p 
    \\
    &= -\frac{T^2}{4\pi} {\rm Li}_2(e^{-4\pi\rho\delta/T})
    \\
    &\leq \frac{T^2}{(2\pi)^2}\int_{\R^2}\log(1-e^{-p^2})\dd p - T\rho\delta \log(4\pi\rho\delta/T) + T\rho\delta + C T^2 (\rho \delta/T)^2
    \\
    &\leq \frac{T^2}{(2\pi)^2} \int_{\R^2}\log(1-e^{-p^2})\dd p + 4\pi\rho^2 \delta T/T_c  + C \rho^2\delta^2,
    \end{align*}
    where we used $T < T_c$ in the last inequality and where ${\rm Li}_2(z) = \sum_{k=1}^\infty \frac{z^k}{k^2}$ denotes the dilogarithm.
    For higher temperatures, we expect, similarly to the results of Mayer \cite{Mayer2020}, the appearance of a contribution of order $\frac{T^2}{T_c^2}$. We believe that this term can be derived by accounting for quartic correlations.
\end{enumerate}
\end{remark}

\subsection{Trial state and proof idea}
Theorem~\ref{thm:main} follows from the variational principle for the free energy \cref{free energy} by constructing a suitable trial state. Our construction is inspired by \cite{FGJMO-22}, but in contrast to that work, we must incorporate excitations. In particular, we need to keep track of contributions from terms in the Hamiltonian that would vanish on the vacuum in the setting of \cite{FGJMO-22}.

We employ a standard localization technique, namely we partition the thermodynamic box into squares of fixed side length $L$, equipped with periodic boundary conditions. On each square, we construct a grand canonical trial state with the desired energy; this is established in \Cref{lem:trial_hamil} and \Cref{lem:trial_entropy_easy}. 
Achieving this at positive temperature and is the main novelty of this paper and the main part of the analysis.
The content of \Cref{prop:Localization} is that this grand canonical trial state can be extended to a canonical trial state with Dirichlet boundary conditions. Moreover, following \cite{robinson_thermodynamic_1971}, these localized Dirichlet states can be patched together to produce a trial state on the full thermodynamic domain $\Omega$.

We choose the side length of the small boxes to be
\begin{equation}\label{eq. size of L}
    L =  \rho^{-1/2}Y^{-\alpha}
\end{equation}
for a parameter $\alpha>0$, which will eventually be fixed to $\alpha = \frac{5}{2}$. According to \cref{BEC inequality}, Bose--Einstein condensation is expected to occur at this length scale.

We denote by $a_p=a\big(\frac{1}{L}e^{ip\,\cdot}\big)$ for $p\in\Lambda^* := \frac{2\pi}{L}\mathbb{Z}^2$ the usual annihilation operator and analogously the creation operator $a_p^*$. They satisfy the canonical commutation relations
$$
[a_p,a_q^*] = \delta_{p,q},\quad [a_p,a_q] = 0 = [a_p^*,a_q^*],
$$
on the Fock space $\cF := \bigoplus_{n=0}^\infty L^2_s(\Lambda^n)$. 
Moreover, we denote the set of excited momenta by $\Lambda_+^* := \Lambda^*\setminus\{0\}$ and the excited Fock space by $\cF_+ := \mathcal{F}(\langle \1_\Lambda\rangle^\perp)$. 
The number operator is denoted by $\cN.$ $\cN_+$ is the number operator on $\cF_+$ and $\cN_0 = a_0^*a_0 = \cN- \cN_+.$
The grand-canonical Hamiltonian on the box $\Lambda$ is given by
\begin{equation} \label{eq:def_cH}
    \cH = \bigoplus_{n=0}^\infty H_n = \sum_{p\in\Lambda^*}p^2a_p^*a_p + \frac{1}{2|\Lambda|} \sum_{p,q,k \in \Lambda^*} \hat{v}(k) a_p^*a_{q+k}^*a_q a_{p+k}.
\end{equation}
Here we use the following definition of the Fourier transform
$$
\hat{f}(p) = \int_{\R^2}f(x) e^{-ipx}\dd x,\quad f(x) = \int_{\R^2} \hat{f}(p)e^{ipx}\frac{\dd p}{(2\pi)^2}.
$$

As a first step, we soften the potential by conjugating $\cH$ with a Jastrow factor $F$. This allows us to replace $v$ by a more regular potential $\tilde{v}$ which has the same scattering length but small $L^1$-norm, see \Cref{sect:softening}. 
The softened Hamiltonian can be approximately diagonalized via two unitary transformations, namely a quadratic Bogoliubov transformation $e^{\cB}$ that implements correlations and a Weyl transformation $W_{N_0(1+\eta)}$ that introduces a condensate of $N_0(1+\eta)$ particles in the trial state, see \Cref{sect:trial_state}. Here $N_0 \in [0,\rho L^2]$ is the expected number of particles in the condensate and $\eta \in [-\frac{1}{2},\frac{1}{2}]$ is a small parameter that ensures that the trial state has the correct density in the end.
The resulting quadratic part of the Hamiltonian is of the form
\begin{equation}\label{definition of HD}
    H_D := \sum_{p \in \Lambda_+^*} \sqrt{p^4+8\pi \delta \rho p^2} a_p^*a_p
\end{equation}
in agreement with \cref{excitation 2d}. 
We denote the Gibbs state of $H_D$ as
\begin{equation*}
    \Gamma_D=\frac{e^{-\beta H_D}\1_{\{\mathcal{N}_0=0\}}}{\Tr_{\mathcal{F}_+}(e^{-\beta H_D})}.
\end{equation*}
The trial state is defined as follows
\begin{equation} \label{eq:def_Gamma_0}
\Gamma_0 = \frac{F \weyleta e^{\cB} \Gamma_D e^{-\cB} \weyleta^* F}{\tr\left(F \weyleta e^{\cB} \Gamma_D e^{-\cB} \weyleta^* F\right)}.
\end{equation}
Moreover, we introduce the length scale $b$ up to which we implement our correlations with the Jastrow factor, see \cref{Jastrow factor} and \cref{Def: of fb} for the exact role of $b$. We will always consider 
\begin{equation}\label{equation smallness of b}
    R\leq b\leq \rho L^{-1}Y 
\end{equation}
with $R$ such that $\supp(v) \subset B(0,R)$, i.e., we only implement correlations at distances less than the average particle distance. In the end we minimize over $b$ and find $b=\rho L^{-1}Y^3$.

We conclude this section by stating the main lemmata and deducing \Cref{thm:main} from them.
\begin{lemma} \label{lem:trial_hamil}
Let $v, T$ be as in \Cref{thm:main},  $\alpha > 3/2$. Then for any $\tilde{\rho}\in [\rho,(1+Y^2)\rho]$ there exists $\eta\in[-\frac{1}{2},\frac{1}{2}]$ such that 
\begin{align} \label{eq:N_Gamma}
 \tr(\cN \Gamma_0) =  L^2  \tilde{\rho}.
 \end{align}
Furthermore, $\Gamma_0$ satisfies
 \begin{align*}
 L^{-2} \tr(\mathcal{H}_v \Gamma_0) &\leq 2\pi \rho^2 \delta\left(1+\delta\Big(\gamma + \frac{1}{4}+\frac{\log(\pi)}{2}\Big) \right) + L^{-2} \Tr(H_D\Gamma_D) 
 \\& + C \rho^2 \frac{1}{\log\left(\frac{b}{\fa}\right)} \left(1-\delta \log\Big(\frac{b}{\fa}\Big)+Y+\frac{T}{T_c}\right)^2 + C b^2 \rho^4 L^2
 \end{align*}
for some constant $C$ only depending on $\alpha$.
 \end{lemma}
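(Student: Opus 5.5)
The proof splits into two parts: fixing $\eta$ by a continuity argument, and then evaluating the energy by conjugating $\cH_v$ step by step through $F$, $\weyleta$ and $e^{\cB}$.

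\emph{Particle number.} We regard $\eta\mapsto \tr(\cN\Gamma_0)$ as a continuous function on $[-\frac12,\frac12]$. After undoing the Weyl transformation, the condensate contributes $N_0(1+\eta)$. The depletion from $e^{\cB}\Gamma_D e^{-\cB}$ is $\sum_p(\sinh^2\sigma_p+\cosh 2\sigma_p\langle a_p^*a_p\rangle_{\Gamma_D})$. For $T\le cT_c$ and $L=\rho^{-1/2}Y^{-\alpha}$ with $\alpha>3/2$, this thermal depletion is at most a small fraction of $\rho L^2$: the sum is of order $TL^2\log(TL^2/(\rho\delta L^2))/(\dots)$, which the BEC inequality \cref{BEC inequality} controls. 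The Jastrow factor $F$ changes the particle number only by a relative error of order $\rho b^2$ or smaller, because $F$ differs from $1$ only on pairs at distance at most $b$. Choosing $N_0$ as $\rho L^2$ minus the expected depletion, the map $\eta\mapsto\tr(\cN\Gamma_0)$ covers $[\rho L^2,(1+Y^2)\rho L^2]$, and the intermediate value theorem gives $\eta$ with \cref{eq:N_Gamma}. The same argument yields the a priori bound $|\eta|\lesssim Y+T/T_c$, which will be needed in the error terms.

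\emph{Energy.} We use the softening identity from \Cref{sect:softening}: $F\cH_vF$ equals the kinetic part plus an effective interaction with potential $\tilde v$ having the same scattering length, plus three-body error terms. These three-body terms are controlled by $\rho^3 b^2$-type bounds and produce the $Cb^2\rho^4L^2$ term. The normalisation $\tr(F\cdots F)$ is treated in the same way, with relative error $O(\rho b^2)$.

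We then conjugate the softened Hamiltonian by $\weyleta$. This produces the following pieces:
\begin{itemize}
\item a constant $\frac{N_0^2(1+\eta)^2}{2L^2}\hat{\tilde v}(0)$;
\item quadratic terms $\frac{N_0}{L^2}\hat{\tilde v}(p)(a_p^*a_p+\tfrac12(a_p^*a_{-p}^*+\hc))$;
\item a linear term, which vanishes in expectation by translation invariance of $\Gamma_D$ and the parity of $e^{\cB}$;
\item cubic and quartic terms.
\end{itemize}
Conjugating with $e^{\cB}$ and taking the expectation in the quasi-free state $\Gamma_D$, the quadratic part diagonalises to $H_D$ plus the Bogoliubov ground-state constant. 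Combined with the constant term, and after replacing $\hat{\tilde v}(0)$ and the second-order Born term by their scattering-length expressions (as in \cite{FGJMO-22}), this gives $2\pi\rho^2\delta(1+\delta(\gamma+\frac14+\frac{\log\pi}{2}))$. The replacement error depends on the cutoff $b$ and has size $\rho^2(\log(b/\fa))^{-1}(1-\delta\log(b/\fa))^2$.

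The cubic terms have vanishing expectation on the quasi-free state. The quartic terms are estimated by Wick's theorem. Their thermal contributions are of order $\rho^2\delta\cdot(\text{density of excitations}/\rho)^2\sim\rho^2\delta(T/T_c)^2$, and the $\eta$-dependent mixed terms are of order $\rho^2\delta(Y+T/T_c)^2$.

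\emph{Main obstacle.} The hardest step is controlling the terms that vanish on the vacuum but not in the thermal state $\Gamma_D$: the quartic Wick pairings $\langle a^*a\rangle^2$ and the cross terms between the condensate-density shift $\eta$ and the thermal depletion. Together these must assemble into a perfect square $(\cdots+Y+T/T_c)^2$ instead of a linear error. The plan for this step is as follows.
\begin{enumerate}
\item Choose $N_0$ so that the linear-in-depletion terms cancel: the chemical potential $8\pi\rho\delta$ multiplies $(N_0(1+\eta)+\text{depletion}-\rho L^2)$, which is $O(Y^2\rho L^2)$ by \cref{eq:N_Gamma}.
\item Bound the remaining terms using $\sum_p\langle a_p^*a_p\rangle_{\Gamma_D}\lesssim L^2\rho\,T/T_c$. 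This bound follows from evaluating the Bose integral with the gap $\sqrt{8\pi\rho\delta}\,|p|$ down to $|p|\sim L^{-1}$, and it is where $\alpha>3/2$ enters.
\end{enumerate}
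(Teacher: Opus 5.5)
\textbf{Gap: the mixed pairing--thermal contraction in the quartic term.} Your overall route is the paper's: Jastrow softening, Weyl shift, Bogoliubov rotation, Wick's theorem in $\Gamma_D$, completing the square with the depletion, and the intermediate value theorem for $\eta$. However, your accounting of the quartic term misses the one contribution that is genuinely new at $T>0$.

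Wick's theorem in $e^{\cB}\Gamma_De^{-\cB}$ produces three kinds of contractions. Two are the vacuum pairing--pairing term (handled as in \cite{FGJMO-22}) and the thermal--thermal terms you list. The third contracts one zero-temperature pairing with one thermal occupation:
\begin{equation*}
\frac{2}{|\Lambda|}\sum_{p\in\Lambda_+^*}\Big(\sum_{k:\,p+k\neq 0}\hat{\tilde v}(k)\,c_{p+k}s_{p+k}\Big)\,c_ps_p\,\tr(a_p^*a_p\Gamma_D).
\end{equation*}
It vanishes on the vacuum, but it is linear, not quadratic, in the excitation density. Your claim that the thermal quartic contributions are $O(\rho^2\delta(T/T_c)^2)$ therefore fails for it. With the available bound $\sum_q|c_qs_q|\leq CN$ from \eqref{eq:sc_bound}, Wick plus the triangle inequality only give $C\hat{\tilde v}(0)\rho N\,T/T_c\sim\rho N\delta\,T/T_c$. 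That is of the order of the whole interaction effect on the thermal term, and it does not fit into $C\rho^2\frac{1}{\log(b/\fa)}(1-\delta\log(b/\fa)+Y+T/T_c)^2$.

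The missing input is that, after softening, the pairing seen by $\tilde v$ is small. Uniformly in $p$,
\begin{equation*}
\frac{1}{|\Lambda|}\sum_k\hat{\tilde v}(k)c_{p+k}s_{p+k}=-\rho_0\widehat{\tilde v\omega}(p)+O\big(\hat{\tilde v}(0)\rho Y\big),
\end{equation*}
which follows from the first step of \cite[Lemma 4.7]{FGJMO-22} plus a sum-to-integral error, i.e.\ \eqref{eq:vcs_replace}. Moreover $|\widehat{\tilde v\omega}(p)|\leq\widehat{\tilde v\omega}(0)=\hat{\tilde v}(0)-\hat g(0)=\frac{4\pi}{\log(b/\fa)}(1-\delta\log(b/\fa))$, because $\omega=1-\delta\log(b/\fa)$ on $\supp\tilde v$.

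Equivalently, you could sharpen \eqref{eq:sc_bound} to $\sum_q|c_qs_q|\leq CN(1-\delta\log(b/\fa)+Y)$. This holds since $|c_qs_q|=\rho_0|\hat g(q)|/(2D_q)$ and $\hat g$ stays of size $\delta$ only up to $|q|\sim b^{-1}$, which yields the logarithm $\delta\log(\tilde R/b)=1-\delta\log(b/\fa)$.

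With this input, the mixed term turns $2\rho_0\hat{\tilde v}(p)c_ps_p$ into $2\rho_0\hat g(p)c_ps_p$. The coefficient of $\tr(a_p^*a_p\Gamma_D)$ becomes $D_p+O\big(\rho(\hat{\tilde v}(0)-\hat g(0)+\hat{\tilde v}(0)Y)\big)(s_p^2+c_p^2)$, with $D_p\leq\sqrt{p^4+8\pi\rho\delta p^2}$. This also corrects your statement that the quadratic part ``diagonalises to $H_D$''. The quadratic part is built from $\rho_0\hat{\tilde v}$, while $e^{\cB}$ and $H_D$ are built from $\hat g$. This mismatch times the depletion is exactly the origin of the cross term $\frac{1}{\log(b/\fa)}(1-\delta\log(b/\fa)+Y)\frac{T}{T_c}$ in the squared error.

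Three further points need fixing.
\begin{enumerate}
\item Your bound $|\eta|\lesssim Y+T/T_c$ is too weak for the $\eta$-linear terms that your chemical-potential cancellation does not touch. An example is $\frac{N_0\eta}{2|\Lambda|}\sum_p\hat{\tilde v}(p)\tr(a_p^*a_{-p}^*e^{\cB}\Gamma_De^{-\cB})+\hc$, which is $O(|\eta|\,\hat{\tilde v}(0)\rho N)$ under \eqref{eq:sc_bound}. Your own choice of $N_0$ already absorbs the depletion, so you actually have $|\eta|\leq C(Y^2+b^2\rho N)$. This is what the paper uses to bound all $\eta$-terms by $C\frac{1}{\log(b/\fa)}|\eta|N\rho$.
\item The Jastrow factor changes the norm and the particle number by a relative amount $b^2\rho N$, not $\rho b^2$, since there are about $N^2b^2/L^2$ pairs within distance $b$. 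Via $F_n^2\geq 1-\sum_{i<j}(1-f_{ij}^2)$ one gets $\tr(F\Gamma F)\geq 1-Cb^2\rho N$. This normalisation error times the energy $O(\rho N)$, together with the $b^2\rho N$ part of $\eta$, is what produces $Cb^2\rho^4L^2$. The three-body term $\mathcal R$ contributes only $Cb^2\rho^2N$.
\item The condition $\alpha>3/2$ enters through the sum-to-integral error $\hat{\tilde v}(0)\rho NY^{\alpha+1/2}|\log Y|$, not through the depletion bound.
\end{enumerate}
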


The second Lemma below states that the entropy is almost unchanged by the Jastrow factor i.e $S(\Gamma_0) \approx S(\weyl e^{\cB} \Gamma_D e^{-\cB} \weyl^*) = S(\Gamma_D)$.

\begin{lemma} \label{lem:trial_entropy_easy}
For $T$ as in Theorem~\ref{thm:main} we have
$$
-L^{-2}TS(\Gamma_0) 
\leq -L^{-2}TS(\Gamma_D) + C \rho^3 bL \vert\log(b\rho L)\vert
$$
for some constant $C>0.$
\end{lemma}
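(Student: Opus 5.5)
We will reduce the entropy comparison to a statement about how much the Jastrow factor can distort the spectrum of the unitarily rotated Gibbs state. Set
\[
\tilde\Gamma := \weyleta e^{\cB}\Gamma_D e^{-\cB}\weyleta^*, \qquad Z := \tr(F\tilde\Gamma F), \qquad \Gamma_0 = Z^{-1}F\tilde\Gamma F.
\]
Since $\weyleta$ and $e^{\cB}$ are unitary, $S(\tilde\Gamma)=S(\Gamma_D)$. It therefore suffices to show $S(\Gamma_0)\geq S(\tilde\Gamma) - C L^2\rho^3 bL|\log(b\rho L)|/T$.

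The key structural fact is that the Jastrow factor $F=\prod_{i<j}f_b(x_i-x_j)$ is a multiplication operator with $0\leq F\leq 1$. Moreover $1-F$ is supported on configurations where some pair lies within distance $b$, so $1-F^2\leq \sum_{i<j}\1_{\{|x_i-x_j|\leq b\}}$.

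\textbf{Step 1: a general entropy inequality.} We use the following bound: for a state $\tilde\Gamma$ and an operator $0\leq F\leq 1$, the state $\Gamma_0=F\tilde\Gamma F/Z$ satisfies
\[
S(\Gamma_0)\geq S(\tilde\Gamma) - \frac{C}{Z}\Big(\epsilon\,\bigl|\log \epsilon\bigr| + \epsilon\, \log\dim_{\rm eff}\Big), \qquad \epsilon := 1-Z = \tr\bigl((1-F^2)\tilde\Gamma\bigr).
\]
To prove it, write $\tilde\Gamma = \Gamma_0 Z$ plus a remainder controlled by $(1-F)$. Concretely, apply concavity and the Fannes/Audenaert-type continuity bound in the trace norm, together with $\|\tilde\Gamma - F\tilde\Gamma F\|_1\leq 2\tr((1-F)\tilde\Gamma)^{1/2}$, or more cleanly use the Stinespring-type decomposition. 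A cleaner route, which is the one we would try first, is the trick of \cite{haberberger2024freeenergydilutebose}. Note that $\tilde\Gamma\mapsto F\tilde\Gamma F + \sqrt{1-F^2}\,\tilde\Gamma\sqrt{1-F^2}$ is a unital-trace-preserving channel whose Kraus operators commute. Monotonicity of entropy under this pinching-like map gives $S(\tilde\Gamma)\leq S(Z\Gamma_0\oplus(1-Z)\Gamma_1)=ZS(\Gamma_0)+(1-Z)S(\Gamma_1)+h(Z)$, where $\Gamma_1$ is the normalized second piece and $h$ is the binary entropy. Hence
\[
S(\Gamma_0)\geq S(\tilde\Gamma) - \epsilon\, S(\Gamma_1)/Z - h(Z)/Z .
\]
This inequality needs care, since the two pieces are not orthogonal, so the direct-sum bound requires the channel to be an isometric dilation. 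We would check that $V\psi = F\psi\otimes e_1+\sqrt{1-F^2}\psi\otimes e_2$ is an isometry, which it is. Then the entropy of the reduced state is bounded by that of the dilated state, which equals $S(\tilde\Gamma)$.

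\textbf{Step 2: bounding $\epsilon$.} Use $\epsilon\leq \tr\bigl(\sum_{i<j}\1_{\{|x_i-x_j|\leq b\}}\tilde\Gamma\bigr)$. Since the two-body density of $\tilde\Gamma$ (a quasi-free state with condensate) is bounded pointwise by $C\rho^2$ up to the $e^{\cB}$ correlations, whose kernel is bounded by Lemma~\ref{lem:trial_hamil}-type estimates, we get
\[
\epsilon\leq C L^2 \rho^2 b^2 .
\]
This must be combined with $b\leq \rho L^{-1}Y$ from \eqref{equation smallness of b} to ensure $\epsilon\ll 1$. Possibly one only obtains the weaker $L^2\rho^2 b^2\lesssim \rho^3 b L\cdot (b/(\rho L))$, which is still fine.

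\textbf{Step 3: bounding $S(\Gamma_1)$, the main obstacle.} The entropy of the leftover state $\Gamma_1$ is not a priori controlled, because the Fock space is infinite-dimensional. We would bound it by the Gibbs variational principle against a reference Hamiltonian, for instance $S(\Gamma_1)\leq \beta'\tr(H'\Gamma_1)+\log\tr e^{-\beta' H'}$ with $H'=\sum_p (p^2+1/L^2)\, a_p^*a_p$ at a temperature of order $\rho$. The kinetic energy of $\Gamma_1$ is controlled via $\tr(\sqrt{1-F^2}\,\tilde\Gamma\sqrt{1-F^2}\,(-\Delta))$. The bound $|\nabla f_b|$ behaves like $1/(|x|\log(b/\fa))$, and this is what produces the factor $\rho^3 bL$ together with a logarithm, $|\log(b\rho L)|$, coming from $h(Z)\sim\epsilon|\log\epsilon|$. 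Multiplying by $T L^{-2}$ with $T\lesssim\rho$ then gives the claimed error $C\rho^3 bL|\log(b\rho L)|$.
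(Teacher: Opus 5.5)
Your Step~1 is correct, and it is a genuinely different route from the paper's. The isometry $V\psi=F\psi\otimes e_1+\sqrt{1-F^2}\,\psi\otimes e_2$, followed by pinching in the $\mathbb{C}^2$ factor, gives $S(\Gamma_0)\geq S(\tilde\Gamma)-\epsilon S(\Gamma_1)/Z-h_2(Z)/Z$, where $h_2$ is the binary entropy. Orthogonality of the two pieces is not needed, and the error is linear in $\epsilon\leq Cb^2\rho^2L^2$.

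The paper instead proves only $\|\Gamma_0-\Gamma_W\|_1\leq C\sqrt{b^2\rho N}\sim b\rho L$. It then truncates to a finite index set $I$ (occupations $<M$, momenta $\leq\sqrt K$) and controls the discarded eigenvalues through the explicit spectrum of $\Gamma_W\simeq\Gamma_D$. Finally it applies a Fannes inequality with $\log|I|\leq CL^2K\log M$. The factor $\rho^3bL$ arises as $T^2(\log M)^2\cdot b\rho L$ with $T\lesssim\rho/|\log Y|$. That argument uses nothing about $F$ beyond $0\leq F\leq 1$ and $\widehat{1-f}(0)\leq\pi b^2$. Your route moves the whole difficulty into $S(\Gamma_1)$, and that is where the proposal breaks.

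You propose to bound $S(\Gamma_1)$ by the Gibbs variational principle with a kinetic reference Hamiltonian. For this you estimate the kinetic energy of $\Gamma_1=\epsilon^{-1}\sqrt{1-F^2}\,\tilde\Gamma\sqrt{1-F^2}$ as if $\nabla\sqrt{1-F^2}$ behaved like $\nabla f$. It does not. The function $f$ reaches $1$ at $|x|=b$ with nonzero slope $1/(b\log(b/\fa))$; this kink is exactly what produces $\tilde v$. Near $|x|=b$ one therefore has
\[
\sqrt{1-f^2}\approx\Big(\frac{2(b-|x|)}{b\log(b/\fa)}\Big)^{1/2},\qquad |\nabla\sqrt{1-f^2}|^2\approx\frac{1}{2b\log(b/\fa)\,(b-|x|)},
\]
and the right-hand side is not integrable. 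Hence $\sqrt{1-F^2}\,\Psi\notin H^1$ whenever the two-body density of $\Psi$ does not vanish on $\{|x_i-x_j|=b\}$. Then $\tr(-\Delta\,\Gamma_1)=+\infty$, and your bound on $S(\Gamma_1)$ is vacuous.

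The bookkeeping is also missing. You need $S(\Gamma_1)\lesssim(\rho/T)(L/b)|\log(b\rho L)|$. The term you credit with the logarithm, $L^{-2}T h_2(Z)\sim T\rho^2b^2|\log(b\rho L)|$, is far below $\rho^3 bL$ and is not its source.

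A repair within your framework is available. $S(\Gamma_1)$ depends only on the nonzero spectrum, so you may replace $\Gamma_1$ by $\epsilon^{-1}\tilde\Gamma^{1/2}(1-F^2)\tilde\Gamma^{1/2}$. Then use
\[
S(\Gamma_1)\leq-\tr(\Gamma_1\log\tilde\Gamma)=\log\tr_{\cF_+}e^{-\beta H_D}+\beta\epsilon^{-1}\tr\big((1-F^2)\tilde\Gamma^{1/2}\tilde H\tilde\Gamma^{1/2}\big),\qquad \tilde H=\weyleta e^{\cB}H_De^{-\cB}\weyleta^*.
\]
The first term contributes $L^{-2}T\epsilon\cdot CL^2T\leq CT^2b^2\rho^2L^2$, which is harmless. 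You would still have to prove a Wick-type bound on the correlation $\tr\big((1-F^2)\tilde\Gamma^{1/2}\tilde H\tilde\Gamma^{1/2}\big)$, and that estimate is the missing ingredient.
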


We can extend the trial state $\Gamma_0$ to the thermodynamic box $\Omega$ by first obtaining a Dirichlet version of $\Gamma_0$ on a slightly smaller box and then gluing copies of them together. 

\begin{proposition}\label{prop:Localization}
Let $0<R<\ell<L$ such that $\supp(v)\subset B(0,R).$ Let $\Gamma_L$ be a normalized density matrix on the Fock space $\cF(\Lambda_L)$, satisfying periodic boundary conditions. 
If
\begin{equation} \label{eq:enough_part}
    \rho = \frac{\tr(\cN \Gamma_L)}{(L+2\ell+R)^2},
\end{equation}
then there exists a universal constant $C$ such that
$$
f({\rho},T) \leq \frac{\tr(\cH\Gamma_L) - TS(\Gamma_L)}{(L+2\ell+R)^2} + \frac{C}{L^3\ell}{\tr(\cN \Gamma_L)}.
$$
\end{proposition}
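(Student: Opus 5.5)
The plan follows the standard Robinson-type localization argument, adapted as in the three-dimensional works \cite{haberbergerUpperBound2024}. There are three steps: turn the periodic state into a Dirichlet state, pass from grand canonical to canonical, and then tile the thermodynamic box.

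\textbf{Step 1: Dirichlet state on a box of side $L+2\ell$.}
Introduce a deformation map $\Phi:\Lambda_{L+2\ell}\to\Lambda_L$ that is the identity on the inner box and stretches the boundary layer of width $\ell$. Equivalently, multiply by a smooth cutoff $\chi$ that vanishes on the boundary and equals one in the interior, and use an IMS-type construction. The simplest implementation is the "periodic to Dirichlet" trick. Take a partition of unity in which the one-body functions $\chi_\pm$ are obtained from $\cos$/$\sin$ profiles on the strip of width $\ell$, with $\chi_+^2+\chi_-^2=1$ and $|\nabla\chi_\pm|\lesssim 1/\ell$. Then define $\Gamma^D$ by localizing $\Gamma_L$, viewed as periodic on the unfolded box, with the second-quantized operator $\Gamma(\chi)$, averaging over translations of the cutoff.

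The kinetic energy changes by the localization error $\sum|\nabla\chi|^2$. After averaging over the position of the boundary layer, this error is at most $C\ell^{-2}\cdot(\ell/L)\tr(\cN\Gamma_L) = C\tr(\cN\Gamma_L)/(L\ell)$. Relative to the volume $L^2$, this is the claimed $C\tr(\cN\Gamma_L)/(L^3\ell)$. Since $v\ge 0$, the interaction can only decrease under localization by a cutoff with values in $[0,1]$: the two-body density is multiplied by $\chi(x)^2\chi(y)^2\le1$. The entropy does not decrease, because $\Gamma(\chi)$ is a quasi-free channel and averaging over translations is concave in the entropy. I expect this step to be the main obstacle. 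Making the localization both entropy-preserving and exactly particle-number-preserving in expectation is delicate. The route I would try first is Robinson's: reflect the state and use the unitary implementing the diffeomorphism on the Fock space. Under a diffeomorphism the particle number and entropy are preserved exactly, and only the kinetic energy changes, by a factor $1+O(\ell^{-1}\cdot\text{stretch})$ localized in the layer. Averaging over translations then gives the $1/(L\ell)$ gain.

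\textbf{Step 2: grand canonical to canonical.}
The free energy $f$ is defined canonically. Use that $F_\Omega(n)$ is convex in $n$, and that a grand canonical state with mean particle number $\bar n$ yields, via $\sum_n p_n F(n)\ge F(\bar n)$ after extension to the large box, an upper bound on the canonical free energy at $\bar n$. The mixing entropy $-\sum p_n\log p_n$ is nonnegative and enters with a favorable sign. More precisely, the grand canonical functional of a state block-diagonal in $n$ equals $\sum_n p_n(\text{canonical functional of block } n) - T H(p)$, and $-TH(p)\le0$ is what we need.

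\textbf{Step 3: tiling.}
Place copies of the Dirichlet state in the cells of a lattice of period $L+2\ell+R$. The corridors of width $R$ ensure that there is no interaction between different cells. The tensor product state has additive energy and entropy, and density exactly $\rho$ by \cref{eq:enough_part}. Taking the thermodynamic limit, using that $f$ is independent of the boundary conditions, and applying the convexity of Step 2 yields the stated bound.
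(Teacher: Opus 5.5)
Your architecture (Dirichlet extension to $\Lambda_{L+2\ell}$, tiling with corridors of width $R$, passage to the canonical $f$) is the paper's. However, Step 1, which carries the whole content of the proposition, is left open, and the reasons you give for it are incorrect.

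\textbf{Step 1.} Conjugation by $\Gamma(\chi)$ with $0\le\chi\le1$ is not trace preserving. Quasi-free channels can also lower entropy: a total-loss channel sends every state to the vacuum. So ``the entropy does not decrease'' does not follow from what you wrote.

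Your preferred fallback fails too. A unitary $\psi\mapsto|\det D\Phi|^{1/2}\,\psi\circ\Phi$ does not make periodic functions vanish on $\partial\Lambda_{L+2\ell}$. It also distorts interparticle distances in the layer, so the interaction energy changes as well, not only the kinetic energy. For non-monotone $v$, the condition $v\ge0$ does not control this.

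The missing observation is that the $\cos/\sin$ profile you mention already does everything \emph{exactly}, with no loss. Take $u(s)=1$ for $|s|\le L/2-\ell$ and $u(s)=\cos(\tfrac{\pi}{4\ell}(|s|-L/2+\ell))$ for $L/2-\ell\le|s|\le L/2+\ell$. With $U(x):=u(x^{(1)})u(x^{(2)})$ one has $\sum_{m\in\mathbb{Z}^2}U(x+mL)^2=1$. Define $V\Psi:=\Psi\prod_k U(x_k)$, with $\Psi$ extended periodically. On each $n$-particle sector, $V$ is an isometry from periodic functions into symmetric functions vanishing on $\partial\Lambda_{L+2\ell}$. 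Consequently:
\begin{itemize}
\item $V\Gamma_LV^*$ has the same nonzero spectrum as $\Gamma_L$, so the entropy and the whole distribution of $\cN$ are preserved.
\item The kinetic cross term vanishes, because $\nabla\sum_m U^2(\cdot+mL)=0$.
\item The interaction is bounded using $v\le\sum_{m\in\mathbb{Z}^2}v(\cdot+mL)$, which holds since $v\ge0$.
\item The only error is $\int|\Psi|^2|\nabla_k\prod_j U(x_j)|^2$. This is at most $C\ell^{-2}$ times the expected number of particles in the layer, and choosing the translation of $\Gamma_L$ by averaging reduces it to $C\tr(\cN\Gamma_L)/(L\ell)$.
\end{itemize}

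\textbf{Step 2.} This step has a second, fixable gap. Convexity of the finite-volume $F_\Omega(n)$ in $n$ is not available for interacting gases, and $\bar n$ need not be an integer; only the limit $f(\rho,T)$ is convex.

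The sign of the mixing entropy is also reversed. Since $\tr(\cH\Gamma)-TS(\Gamma)=\sum_np_n(\cdots)-TH(p)$, going through $\sum_np_nF(n)$ costs an \emph{extra} $+TH(p)$. This is harmless only because $H(p)\le\log(1+\tr(\cN\Gamma))+1=o(|\Omega|)$, which you would need to prove.

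The paper avoids both points with the equivalence of ensembles $f(\rho,T)=\sup_\mu\{\rho\mu-p(\mu,T)\}$, where $p$ is defined by the grand-canonical variational principle. The tiled Dirichlet state is a competitor for every $\mu$, so $\rho\mu-p(\mu,T)\le\liminf|\Lambda_{\tilde L}|^{-1}(\tr(\cH\Gamma^D_{\tilde L})-TS(\Gamma^D_{\tilde L}))$. Taking the supremum over $\mu$ gives the claim.
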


We finish this section by combining the lemmata and proving Theorem~\ref{thm:main}.

\begin{proof}[Proof of \Cref{thm:main}]
Let $\ell=\frac{1}{5} Y^2 L$ in Proposition~\ref{prop:Localization}. Then, for $Y$ sufficiently small, Lemma~\ref{lem:trial_hamil} ensures that one can choose $\eta$ such that \cref{eq:enough_part} holds for $\Gamma_0$.
Thus \Cref{prop:Localization} implies
\begin{align} \label{eq:main_f_apriori}
f(\rho,T) &\leq \frac{\tr(\cH\Gamma_0) - TS(\Gamma_0)}{(L+2\ell+R)^2} +C\frac{{\rho}}{L\ell}.
\end{align}
We combine the thermal contributions of \Cref{lem:trial_hamil} and \Cref{lem:trial_entropy_easy} and obtain
\begin{align}\label{eq:thermal}
\begin{split}
    &L^{-2} \Tr(H_D\Gamma_D) -L^{-2}TS(\Gamma_D) =TL^{-2} \sum_{p\in\Lambda_+^*} \log\left(1-e^{- \tfrac{1}{T}\sqrt{p^4+8\pi\rho \delta p^2
    }}\right)  
    \\
    &\leq \frac{T}{(2\pi)^2} \int_{\R^2} \log\left(1-e^{-\tfrac{1}{T}\sqrt{p^4+8 \pi p^2\rho\delta}}\right) \dd p + CL^{-1} T^{3/2}.
    \end{split}
\end{align}
Thus using \Cref{lem:trial_hamil} and \Cref{lem:trial_entropy_easy}  in \cref{eq:main_f_apriori} and applying \eqref{eq:thermal} yields
\begin{align*}
    f(\rho,T) 
    &\leq 2\pi \rho^2 \delta\left(1+\delta\Big(\gamma + \frac{1}{4}+\frac{\log(\pi)}{2}\Big) \right) + \frac{T}{(2\pi)^2} \int_{\R^2} \log\left(1-e^{-\tfrac{1}{T}\sqrt{p^4+8 \pi p^2\rho\delta}}\right) \dd p 
    \\& + C \rho^2 \frac{1}{\log\left(\frac{b}{\fa}\right)} \left(1-\delta \log\Big(\frac{b}{\fa}\Big)+Y+\frac{T}{T_c}\right)^2 + C b^2 \rho^4 L^2  + C \rho^3 bL |\log(\rho bL)| 
    \\&+ C (\rho^2 Y + T^2) \frac{\ell}{L} + C\frac{{\rho}}{L\ell}+C L^{-1} T^{3/2}.
\end{align*}
Recalling $\ell=\frac{1}{5}Y^2L$, we choose $L=\rho^{-\frac{1}{2}}Y^{-\frac{5}{2}}$ and $b=\rho^{-\frac{1}{2}}Y^{\frac{11}{2}}$. This yields the result by noting $0 \leq 1- \delta \log(b/\fa) \leq C Y |\log (Y)|$ and $\log(b/\fa)^{-1}\leq C Y$.
\end{proof}

\section{Preliminaries}
\subsection{Scattering Length}\label{sect:Scattering}
\paragraph{Definition}
	Let $v:\R^2\to\R$ be positive, radial and compactly supported. We define its scattering length $\fa(v)$ by the variational problem
	\begin{equation} \label{def:scattering length}
		\frac{2\pi}{\log(\frac{R}{\fa(v)})}=\inf\left\{\int_{B(0,R)}\vert \nabla\phi\vert^2+\frac{1}{2}v|\phi|^2\dd x \;\mid\; \phi\in H^1(B(0,R)),  \phi_{\vert_{\partial B(0,R)}}=1\right\}
	\end{equation}
for $R$ such that $\rm{supp}(v)\subset B(0,R)$. 

It is well known that the variational problem \eqref{def:scattering length} has a unique minimizer $\phi_R$, see \cite[Appendix C]{GreenBook}. Moreover, $0\leq \phi_R\leq 1$, $\phi_R$ is radial and satisfies the scattering equation in $B(0,R)$
\begin{equation}\label{eq:scattering_equation}
	(-\Delta +\frac{1}{2}v)\phi=0.
\end{equation}
Let ${\rm supp}(v) \subset B(0,R_0)\subset B(0,R)$. Then from the fundamental solution of the Laplace equation in two dimensions we obtain
\begin{equation}\label{explicit formula for scattering solution}
	\phi_R(x) = \frac{\log(|x|/\fa)}{\log(R/\fa)}, \; R_0 < |x| < R.
\end{equation}
By comparing with the solution in $B(0,R_0)$, it is clear that $\fa$ is independent of $R$.

\subsection{Jastrow Factor}
In our trial state we will use a Jastrow factor $F:\cF\to\cF$. It was introduced in this form in \cite{Jastrow1955} and for example used by Dyson \cite{Dyson_Hard_1957} in his proof of an upper bound for the ground state energy of a Dilute Bose gas in 3D. It was also used with great success in the study of the 3D ground state \cite{Basti_2023_Hard_GP} and \cite{BCGOPS_Hard_2024} as well as in 2D \cite{FGJMO-22}.
The idea is to implement correlations on short distances, which results in a softening of the potential.
The action of $F$ on the $n$-particle sector $L_s^2(\Lambda^n)$ is given by
\begin{equation}\label{Jastrow factor}
	F_n(x)=\prod_{i<j}^n f(x_i-x_j),\qquad x\in \mathbb{R}^{2 n},
\end{equation}
where $f$ is a truncated scattering solution
\begin{equation}\label{Def: of fb}
f(x) = 
\begin{cases}
    \phi_b(x), & |x|<b \\
    1, &|x|\geq b.
\end{cases}
\end{equation}
Here $\phi_b$ is the scattering solution \eqref{explicit formula for scattering solution} that is normalized to $1$ at $|x| = b$.
In the end we choose $b$ to be much bigger than the scattering length $\fa$ however much smaller than the average particle distance $\rho^{-1/2}$. When we conjugate the Hamiltonian with $F$ we obtain an effective potential $\tilde{v}$ whose properties we summarize in the following lemma.
\begin{lemma}\label{properties of sof potential}
	Let $v\colon \R^2\to\R$ be positive, radial and compactly supported with scattering length $\fa$, and let $f$ be its associated scattering solution truncated at $b$ such that $\supp(v)\subset B(0,b)$. Then
	\begin{equation} \label{eq:def_v_tilde}
	    (-2\Delta+v)f=\delta_{\vert x\vert =b}\frac{2}{b\log(\frac{b}{\fa})}=:\tilde v,
	\end{equation}
    in a distributional sense.
	Furthermore, the scattering lengths of $v$ and $\tilde{v}$ agree. Lastly, we have the inequality 
\begin{equation}\label{eq: difference between v and g}
    \hat{\tilde{v}}(0) = \frac{4\pi }{\log(\frac{b}{\fa})} ,\qquad \vert\widehat{\tilde{v}}(p)\vert \leq  C\frac{\widehat{\tilde{v}}(0)}{\sqrt{b\vert p\vert}}.
\end{equation}
\end{lemma}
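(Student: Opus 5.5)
We need to prove three things: the distributional identity defining $\tilde v$, that the scattering lengths of $v$ and $\tilde v$ agree, and the two Fourier bounds.

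\textbf{The identity $(-2\Delta+v)f=\tilde v$.} On $B(0,b)$ the function $f=\phi_b$ satisfies the scattering equation \eqref{eq:scattering_equation}, so $(-2\Delta+v)f=0$ there. On $|x|>b$ we have $f\equiv1$ and $v=0$, so the expression vanishes there too. Hence the distribution is supported on the circle $|x|=b$. To identify it, pair with a test function $\psi$ and integrate by parts separately on $B(0,b)$ and on its complement. Since $f$ is continuous across $|x|=b$, only the jump of the normal derivative survives. By \eqref{explicit formula for scattering solution}, the inner radial derivative at $r=b$ is $\frac{1}{b\log(b/\fa)}$, and the outer derivative is $0$. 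With the factor $2$ from $-2\Delta$, this gives
$$\langle(-2\Delta+v)f,\psi\rangle=\frac{2}{b\log(b/\fa)}\int_{|x|=b}\psi\,\dd\sigma,$$
which is the claimed formula. Integrating against $\psi\equiv1$ gives $\hat{\tilde v}(0)=2\pi b\cdot\frac{2}{b\log(b/\fa)}=\frac{4\pi}{\log(b/\fa)}$.

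\textbf{Equality of scattering lengths.} The potential $\tilde v$ is a measure on a circle, so I would interpret \eqref{def:scattering length} for it through the quadratic form $\int|\nabla\phi|^2+\frac12\int\tilde v|\phi|^2$, working on $B(0,R)$ with $R>b$. The minimizer is radial and harmonic on both annuli $\{r<b\}$ and $\{b<r<R\}$. It is therefore constant, say $c$, on $B(0,b)$, and of the form $A\log r+B$ outside. The jump condition across $r=b$ reads $2A/b=\frac{2}{b\log(b/\fa)}\,c$, which together with continuity $A\log b+B=c$ gives $A=c/\log(b/\fa)$ and hence $\phi(r)=c\log(r/\fa)/\log(b/\fa)$ for $r>b$. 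This has the form \eqref{explicit formula for scattering solution}, so the scattering length of $\tilde v$ is $\fa$. The only point needing care is justifying the variational problem for a measure potential; the form is well defined on $H^1$ by the trace theorem, so there is no real obstacle.

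\textbf{The decay bound on $\hat{\tilde v}$.} This is the only genuinely analytic step. Radial symmetry gives $\hat{\tilde v}(p)=\hat{\tilde v}(0)J_0(b|p|)$, where $J_0$ is the Bessel function. Since $|J_0(s)|\leq1$ and $|J_0(s)|\leq C s^{-1/2}$, we get $|J_0(s)|\leq C s^{-1/2}$ for all $s>0$. This yields $|\hat{\tilde v}(p)|\leq C\hat{\tilde v}(0)/\sqrt{b|p|}$. For small $b|p|$ the bound is trivial because the right-hand side exceeds $\hat{\tilde v}(0)$.
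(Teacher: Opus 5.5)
Your proposal is correct and takes the same approach as the paper. The paper treats the distributional identity and the agreement of scattering lengths as straightforward computations (citing Lemma 3.10 of \cite{FGJMO-22}), and it obtains the decay from $\hat{\tilde v}(p)=\hat{\tilde v}(0)J_0(b|p|)$; you do exactly this, with the jump of the normal derivative at $|x|=b$ and the radial matching for the measure potential written out explicitly.
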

\begin{proof}
Equation \eqref{eq:def_v_tilde} and agreement of the scattering lengths are straightforward computations, see also \cite[Lemma 3.10]{FGJMO-22}. 
Since $\tilde{v}$ is a uniform measure on a sphere, its Fourier transform is given by the Bessel function $J_0$, from which the decay properties follow.
\end{proof}

After applying the Jastrow factor, we obtain a renormalized Hamiltonian that we aim to diagonalize. To achieve this, we take the scattering solution $\tilde{\phi}_{\tilde{R}}$ of $\tilde{v}$ on $B(\tilde{R},0).$ Here
\begin{align*}
	\tilde{R}/\fa &:= \frac{1}{\sqrt{\rho \fa^2 Y}} \gg 1,
\end{align*}
so that $ \delta = \log(\tilde{R}/\fa)^{-1}.$
Let $\tilde{\phi}$ be an extension of $\tilde\phi_{\tilde{R}}$ to the full space, that is
$$
\tilde{\phi}(x) = \begin{cases}
    \phi_{\tilde{R}}(x), &|x|<\tilde{R} \\
    \delta \log(|x|/\fa), & |x|\geq \tilde{R}. \end{cases}
    $$
In particular $\tilde{\phi}$ satisfies the scattering equation \eqref{eq:scattering_equation} on $\R^2$.
We further define $\omega := 1-\tilde\phi$ and 
\begin{equation} g := \tilde\phi \tilde v. \label{def: of g}\end{equation} Note that we do not put tildes over $\omega$ and $g$ due to simplicity, but they always refer to the softened potential $\tilde{v}$.

In the following lemma we collect some facts about $g$, see also \cite[Chapter 3.4]{FGJMO-22}. 
\begin{lemma}
    It holds that
\begin{align}
	\hat{g}(0) &= 4\pi\delta \label{eq:g_hat_0}
\intertext{and} 
\label{eq:gw_hat}
\hat{g\omega}(0) 
&= \int_{\R^2} \frac{\hat{g}(p)^2-\hat{g}(0)^2\1_{\{|p|\leq 2e^{-\gamma-1/\delta}\fa^{-1}\}}}{2p^2} \frac{\dd p}{(2\pi)^2}.
\end{align}
Furthermore,
\begin{equation} \label{eq:g_decay}
\vert \widehat{g}(p)\vert \leq c_1\frac{\widehat{g}(0)}{\sqrt{R\vert p\vert}} \;\text{ for }\; \vert p\vert \geq \fa^{-1}.
\end{equation}
\end{lemma}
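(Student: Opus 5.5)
The three claims rest on three facts: $g=\tilde\phi\tilde v$ is a radial measure supported on the circle $|x|=b$; $\tilde\phi$ solves the scattering equation on all of $\R^2$; and $\tilde\phi(x)=\delta\log(|x|/\fa)$ outside $B(0,\tilde R)$. We treat \eqref{eq:g_hat_0}, \eqref{eq:g_decay} and \eqref{eq:gw_hat} in that order.

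\textbf{The identity \eqref{eq:g_hat_0}.} Since $(-2\Delta+\tilde v)\tilde\phi=0$, we have $g=\tilde v\tilde\phi=2\Delta\tilde\phi$ distributionally. We integrate over $B(0,r)$ with $r>\tilde R$ and apply the divergence theorem:
$$\hat g(0)=\int_{B(0,r)}2\Delta\tilde\phi=2\cdot 2\pi r\,\partial_r\tilde\phi(r)=4\pi r\cdot\frac{\delta}{r}=4\pi\delta.$$

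\textbf{The decay bound \eqref{eq:g_decay}.} Since $g=\tilde\phi(b)\,\tilde v$ and $\tilde v$ is a constant multiple of the uniform measure on $\{|x|=b\}$, we get $\hat g(p)=\hat g(0)J_0(b|p|)$. The standard estimate $|J_0(t)|\leq C t^{-1/2}$ then gives $|\hat g(p)|\leq C\hat g(0)/\sqrt{b|p|}$. This yields \eqref{eq:g_decay} because $b\geq R$ by \eqref{equation smallness of b}, at least when $R$ denotes the support radius. I would state explicitly which $R$ is meant there.

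\textbf{The identity \eqref{eq:gw_hat}.} This is the main obstacle, since $\omega=1-\tilde\phi$ grows logarithmically and is not integrable. The product $g\omega$ is still a finite measure, because $g$ is compactly supported, so $\widehat{g\omega}(0)=\int g\omega$ is well defined.

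\emph{Fourier representation of $\omega$.} From $2\Delta\omega=-g$ we get, as tempered distributions,
$$\hat\omega(p)=\frac{\hat g(p)}{2p^2}+c\,\delta_0 \quad\text{(up to a renormalization at } p=0\text{)}.$$
To make this precise, fix the constant $\kappa=2e^{-\gamma-1/\delta}\fa^{-1}$ and use the two-dimensional identity
$$\int_{\R^2}\frac{e^{ipx}-\1_{\{|p|\leq\kappa\}}}{p^2}\frac{\dd p}{(2\pi)^2}=-\frac{1}{2\pi}\big(\log(|x|\kappa/2)+\gamma\big).$$
It is derived from $\int_0^\infty (J_0(t|x|)-\1_{t\leq\kappa})\,t^{-1}\,\dd t=-\log(\kappa|x|/2)-\gamma$.

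\emph{Matching the constant.} Write $\omega=-\frac12 G*g$, where $G$ is the logarithmic fundamental solution of the Laplacian (normalized so that $\Delta G=\delta_0$), and add a constant. Outside $B(0,\tilde R)$ we have $\omega(x)=1-\delta\log(|x|/\fa)$, which fixes the constant. The choice of $\kappa$ is made precisely so that
$$\omega(x)=\int\frac{\hat g(p)e^{ipx}-\hat g(0)\1_{\{|p|\leq\kappa\}}}{2p^2}\frac{\dd p}{(2\pi)^2}.$$
One checks this by comparing at large $|x|$, using $\hat g(0)/(4\pi)=\delta$: the condition $1+\delta\log\fa=-\delta(\log(\kappa/2)+\gamma)$ gives $\log\kappa=\log 2-\gamma-1/\delta-\log\fa$, as claimed.

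\emph{Conclusion.} Integrate against $g$, i.e.\ pair with the real function $\hat g$, which is justified by Fubini since $\hat g$ is bounded and decays. This gives $\int g\omega=\int(\hat g(p)^2-\hat g(0)^2\1_{\{|p|\leq\kappa\}})/(2p^2)\,\frac{\dd p}{(2\pi)^2}$, which is \eqref{eq:gw_hat}. Convergence at infinity follows from the decay $|\hat g|^2\lesssim |p|^{-1}$, giving an integrand of order $|p|^{-3}$. Convergence near $0$ follows from the smoothness of $\hat g$: $\hat g(p)^2-\hat g(0)^2=O(p^2)$.
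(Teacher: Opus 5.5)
Your proposal is correct and follows essentially the same route as the paper. It uses the divergence theorem for $\hat g(0)=4\pi\delta$ and the $J_0$ decay of a uniform circle measure, with $R\leq b$ read as the support radius, as you rightly point out. For the formula for $\widehat{g\omega}(0)$ it uses the Fourier transform of $\log|x|$ with the constant $2e^{-\gamma}$, fixes the renormalization by the large-$|x|$ behaviour $\omega=1-\delta\log(|x|/\fa)$, and then pairs with $g$.

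The only difference is presentational. The paper works distributionally: it splits $\omega=-\delta\log(e^{-1/\delta}|x|/\fa)+\tilde\omega$ with $\tilde\omega$ integrable and tests against $u$. You instead derive a pointwise Green's-function formula for $\omega(x)$ and use Fubini.

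One step in your route needs care. Exchanging $\int g(y)\,\dd y$ with $\int (e^{ip(x-y)}-\1_{\{|p|\leq\kappa\}})p^{-2}\,\dd p$ is not covered by Fubini, because that $p$-integral is only conditionally convergent. Insert a cutoff $|p|\leq\Lambda$ and let $\Lambda\to\infty$ (for $|x|\neq b$) to make it rigorous. The final pairing with $g$ is absolutely convergent, as you say.
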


\begin{proof}
We compute with \eqref{eq:scattering_equation} and partial integration 
$$
\hat{g}(0) = 2\int_{B(0,\tilde{R})} \Delta \phi_{\tilde{R}} = 2\int_{\partial B(0,\tilde{R})} \nabla\phi_{\tilde{R}}\cdot \vec{n}\dd S = 4\pi \delta.
$$
From \eqref{eq:scattering_equation} we find $-\Delta \omega = \frac{g}{2}$. Thus $\hat{\omega}(u)=\int_{\R^2}\frac{\hat{g}(p) }{2p^2}u(p)\dd p$ for all test functions $u\in C_c^\infty(\R^2\setminus\{0\}).$ We write
$\omega = -\delta\log(e^{-1/\delta}|x|/\fa)+\tilde{\omega}$ so that $\tilde{\omega}\in L^1(\R^2)\cap L^\infty(\R^2).$ Let $u\in C_c^\infty(\R^2)$ be a test function. Then for all $\vep>0$ we may write $u = u_\vep + (u-u_{\vep})$ for some $u_\vep \in C_c^\infty(B(0,\vep))$ such that $u-u_{\vep}\in C_c^\infty(\R^2\setminus\{0\}).$ We obtain
\begin{align*}
\braket{\hat{\omega},u} = \hat{\tilde{\omega}}(u_\vep) - \delta \braket{\reallywidehat{\log(e^{-1/\delta}|x|/\fa)}, u_\vep} + \int_{\R^2} \frac{\hat{g}(p)}{2p^2}(u(p)-u_{\vep}(p)) \dd p.
\end{align*}
We immediately get $\hat{\tilde{\omega}}(u_\vep) \xrightarrow{\vep \to 0} 0$. Moreover, one can calculate the distributional Fourier transformation of the logarithm, see e.g.\ \cite[9.8d)]{vladimirov1971equations}, and for $u\in C_c^\infty(\R^2)$ one obtains
$$
-\delta \braket{\reallywidehat{\log(e^{-1/\delta}|x|/\fa)},u} = 4\pi \delta \int_{\R^2}\frac{u(p)-u(0)\1_{\{|p|\leq2e^{-\gamma}e^{-1/\delta}\fa^{-1}\}}}{2p^2}\dd p.
$$
Since $\hat{g}(0) = 4\pi\delta$ this implies
\begin{align*}
    \braket{\hat{\omega},u} &= \int_{\R^2} \frac{\hat{g}(p)u(p)-\hat{g}(0)u(0)\1_{\{|p|\leq2e^{-\gamma-1/\delta}\fa^{-1}\}}}{2p^2}\dd p + \lim_{\vep \to 0} \int_{\R^2}\frac{u_\vep(p)}{2p^2}(\hat{g}(0)-\hat{g}(p))\dd p
    \\
    &= \int_{\R^2} \frac{\hat{g}(p)u(p)-\hat{g}(0)u(0)\1_{\{|p|\leq2e^{-\gamma-1/\delta}\fa^{-1}\}}}{2p^2}\dd p,
\end{align*}
where we used that $u_\vep \to 0$ almost everyhwere and that $\frac{\hat{g}(0)-\hat{g}(p)}{p^2}$ is locally integrable. This implies \eqref{eq:gw_hat} by noting that $\hat{g\omega}(0) = \braket{\omega,g}=\frac{1}{(2\pi)^2}\braket{\hat{\omega},\hat{g}}$. The decay property \eqref{eq:g_decay} follows as $g$, like $\tilde{v}$, is a uniform surface measure on a sphere.
\end{proof}

\section{Proof of Lemma~\ref{lem:trial_hamil}}

\subsection{Trial state}\label{sect:trial_state}
We start by providing the exact definition of the unitaries in the trial state \cref{eq:def_Gamma_0}. The Weyl transformation $W_{\vep}$ generates a condensate with on average $\vep>0$ particles
\begin{equation}\label{eq:weyl}
W_\vep := {\rm exp}[\sqrt{N_0} (a_0^*-a_0)]
\quad \text{ so that } \quad
	W_\vep^* a_p W_\vep = \delta_{p,0}\sqrt{\vep}+a_p.
\end{equation}
It is straightforward to show that this is a strongly continuous unitary group, in particular $W_{N_0(1+\eta)} = W_{N_0\eta}W_{N_0}$. Therefore, we will focus our analysis first on the state
$\weyl e^{\cB} \Gamma_D e^{-\cB} \weyl^*$ and consider the effect of the additional Weyl transformation in the end of the section.
Here $N_0$ is the average the number of particles in the condensate and will be defined in \eqref{Choice of N_0}.
Next, $e^\mathcal{B}$ is a quadratic Bogoliubov transformation. It is given by
$$
\cB = \frac{1}{2} \sum_{p\in \Lambda_+^*} \vphi_p a_p^*a_{-p}^* - \hc
$$ 
We adopt the kernel $\vphi$ as in \cite{FGJMO-22}, originally introduced in three dimensions in \cite{Schlein_2008_UpperSoft}. This choice is motivated by minimizing the energy of the Bogoliubov-rotated vacuum.
$$
\tanh(2\vphi_p) = -\frac{\rho_0 \hat{g}(p)}{p^2+\rho_0\hat{g}(p)}.
$$
For $p\neq 0$ we find
\begin{equation}\label{eq. bogolibov application}
\begin{aligned}
    e^{-\cB} a_p e^{\cB} &= a_p\sqrt{\frac{1}{2}\left(\frac{p^2+\rho_0\hat{g}(p)}{D_p}+1\right)} - a_{-p}^*\sqrt{\frac{1}{2}\left(\frac{p^2+\rho_0\hat{g}(p)}{D_p}-1\right)} \\ &=: c_{p} a_{p} + s_{p} a_{-p}^*,
\end{aligned}
\end{equation}
with $\rho_0 := \frac{N_0}{\vert \Lambda\vert}$. 
Here $D_p := \sqrt{p^4+2p^2\rho_0\hat{g}(p)}$ is well defined due to $\hat{g}(0) > 0$ and the continuity of $\hat{g}$.

Moreover, this choice will turn out to naturally satisfy $c_ps_p \approx -\rho_0 \hat\omega(p)$, see \cite[Section 4]{solovej2025mathematicalphysicsdilutebose} for a nice heuristic argument why this is desired.

Lastly we recall $H_D = \sum_{p\in\Lambda_+^*}\sqrt{p^4+8\pi \rho \delta p^2}a_p^*a_p$ and its Gibbs state
\begin{equation*}
    \Gamma_D=\frac{e^{-\beta H_D}\1_{\{\mathcal{N}_0=0\}}}{\Tr_{\mathcal{F}_+}(e^{-\beta H_D})}.
\end{equation*}

From \eqref{eq:weyl}, the fact that $\Gamma_D$ satisfies Wicks theorem and does not contain any particles in the zero mode we find 
\begin{align} \label{eq. this makes n0 possible} \begin{split}
    \tr(\cN \weyl e^{\mathcal{B}}\Gamma_D e^{-\mathcal{B}}\weyl^*) &= N_0 + \tr(e^{-\cB} \cN_+ e^\cB \Gamma_D)
\\
&= N_0+\sum_{p\in \Lambda^*_+}s_p^2 + \sum_{p\in \Lambda_+^*}(s_p^2+c_p^2) \Tr(a_p^*a_p \Gamma_D).
    \end{split}
\end{align}
We recall that $\cN_+ = \sum_{p\in\Lambda_+^*}a_p^*a_p$ is the number operator on the excitation Fock space. 
We choose $N_0$ such that 
\begin{equation}\label{Choice of N_0}
\tr(\cN\weyl e^{\mathcal{B}}\Gamma_D e^{-\mathcal{B}}\weyl^*) =\rho L^2=: N,
\end{equation}
where the existence of a solution to the above fixed point equation, bear in mind that $\cB$ depends on $N_0$, is due to the continuity in $N_0$ and the following proposition.
\begin{proposition}\label{Lem. Usefull bounds}
    Let $c_p$ and $s_p$ be given as in \cref{eq. bogolibov application}. Then for $\rho_0\leq \rho$ there exists a $C>0$ such that
    \begin{equation}\label{bound on s^2}
\sum_{p\in \Lambda^*_+}s_p^2 \leq CNY
    \end{equation}
    and 
    \begin{equation}\label{eq:sc_bound}
 \sum_{p\in \Lambda^*_+}\vert s_pc_p\vert \leq CN.
\end{equation}
For $T\leq \rho$ we have
    \begin{equation}\label{bound on Q2}
 \sum_{p\in \Lambda_+^*}(s_p^2+c_p^2) \Tr (a_p^*a_p \Gamma_D)\leq CN \frac{T}{T_c}.
    \end{equation}
    In particular, if $ C(\frac{T}{T_c} + Y) \leq \frac{1}{2}$, then \cref{eq. this makes n0 possible} admits a solution with \(\rho_0 \geq \frac{\rho}{2}\).
\end{proposition}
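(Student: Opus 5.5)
We bound each sum by comparing the lattice sum with an integral. The key tool is the explicit form of the coefficients. Write $A_p := p^2+\rho_0\hat g(p)$, so that $D_p=\sqrt{A_p^2-\rho_0^2\hat g(p)^2}$. Then
\begin{equation*}
s_p^2=\tfrac12\Big(\tfrac{A_p}{D_p}-1\Big),\qquad c_p^2+s_p^2=\tfrac{A_p}{D_p},\qquad |s_pc_p|=\tfrac{\rho_0|\hat g(p)|}{2D_p}.
\end{equation*}
We use $|\hat g(p)|\leq \hat g(0)=4\pi\delta$, which holds since $g\geq 0$ is a measure. We also use the decay \eqref{eq:g_decay}, and the fact that $\hat g(p)\geq \tfrac12\hat g(0)$ for $|p|\leq c\tilde R^{-1}$; the latter follows since $g$ is supported in $B(0,b)$ with $b\ll \tilde R$, so in fact for $|p|\lesssim b^{-1}$. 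Set $\rho_0\hat g(0)\sim \rho\delta$ and let $p_c:=\sqrt{\rho_0\hat g(0)}\sim \sqrt{\rho Y}$ be the healing momentum. The lattice spacing is $2\pi/L=2\pi\rho^{1/2}Y^{\alpha}\ll p_c$. Sums of monotone radial functions over $\Lambda_+^*$ are then bounded by $C L^2\int \dd p$ plus a contribution from the few smallest momenta $|p|\sim 1/L$, and this contribution must be handled separately because of the singularities at $p=0$.

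For \eqref{bound on s^2}, consider first $|p|\geq p_c$. There $A_p/D_p-1\leq C(\rho_0\hat g(p))^2/p^4$, and the integral $L^2\int_{|p|\geq p_c}\rho^2\delta^2 p^{-4}\dd p$ is of order $L^2\rho^2\delta^2/p_c^2\sim N\delta$. Now consider $|p|\leq p_c$. There $s_p^2\leq C\sqrt{\rho\delta}/|p|$, and $L^2\int_{|p|\leq p_c}\sqrt{\rho\delta}\,|p|^{-1}\dd p\sim L^2\rho\delta=N\delta$. The lattice point $|p|=2\pi/L$ contributes only $\sqrt{\rho\delta}L\ll N\delta$. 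This gives $CNY$, since $\delta\leq 2Y$.

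For \eqref{eq:sc_bound}, the sum is at most $L^2\int \rho\delta\min(|p|^{-1}p_c^{-1},|p|^{-2}, \ldots)\dd p$. Here the $|p|^{-2}$ tail needs the decay of $\hat g$: we split at $|p|=b^{-1}$ and use $|\hat g(p)|\leq C\delta/\sqrt{b|p|}$ beyond it, which makes the tail integrable. This yields $L^2\rho\delta(1+\log(1/(b p_c)))$. Since $\delta\log(\tilde R/b)\lesssim 1$ and $b\geq R$, this is $\leq CN$. We expect this logarithmic balance to be the main delicate point: the bound only closes because $\delta\cdot\log(\text{ratio of scales})$ is $O(1)$.

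For \eqref{bound on Q2}, write $\tr(a_p^*a_p\Gamma_D)=(e^{\beta E_p}-1)^{-1}$ with $E_p=\sqrt{p^4+8\pi\rho\delta p^2}$. For $|p|\geq p_c$ we have $A_p/D_p\leq 2$, and the sum is at most $CL^2\int (e^{p^2/T}-1)^{-1}\dd p$ restricted to $|p|\geq p_c$. This gives $CL^2T\log(1+T/p_c^2)$, which is at most $CNT\log(1/\delta)/\rho\sim N T/T_c$ when $T\leq\rho$; this is exactly where $T_c=4\pi\rho/|\log\delta|$ appears. For $|p|\leq p_c$ we have $A_p/D_p\leq C p_c/|p|$ and $(e^{\beta E_p}-1)^{-1}\leq T/E_p\leq CT/(|p|p_c)$, so the sum is bounded by $CT\sum_{0<|p|\leq p_c}|p|^{-2}\leq CTL^2\log(p_cL)$. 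This is $CNT\,\alpha|\log Y|/\rho\leq CNT/T_c$.

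Finally, for the existence claim, define $\Phi(N_0)$ as the right side of \eqref{eq. this makes n0 possible}, which is continuous in $N_0$. At $N_0=N/2$ we have $\Phi\leq N/2+CN(Y+T/T_c)\leq N$, while at $N_0=N$ we have $\Phi\geq N$. The intermediate value theorem then gives a solution with $\rho_0\geq\rho/2$.
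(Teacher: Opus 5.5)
Your proposal is correct and takes essentially the same route as the paper. You split each sum at the healing momentum $p^2\sim\rho_0\hat g(0)$, using $s_p^2\lesssim\sqrt{\rho\delta}/|p|$ below the split and $1+x-\sqrt{1+2x}\lesssim x^2$ above it. You use the decay of $\hat g$ for $\sum|s_pc_p|$, and for the thermal sum you combine $T\sum_{0<|p|\le p_c}p^{-2}\lesssim TL^2\log(L^2\rho\delta)$ with the ideal-gas tail $T\,|\log(1-e^{-\beta p_c^2})|$. The fixed point then follows from continuity in $N_0$ and the intermediate value theorem on $[N/2,N]$, which the paper leaves implicit; your tacit assumption $\rho_0\sim\rho$ is harmless there because only $\rho_0\in[\rho/2,\rho]$ is used.
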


\begin{remark}
    The above proposition is in accordance with \cref{BEC inequality}, in the sense that we do not expect BEC at our length scale for $T\geq T_c$.
\end{remark}

\begin{proof}
To show \eqref{bound on s^2} we split the sum as follows,
\begin{align*}
    \sum_{p\in\Lambda_+^*}s_p^2 &\leq \left( \sum_{p^2\leq\rho_0 \hat{g}(0)} + \sum_{p^2>\rho_0 \hat{g}(0)} \right) \frac{p^2+\rho_0\hat{g}(p)-D_p}{2D_p} \\
    &\leq \sum_{p^2\leq\rho_0 \hat{g}(0)} \left[\frac{p^2+\rho_0\hat{g}(p)}{2|p|\sqrt{2\rho_0\hat{g}(p)}}\right] + C\sum_{p^2>\rho_0 \hat{g}(0)}\left[1+\rho_0\hat{g}(p)p^{-2}-\sqrt{1+2\rho_0\hat{g}(p)p^{-2}}\right]
    \\
    &\leq C L^2\rho\hat{g}(0),
\end{align*}
where we used the elementary inequality $1+x-\sqrt{1+2x}\leq x^2$ for $x \geq -(\sqrt{2}-1)$ and $\rho_0\leq \rho$. \Cref{eq:sc_bound} can be shown similarly, using the decay of $\hat{g}$ \eqref{eq:g_decay}.

Next, we show the bound \eqref{bound on Q2}. From the identities 
$$\tr (a_p^*a_p \Gamma_D) = \frac{1}{e^{\beta \sqrt{p^4+2\rho p^2\widehat{g}(0)}}-1} \quad \text{and} \quad 
s_p^2+c_p^2= (p^2+\rho_0\widehat{g}(p))/D_p
$$ 
we find
\begin{equation*}
  \sum_{p\in \Lambda_+^*}(s_p^2+c_p^2) \Tr(a_p^*a_p \Gamma_D) = \sum_{p\in \Lambda_+^*}\frac{p^2+\rho_0\widehat{g}(p)}{D_p}\frac{1}{e^{\beta \sqrt{p^4+2\rho p^2\widehat{g}(0)}}-1}.
\end{equation*}
We split the sum similar as before.
For $p^2 \leq \rho \hat{g}(0)$ we find 
\begin{align*}
& \sum_{p^2 \leq \rho \hat{g}(0)}  \frac{p^2+\rho_0\hat{g}(p)}{D_p} \frac{1}{e^{\beta \sqrt{p^4+2\rho p^2\widehat{g}(0)}}-1} 
\leq C \sum_{p^2 \leq \rho \hat{g}(0)} \left[\frac{\sqrt{\rho\hat{g}(0)}}{|p|}\frac{1}{e^{\beta |p|\sqrt{2\rho\hat{g}(0)}}-1}\right]
\\
& \leq 
C T \sum_{p^2 \leq \rho \hat{g}(0)} p^{-2} \leq CL^2T\int_{2\pi L^{-1}}^{\sqrt{\rho\hat{g}(0)}} |p|^{-1} \dd p
\leq C L^2 T \log(L^2 \rho\hat{g}(0)).
\end{align*}
For $p^2> \rho \hat{g}(0)$ we use $p^2/2\leq D_p$ and obtain
\begin{align*}
&\sum_{p^2 > \rho \hat{g}(0)}  \frac{p^2+\rho_0\hat{g}(p)}{D_p} \frac{1}{e^{\beta \sqrt{p^4+2\rho p^2\widehat{g}(0)}}-1} 
\leq 4\sum_{p^2 > \rho \hat{g}(0)} \frac{1}{e^{\beta p^2/2}-1} 
\leq CL^2 T \int_{\sqrt{\beta \rho \hat{g}(0)}}^\infty \frac{|p|}{e^{p^2/2}-1}\dd p 
\\
& \qquad = CL^2 T |\log(1-e^{-\beta \rho\hat{g}(0)/2})|
\leq CL^2 T \max\{|\log(\beta \rho \hat{g}(0)/2)|, 1\}.
\end{align*}
Combining these two inequalities and using our choice of $L$, \cref{eq. size of L} and the condition $T\leq \rho$ concludes the proof.
\end{proof}
To simplify notation we introduce
\begin{equation}\label{eq:def_Gamma_B}
    \Gamma_B=\weyl e^{\mathcal{B}}\Gamma_D e^{-\mathcal{B}}\weyl^*.
\end{equation}
Before we prove that $F$ softens the potential, we establish some a priori estimates for the expectation of the interacting part of the Hamiltonian in the state $\Gamma_B$. Since these estimates will be used repeatedly, we present them in generality.
\begin{lemma} \label{Lem:Hamil_energy_bound_general}
Let $v \geq 0$ be a compactly supported, radial potential and
\begin{equation} \label{eq:def_Hint}
\cH_v^{\rm int} = \frac{1}{2|\Lambda|}\sum_{p,q,k \in \Lambda^*} \hat{v}(k)a_p^*a_{q+k}^*a_qa_{p+k}.
\end{equation}
Then for $T$ as in \Cref{Lem. Usefull bounds} we have 
\begin{align}
  \Tr(\cH_v^{\rm int} \Gamma_B) \leq C \frac{\hat{v}(0)}{|\Lambda|}\left(N + \sum_{p\in\Lambda_+^*} (s_p^2+c_p^2)\tr(a_p^*a_p \Gamma_D) \right)^2 \leq C \hat{v}(0)  \rho N.
\end{align}
\end{lemma}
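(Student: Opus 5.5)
Since $v\geq 0$ we can work in position space. For any state $\Gamma$ we have
$$\Tr(\cH_v^{\rm int}\Gamma) = \frac12\int_{\Lambda^2} v(x-y)\,\rho^{(2)}_\Gamma(x,y)\,\dd x\,\dd y \leq \frac{\hat v(0)}{2}\,\sup_{x}\int_\Lambda \rho^{(2)}_\Gamma(x,y)\,\dd y ,$$
where $\rho^{(2)}_\Gamma(x,y)=\Tr(a_x^*a_y^*a_ya_x\Gamma)\geq 0$ is the two-body density. So it is enough to bound $\rho^{(2)}_{\Gamma_B}$ pointwise by $C|\Lambda|^{-2}\bigl(N+\sum_p(s_p^2+c_p^2)\Tr(a_p^*a_p\Gamma_D)\bigr)^2$.

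To get this bound, conjugate with $\weyl$ and $e^{\cB}$. Write $a_x=|\Lambda|^{-1/2}\sum_p e^{ipx}a_p$. The Weyl transformation gives $\weyl^*a_x\weyl = \sqrt{\rho_0}+a_x$, where the shift acts only on the zero mode. The Bogoliubov transformation \eqref{eq. bogolibov application} then gives $e^{-\cB}a_xe^{\cB}=b_x$, a linear combination of $a_p$ and $a_{-p}^*$ with coefficients $c_p,s_p$. The state $\Gamma_D$ is quasi-free, gauge invariant and translation invariant, and has no zero mode. Therefore the conjugated state is quasi-free with one-body density $\gamma(x,y)$ and pairing $\alpha(x,y)$, shifted by the constant field $\sqrt{\rho_0}$. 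Wick's theorem then expresses $\rho^{(2)}$ as a sum of terms:
$$\rho^{(2)} = \rho_0^2 + 4\rho_0\gamma(0) + 2\rho_0\,\mathrm{Re}\,\gamma(x-y) + 2\rho_0\,\mathrm{Re}\,\alpha(x-y) + \gamma(0)^2+|\gamma(x-y)|^2+|\alpha(x-y)|^2 .$$
Here the one-point functions vanish, and the mixed terms are linear in $\gamma,\alpha$. By translation invariance,
$$\gamma(0)=|\Lambda|^{-1}\Bigl(\sum_p s_p^2+\sum_p(s_p^2+c_p^2)\Tr(a_p^*a_p\Gamma_D)\Bigr).$$

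Next, bound each term. We have $|\gamma(z)|\leq\gamma(0)$. For $\alpha$, the quasi-free structure gives $|\alpha(z)|\leq|\Lambda|^{-1}\sum_p|s_pc_p|(1+2\Tr(a_p^*a_p\Gamma_D))$. By \eqref{eq:sc_bound} this is at most $C|\Lambda|^{-1}N$ plus a thermal term. Using $|s_pc_p|\leq s_p^2+c_p^2$, that thermal term is bounded by the thermal sum in the statement. Together with \eqref{bound on s^2}, which gives $\sum s_p^2\leq CN$, and $N_0\leq N$, every term is bounded by $C|\Lambda|^{-2}(N+\sum_p(s_p^2+c_p^2)\Tr(a_p^*a_p\Gamma_D))^2$. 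Integrating against $v$ gives the first inequality. The second follows from \eqref{bound on Q2}: for $T\leq\rho$ the thermal sum is at most $CN$, so the right-hand side is at most $C\hat v(0)N^2/|\Lambda|=C\hat v(0)\rho N$.

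The main obstacle is the pairing term $|\alpha(x-y)|^2$ together with the mixed term $\rho_0\alpha$. Unlike $\gamma$, the function $\alpha$ is not controlled by $\sum s_p^2$; it needs $\sum|s_pc_p|$. That sum is only summable because $\hat g$ decays, via \eqref{eq:sc_bound}, and it gives order $N$ rather than $NY$. This is still sufficient, because we only need order $N$ here. I would also take care that the sums over $p$ defining $\alpha(z)$ converge absolutely, so that the pointwise bound is uniform in $z$.
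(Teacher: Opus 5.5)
Your argument is correct and is essentially the paper's proof (Weyl shift, then Wick's theorem for the quasi-free state $e^{\cB}\Gamma_D e^{-\cB}$, then \eqref{bound on s^2}, \eqref{eq:sc_bound}, \eqref{bound on Q2}). The only difference is that you work in position space with a pointwise bound on $\rho^{(2)}$, where the paper works in momentum space with $|\hat v(k)|\le\hat v(0)$; also, the coefficient of $\rho_0\gamma(0)$ in your expansion is $2$, not $4$, which is harmless.

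Two small repairs are needed. First, your initial display has $\sup_x$ and $\int \dd y$ in the wrong order and is false for general $\Gamma$; use $\Tr(\cH_v^{\rm int}\Gamma)\le\frac{\hat v(0)}{2}|\Lambda|\sup_{x,y}\rho^{(2)}_\Gamma(x,y)$ instead, which is exactly what your pointwise bound feeds into. Second, \eqref{bound on Q2} only yields $CN\,T/T_c$, which is $O(N)$ because $T\lesssim T_c$ (needed anyway for $N_0$ to exist), not merely because $T\le\rho$; more simply, $N_0\ge 0$ in \eqref{Choice of N_0} already gives $\sum_{p}(s_p^2+c_p^2)\Tr(a_p^*a_p\Gamma_D)\le N$.
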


\begin{proof} 
From \cref{eq:def_cH} and \cref{eq:weyl} we obtain
    \begin{align} \label{eq:H_weyl}
\weyl^* \cH_v^{\rm int} \weyl&= \frac{N_0^2}{2|\Lambda|}\hat{v}(0) + \frac{N_0^{3/2}}{2|\Lambda|}\hat{v}(0)(a_0^*+a_0) + \frac{N_0}{|\Lambda|} \sum_{p \in \Lambda^*} \big(\hat{v}(p)+ \hat{v}(0)\big) a_p^*a_p \nn
\\
&   +\frac{N_0}{2|\Lambda|} \sum_{p \in \Lambda^*} \hat{v}(p) a_p^*a_{-p}^* + \hc  + \frac{N_0^{1/2}}{|\Lambda|}\sum_{p,q \in \Lambda^*} \hat{v}(p) a_{p+q}^*a_pa_q + \hc 
\\
& + \frac{1}{2|\Lambda|}\sum_{p,q,k \in \Lambda^*} \hat{v}(k)a_p^*a_{q+k}^*a_qa_{p+k}. \nn
\end{align}
Since $\Gamma_D$ and the Bogoliubov transformation $e^{\cB}$ are quadratic, the odd terms, i.e., the third term in the first line and the last term in the second line vanish in the trace. Moreover, $\Gamma_D$ does not have any particles in the zero mode so that all sums restrict to $\Lambda_+^*.$

Using $N_0 = N - \tr(e^{-\cB}\cN_+e^\cB\Gamma_D)$ from \cref{Choice of N_0} allows us to combine the first term and the second part of the third term as follows
\begin{equation}
    \Tr\left(\Big(\frac{N_0^2}{2\vert \Lambda\vert}+\frac{N_0}{\vert \Lambda\vert}\sum_{p\in\Lambda_+^*} a_p^*a_p\Big)e^{-\cB}\Gamma_D e^{\cB}\right)=\frac{N^2}{2\vert \Lambda\vert }-\frac{1}{2\vert \Lambda\vert }\Tr( e^{-\cB}\cN_+e^{\cB}\Gamma_D )^2 \leq \frac{N^2}{2|\Lambda|}.
\end{equation}
We now apply the Bogoliubov transformation to \cref{eq:H_weyl} using \cref{eq. bogolibov application} and commuting to normal order. We also use once more that $\Gamma_D$ is the Gibbs state of the second quantization of a one-body operator so that $\tr(a_p^*a_q^*\Gamma_D) =0$ for all $p,q$ and $\tr(a_p^*a_q\Gamma_D) = 0$ for $p\neq q.$ A straightforward calculation then shows the upper bound
\begin{align}\label{eq: first bound in diagonilization}
    &\Tr(\cH_v^{\rm int} \Gamma_B) \leq \hat{v}(0)\frac{ \rho N}{2} + \sum_{p \in \Lambda_+^*} \rho_0\hat{v}(p) \tr(e^{-\cB} a_p^* a_p e^{\cB} \Gamma_D) \nn
\\
&   \quad+\frac{N_0}{2|\Lambda|} \sum_{p \in \Lambda^*} \hat{v}(p) \tr(e^{-\cB}a_p^*a_{-p}^* e^{\cB}\Gamma_D) + \hc + \frac{1}{2|\Lambda|}\sum_{p,q,k \in \Lambda^*} \hat{v}(k)\tr(e^{-\cB}a_p^*a_{q+k}^*a_qa_{p+k}e^{\cB}\Gamma_D) \nn
\\
&\leq \widehat{v}(0)\frac{\rho N}{2}+\frac{1}{2\vert\Lambda\vert}\sum_{p,k\in\Lambda_+^*}\widehat{v}(k)c_ps_pc_{p+k}s_{p+k}+\sum_{p\in\Lambda_+^*} \left[\rho_0\widehat{v}(p) s_p^2+\rho_0\widehat{v}(p)s_pc_p\right] \nn
    \\ &\quad +\sum_{p\in\Lambda_+^*}\left(  \rho_0\widehat{v}(p)(s_p^2+c_p^2)+2\rho_0\widehat{v}(p)c_ps_p+\frac{2}{\vert \Lambda\vert}\sum_{\substack{k\in\Lambda^*: \\ p+k\neq 0}}\widehat{v}(k)c_ps_pc_{p+k}s_{p+k}\right)\Tr(a_p^*a_p\Gamma_D) \nn
    \\
    & \quad+C\frac{\hat{v}(0)}{|\Lambda|}\left(\sum_{p\in\Lambda_+^*} s_p^2 + \sum_{p\in\Lambda_+^*} (s_p^2 + c_p^2)\tr(a_p^*a_p \Gamma_D) \right)^2,
\end{align}
where we also used the simple inequalities $\vert \widehat{v}(k)\vert \leq \widehat{v}(0)$ and $s_q^2\leq c_q^2.$
From the inequalities in \Cref{Lem. Usefull bounds} we find \Cref{Lem:Hamil_energy_bound_general}.
\end{proof}

\subsection{Soften the Hamiltonian}\label{sect:softening}
In this subsection we employ the Jastrow factor \cref{Jastrow factor}, which softens the potential.
Due to the positive temperature we need to keep track on the higher order excitations the Jastrow factor creates.
Throughout this subsection, we explicitly display the potential in the Hamiltonian as $\cH_v$, recall \eqref{eq:def_cH}. 

\begin{lemma}\label{from hard to soft} 
Let $v$ be as in \Cref{thm:main} and $\Gamma$ a periodic density matrix on $\cF$.
Let $F$ be the Jastrow factor from \cref{Jastrow factor}. Then
	\begin{equation*}
	\tr(\cH_v F\Gamma F)\leq \tr(\Gamma\cH_{\tilde v}) - \tr(\Gamma \mathcal{R}),
	\end{equation*}
where
\[ 
\tilde{v}=(-2\Delta+v)f=\frac{2}{b\log(\frac{b}{\fa})}\delta_{\vert x\vert =b} \quad \text{ and }
\quad\mathcal{R} = \bigoplus_{n=0}^\infty R_n, \, R_n = F_n^2\sum_{i \neq j \neq k}^n \frac{(\nabla f)_{ij}}{f_{ij}}\cdot \frac{(\nabla f)_{ik}}{f_{ik}}
\]
with the notation
$ \{i \neq j \neq k\} = \{$pairwise distinct indices  $i, j, k$ running from $1$ to $n\}$.
\end{lemma}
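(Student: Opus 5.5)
Since $F=\bigoplus_n F_n$ commutes with $\cN$ and $\cH_v=\bigoplus_n H_n$, it suffices to prove the inequality sector by sector and then sum. I will show, for each $n$ and each smooth enough periodic $n$-particle $\Psi$ (by linearity and the spectral decomposition of $\Gamma$ this covers density matrices),
\begin{equation*}
\langle F_n\Psi, H_n F_n\Psi\rangle \leq \langle \Psi, \tilde H_n \Psi\rangle - \langle \Psi, R_n\Psi\rangle ,
\end{equation*}
where $\tilde H_n$ denotes $H_n$ with $v$ replaced by $\tilde v$. Since the interaction is multiplicative, the potential part of $H_n$ gives $\sum_{i<j}\int v_{ij}F_n^2|\Psi|^2$. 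For the kinetic part I use the standard identity: for real $F_n$,
\begin{equation*}
\int |\nabla_i(F_n\Psi)|^2 = \int F_n^2|\nabla_i\Psi|^2 - \int F_n(\Delta_iF_n)|\Psi|^2 ,
\end{equation*}
obtained by integrating by parts the cross term $\int F_n\nabla_iF_n\cdot\nabla_i|\Psi|^2$. On the torus there are no boundary terms; the singular part of $\Delta f$ on $\{|x|=b\}$ is handled distributionally, since $f$ is Lipschitz.

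Next I expand $\Delta_i F_n$ using $F_n=\prod_{k<l}f_{kl}$:
\begin{equation*}
\frac{\Delta_i F_n}{F_n}=\sum_{j\neq i}\frac{(\Delta f)_{ij}}{f_{ij}}+\sum_{j\neq k,\,j,k\neq i}\frac{(\nabla f)_{ij}}{f_{ij}}\cdot\frac{(\nabla f)_{ik}}{f_{ik}} .
\end{equation*}
Summing over $i$, each pair $\{i,j\}$ appears twice in the two-body term, which gives $-\sum_{i<j}2(\Delta f)_{ij}/f_{ij}$. Combined with $v_{ij}$, this becomes $\sum_{i<j}\big((-2\Delta+v)f\big)_{ij}/f_{ij}=\sum_{i<j}\tilde v_{ij}/f_{ij}$ by \Cref{properties of sof potential}. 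Because $\tilde v$ is supported on $|x|=b$, where $f=1$, this equals $\sum_{i<j}\tilde v_{ij}$ multiplied by $F_n^2$. The three-body terms give exactly $-F_n^2\sum_{i\neq j\neq k}\frac{(\nabla f)_{ij}}{f_{ij}}\cdot\frac{(\nabla f)_{ik}}{f_{ik}}=-R_n$. Together these yield
\begin{equation*}
\langle F_n\Psi,H_nF_n\Psi\rangle=\sum_i\int F_n^2|\nabla_i\Psi|^2+\sum_{i<j}\int F_n^2\tilde v_{ij}|\Psi|^2-\langle\Psi,R_n\Psi\rangle .
\end{equation*}

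Finally, $0\le f\le1$ (\Cref{sect:Scattering}) gives $F_n^2\le1$. Hence the kinetic term is at most $\sum_i\|\nabla_i\Psi\|^2$, and since $\tilde v\ge0$ the softened potential term is at most $\sum_{i<j}\int\tilde v_{ij}|\Psi|^2$. Summing over sectors gives $\tr(\cH_vF\Gamma F)\le\tr(\Gamma\cH_{\tilde v})-\tr(\Gamma\mathcal R)$.

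The main obstacle is making the formal computation rigorous. The quotients $\nabla f/f$ and $\Delta f/f$ are singular where $f$ vanishes or is nonsmooth: inside $\supp v$ we may have $f=0$ (hard-core-like behaviour), and $\nabla f$ jumps on $|x|=b$. I will handle this by regularization. First replace $f$ by $f_\epsilon=\max(f,\epsilon)$ smoothed near $|x|=b$. Then do the computation with $v_{ij}F^2/f_{ij}\cdot(\ldots)$ written as $F_n^2\cdot$(bounded quantities), using that $f$ satisfies the scattering equation classically off the support sphere and $v f\in L^1$. Finally pass $\epsilon\to0$ by dominated convergence. The delta term arises as the jump of the normal derivative $f'(b^-)=1/(b\log(b/\fa))$, giving $2f'(b^-)\delta_{|x|=b}$, consistent with \cref{eq:def_v_tilde}. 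The three-body term needs $|\nabla f|/f$ times $F_n$ to stay integrable; this holds because each factor $(\nabla f)_{ij}/f_{ij}$ is multiplied by $f_{ij}^2$ inside $F_n^2$, so it contributes $f_{ij}\nabla f_{ij}$, which is bounded.
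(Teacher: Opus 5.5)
Your proof is correct and follows the paper's argument essentially line by line: you reduce to a single $n$-particle sector, integrate the cross term by parts to get $\int F_n^2|\nabla_i\Psi|^2-\int F_n\Delta_iF_n|\Psi|^2$, expand $\Delta_iF_n$ by the product rule into two- and three-body parts, recognize $(-2\Delta+v)f=\tilde v$, and conclude with $F_n\le f_{ij}\le 1$ and $\tilde v\ge 0$. The paper does not attempt the regularization you sketch (its computation is formal), and your $\max(f,\epsilon)$ device would need modification for hard-core potentials, where it makes $vF_\epsilon^2$ infinite; this does not affect the agreement with the paper's proof.
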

\begin{proof}
Let $P_n$ be the projection onto the $n-$particle sector $L_s^2(\Lambda^n)$. Then $P_n$ commutes with $\cH$ and $F$ so that
\begin{equation} \label{eq:soften_fock}
	\tr(\cH_v F\Gamma F)= \sum_{n=0}^\infty \tr(F_n \cH_v F_n P_n\Gamma P_n).
\end{equation}
$P_n\Gamma P_n$ is a density matrix on $L_s^2(\Lambda^n)$ thus we may write $P_n\Gamma P_n=\sum_{m\in \mathbb{N}} \lambda_{n,m}\vert \phi^n_m\rangle \langle \phi_m^n\vert $ for some $\phi_m^n \in L_s^2(\Lambda^n), \lambda_{n,m}\geq 0$.
Therefore, by linearity, it is enough to consider the following expression for $\phi \in L_s^2(\Lambda^n)$
\begin{align} \label{eq:soften_nsector}
&\tr (F_n \cH_v F_n\vert \phi \rangle \langle \phi\vert)=\braket{F_n\phi, \cH_v F_n\phi}
\\
	&=\sum_{i=1}^n\int_{\Lambda^n} F_n^2\vert \nabla_i \phi\vert^2+ \vert {\phi}\vert^2\vert \nabla_i F_n\vert^2+ (\overline{\phi}\nabla_i\phi+\phi\nabla_i\overline{\phi})\cdot F_n\nabla_iF_n 
	+\sum_{i<j}^n\int_{\Lambda^n}v(x_i-x_j)F_n^2\vert \phi\vert^2. \nn
\end{align}
Partial integration yields
\begin{equation} \label{eq:soften_PI}
	\int_{\Lambda^n}\overline{\phi}\nabla_i\phi\cdot F_n\nabla_iF_n=-\int_{\Lambda^n}\phi\nabla_i\overline{\phi}\cdot F_n\nabla_iF_n-\vert {\phi}\vert^2\vert \nabla_i F_n\vert^2+\vert \phi\vert^2F_n(-\Delta_i F_n).
\end{equation}
From the definition $F_n = \prod_{i<j}^n f_{ij}$ we find
\begin{equation} \label{eq:soften_kinetic}
-\sum_{i=1}^n \Delta_i F_n=F_n\sum_{i\neq j}^n \frac{-\Delta_if_{ij}}{f_{ij}} - F_n\sum_{i\neq j\neq k}^n \frac{\nabla_i f_{ij}}{f_{ij}}\cdot \frac{\nabla_if_{ik}}{f_{ik}}.
\end{equation}
We insert  \eqref{eq:soften_PI} and  \eqref{eq:soften_kinetic} into \eqref{eq:soften_nsector}, use $F_n \leq f_{ij} \leq 1$ and obtain
\begin{align*}
\tr(F_n\cH_vF_n |\phi\rangle\langle\phi|) &\leq \sum_{i=1}^n \braket{\phi, -\Delta_i \phi} + \sum_{i<j}^n \int_{\Lambda^n} |\phi|^2 \big((v-2\Delta)f\big)(x_i-x_j) - \braket{\phi, R_n \phi}
\\
&= \tr(\cH_{\tilde v}|\phi\rangle\langle \phi|) - \tr(R_n |\phi\rangle\langle \phi|) .
\end{align*}
Inserting this into \eqref{eq:soften_fock} concludes the proof.
\end{proof}
The next lemma shows that we can bound the error $\mathcal{R}$ and the difference between $\Tr(F\Gamma_BF)$ and $\Tr(\Gamma_B)$.

\begin{lemma}\label{Lem: bound on errors involving F}
Let $v$ be as in \Cref{thm:main}, $\Gamma_B$ the trial state \cref{eq:def_Gamma_B}, $F$ the Jastrow factor from \cref{Jastrow factor}, $\mathcal{R}$ the operator defined in \Cref{from hard to soft} and $T$ as in \Cref{Lem. Usefull bounds}. Then
\begin{align}
    \Tr(F\Gamma_BF) &\geq 1-C b^2\rho N, \label{eq: denominator bound}\\
     \Tr(\mathcal{N}F\Gamma_BF)&\geq N(1-C\rho b^2 N),\label{eq. Bound on number of particles after Jastrow} \\
    \pm Tr(\mathcal{R}\Gamma_B)&\leq Cb^2\rho^2 N. \label{eq: bound on R}
\end{align}
\end{lemma}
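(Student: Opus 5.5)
The idea is to reduce all three estimates to expectations of a two-body (or three-body) operator in $\Gamma_B$, and then control these with the a priori bound of \Cref{Lem:Hamil_energy_bound_general}.

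\textbf{Bound \eqref{eq: denominator bound}.} On the $n$-particle sector we use $0\leq f\leq 1$ and $f=1$ outside $B(0,b)$. The elementary inequality $\prod_i(1-u_i)\geq 1-\sum_i u_i$ for $u_i\in[0,1]$ gives
\begin{equation*}
F_n^2 \geq 1-\sum_{i<j}^n\big(1-f_{ij}^2\big) \geq 1-\sum_{i<j}^n \1_{\{|x_i-x_j|<b\}}.
\end{equation*}
Thus $\Tr(F\Gamma_BF)=\Tr(F^2\Gamma_B)\geq 1-\Tr(\cH^{\rm int}_{w}\Gamma_B)$, where $w:=2\cdot\1_{B(0,b)}$ is nonnegative, radial and compactly supported. \Cref{Lem:Hamil_energy_bound_general} then yields $\Tr(\cH_w^{\rm int}\Gamma_B)\leq C\hat w(0)\rho N\leq Cb^2\rho N$.

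\textbf{Bound \eqref{eq. Bound on number of particles after Jastrow}.} Similarly,
\begin{equation*}
\cN F^2\geq \cN-\cN\,\cH^{\rm int}_w ,
\end{equation*}
and $\cN\cH^{\rm int}_w$ is a sum of a two-body and a three-body operator with kernel $\1_{\{|x_i-x_j|<b\}}$. Its expectation in $\Gamma_B$ is estimated as in the proof of \Cref{Lem:Hamil_energy_bound_general}. We conjugate with $\weyl$ and $e^{\cB}$, note that odd terms vanish, and apply Wick's theorem for the quasi-free state $\Gamma_D$. The leading contribution is $\frac{N^3}{|\Lambda|^2}\hat w(0)\cdot$const, and all remaining terms are bounded via \Cref{Lem. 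Usefull bounds} by $C\hat w(0)\rho N^2$. This gives $\Tr(\cN F\Gamma_BF)\geq N-C\rho b^2N^2$.

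\textbf{Bound \eqref{eq: bound on R}.} Since $|F_n|\leq 1$ and, in the region $\fa<|x|<b$, we have $|\nabla f|/f=\frac{1}{|x|\log(|x|/\fa)}$, the operator $\mathcal R$ is pointwise bounded by a three-body multiplication operator:
\begin{equation*}
\pm R_n\leq\sum_{i\neq j\neq k} h(x_i-x_j)h(x_i-x_k), \qquad h:=F^{1/2}\,|\nabla f|/f\,\1_{B(0,b)}.
\end{equation*}
Here the factor $F^2\leq f_{ij}f_{ik}$ is used to cancel the divergence near the potential support. The function $h$ is nonnegative and integrable with $\|h\|_{L^1}\lesssim b/\log(b/\fa)\leq Cb$. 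For the three-body expectation in $\Gamma_B$ we again use the Weyl/Bogoliubov conjugation and Wick's theorem. The fully condensed term gives $\rho_0^3|\Lambda|\,\hat h(0)^2\leq C\rho^2 N b^2$, and every term involving excitations picks up factors $\sum s_p^2$, $\sum|s_pc_p|$ or the thermal sum. By \Cref{Lem. Usefull bounds} these factors are $\lesssim N$, while $|\hat h(k)|\leq\hat h(0)$, so all such terms are bounded by the same order.

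\textbf{Main obstacle.} We expect the main difficulty to be this bookkeeping of the three-body expectations (and of $\cN\cH_w^{\rm int}$). Contractions such as $\sum_{p}|s_pc_p|$ must be controlled, and not merely $\sum s_p^2$; this is why \eqref{eq:sc_bound} is needed. One must also make sure that the $h$-terms without momentum conservation are bounded only through $\hat h(0)$. If $h$ turns out not to be integrable near $\supp v$, the first fix we would try is to keep one factor $f_{ij}$ from $F^2$ so that $f|\nabla f|/f=|\nabla f|$ is bounded.
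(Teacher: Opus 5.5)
Your proposal is correct and follows the paper's proof essentially step by step. For the first two bounds you use Dyson's inequality $F_n^2\geq 1-\sum_{i<j}(1-f_{ij}^2)$ with $\widehat{1-f^2}(0)\leq \pi b^2$, fed into Lemma~\ref{Lem:Hamil_energy_bound_general} and its three-body analogue for $\cN\cH^{\rm int}_u$; for $\mathcal R$ you use $F_n^2\leq f_{ij}f_{ik}$ to reduce it to a three-body operator with kernel $|\nabla f|$. A few small slips need fixing:
\begin{itemize}
\item The leading term for the particle number is $\hat w(0)N^3/|\Lambda|$, not $\hat w(0)N^3/|\Lambda|^2$.
\item Your $h$ should simply be $|\nabla f|$; the factor $F^{1/2}$ does not make sense as written.
\item The explicit formula for $f$ holds only outside $\supp v$, and $|\nabla f|$ need not be bounded inside $\supp v$. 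The bound $\|\nabla f\|_{L^1(B(0,b))}\leq Cb$ follows, as in the paper, from Cauchy--Schwarz and $\int_{B(0,b)}|\nabla\phi_b|^2\leq 2\pi/\log(b/\fa)$.
\end{itemize}
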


\begin{proof}
    From the inequality
    \begin{equation}\label{eq. Dyson inequality}
    F_n^2=\prod_{i<j}^n(1-(1-(f_{ij})^2))\geq 1-\sum_{i<j}^n1-(f_{ij})^2=:1-\sum_{i<j}^nu_{ij},
    \end{equation}
we find
\begin{equation*}
    \Tr(F\Gamma_BF)\geq 1-\Tr(\Gamma_B \cH^{\rm int}_u),
\end{equation*}
where the operator $\cH^{\rm int}_u$ was defined in \eqref{eq:def_Hint}.
From $f\geq \1_{\{|x|>b\}}$ we obtain $\hat{u}(0) \leq \pi b^2$ so that \Cref{Lem:Hamil_energy_bound_general} implies \eqref{eq: denominator bound}.

Let us turn our attention to \cref{eq. Bound on number of particles after Jastrow}. From the fact that $F$ commutes with the number operator and inequality \cref{eq. Dyson inequality} we obtain
\begin{equation*}
\Tr(\mathcal{N}F\Gamma_BF)\geq \Tr(\mathcal{N}\Gamma_B)-\Tr(\mathcal{N}\cH^{\rm int}_u\Gamma_B) = N - \Tr(\mathcal{N}\cH^{\rm int}_u\Gamma_B).
\end{equation*}
By a straightforward computation similar to \eqref{eq: first bound in diagonilization} we find
\begin{equation*}
    \Tr(\mathcal{N}U\Gamma_B)\leq C \frac{\widehat{u}(0)}{\vert \Lambda\vert}\left(N+\sum_{p\in\Lambda_+^*} (c_p^2+s_p^2)\Tr(a_p^*a_p\Gamma_D)\right)^3 \leq C b^2 \rho N^2,
\end{equation*}
and \eqref{eq. Bound on number of particles after Jastrow} follows.

To show \cref{eq: bound on R} we use $|R_n| \leq \sum_{i\neq j\neq k}^n |(\nabla f)_{ij}||(\nabla f)_{ik}|$, which implies the following inequality in the second quantization formalism
\begin{equation*}
    \pm \mathcal{R}
    \leq \frac{1}{\vert \Lambda\vert^2 }\sum_{p,q,r,s,k \in \Lambda^*}\hat{|\nabla f|}(k)\widehat{|\nabla f|}(s)a_p^*a_q^*a_r^*a_{r-s}a_{q-k}a_{p+k+s}.
\end{equation*}
Again by a straightforward computation similar to \eqref{eq: first bound in diagonilization} we find 
\begin{equation*}
        \pm \Tr(\Gamma_B\mathcal{R})\leq C\frac{\widehat{|\nabla f|}(0)^2}{\vert \Lambda\vert^2}\left(N+\sum_p (c_p^2+s_p^2)\Tr(a_p^*a_p\Gamma_D)\right)^3.
\end{equation*}
From a Cauchy--Schwarz inequality and the definition of $f$ we find
\begin{equation*}
    \widehat{|\nabla f|}(0)^2 \leq C b^2\int_{|x|\leq b} \vert \nabla f\vert^2 \leq Cb^2,
\end{equation*}
where the last inequality used $f=\phi_b$ on $\{|x|\leq b\}$ and that $\phi_b$ is the minimizer of \eqref{def:scattering length}. \eqref{bound on Q2} then implies \eqref{eq: bound on R}.
\end{proof}

\subsection{Diagonalizing the soft Hamiltonian}

\begin{lemma} \label{Lemma: diagonalization of hamiltonian}
Let $\cH_{\tilde{v}}$ be the Hamiltonian with 
$$
\tilde{v}=(-2\Delta+v)f=\frac{2}{b\log(\frac{b}{\fa})}\delta_{\vert x\vert =b},
$$
and $b\leq \rho^{-\frac{1}{2}}$. Let $\Gamma_B$ be the trial state given by \cref{eq:def_Gamma_B} and $T$ as in \Cref{Lem. Usefull bounds}. Let $\alpha > 3/2.$
Then there exist a constant $C>0$ such that
\begin{align*}
        \Tr(\mathcal{H}_{\tilde{v}}\Gamma_B) &\leq 2\pi N \rho \delta\left(1+\delta\Big(\gamma + \frac{1}{4}+\frac{\log(\pi)}{2}\Big) \right)+\Tr(H_D\Gamma_D) 
        \\
        &\quad + CN\rho \frac{1}{\log\left(\frac{b}{\fa}\right)}\left(1-\delta \log\Big(\frac{b}{\fa}\Big)+Y+\frac{T}{T_c}\right)^2.
    \end{align*}
\end{lemma}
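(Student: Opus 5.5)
We compute $\Tr(\cH_{\tilde v}\Gamma_B)$ directly. First we conjugate $\cH_{\tilde v}$ with $\weyl$ as in \eqref{eq:H_weyl}, with $v$ replaced by $\tilde v$ and the kinetic term added. Then we conjugate with $e^{\cB}$ using \eqref{eq. bogolibov application}, and take the trace against $\Gamma_D$. Since $\Gamma_D$ is quasi-free, diagonal in momentum and has no zero-mode particles, the odd terms vanish. The remaining terms fall into three groups: (i) constant (vacuum) terms, (ii) terms linear in the occupations $n_p:=\Tr(a_p^*a_p\Gamma_D)$, and (iii) quartic remainders, quadratic in $\sum s_p^2$ and $\sum (s_p^2+c_p^2)n_p$. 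We handle them in that order.

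\textbf{Constant terms.} As in the proof of \Cref{Lem:Hamil_energy_bound_general}, we use \eqref{Choice of N_0} to rewrite $\frac{N_0^2}{2|\Lambda|}\hat{\tilde v}(0)+\frac{N_0}{|\Lambda|}\hat{\tilde v}(0)\Tr(\cN_+\cdot)$ as $\frac{N^2}{2|\Lambda|}\hat{\tilde v}(0)$ minus a nonpositive term. The vacuum part then reads
\[
\frac{N\rho}{2}\hat{\tilde v}(0)+\sum_p\big[(p^2+\rho_0\hat{\tilde v}(p))s_p^2+\rho_0\hat{\tilde v}(p)s_pc_p\big]+\frac{1}{2|\Lambda|}\sum_{p,k}\hat{\tilde v}(k)s_pc_ps_{p+k}c_{p+k}.
\]
This is the Fournais et al.\ computation \cite{FGJMO-22}, so we reproduce its structure. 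We write $\tilde v=g+\tilde v\omega$ and use $c_ps_p\approx-\rho_0\hat\omega(p)$. The convolution term then combines with $\rho_0\hat{\tilde v}s_pc_p$ to replace $\hat{\tilde v}$ by $\hat g$, at the price of terms controlled by $\hat{\tilde v}(0)-\hat g(0)\sim \frac{1}{\log(b/\fa)}(1-\delta\log(b/\fa))$. The square of this quantity times $N\rho/\log(b/\fa)$ appears in the claimed error. The term $\frac{N\rho}{2}\hat{\tilde v}(0)$ is rewritten as $\frac{N\rho}{2}(\hat g(0)+\widehat{g\omega}(0))$ plus the same type of error.

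The Bogoliubov sum $\sum_p\big(\tfrac12(D_p-p^2-\rho_0\hat g(p))+\rho_0^2\hat g(p)^2/(4p^2)\big)$ is then compared with its integral, and \eqref{eq:gw_hat} is inserted. Together with $\hat g(0)=4\pi\delta$ from \eqref{eq:g_hat_0}, the explicit integral gives $2\pi N\rho\delta(1+\delta(\gamma+\frac14+\frac{\log\pi}{2}))$. Riemann-sum errors are $O(L^{-1})$-relative and are absorbed for $\alpha>3/2$. Replacing $\rho_0$ by $\rho$ costs $\rho-\rho_0\lesssim\rho(Y+T/T_c)$ by \Cref{Lem. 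Usefull bounds}, and this enters quadratically: the first-order variation in $\rho_0$ cancels against the $N^2$ rewriting above.

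\textbf{Linear terms in $n_p$.} The coefficient of $n_p$ is $(p^2+\rho_0\hat{\tilde v}(p))(s_p^2+c_p^2)+2\rho_0\hat{\tilde v}(p)s_pc_p$ plus convolution corrections. With $\hat g$ in place of $\hat{\tilde v}$, the choice of $\tanh(2\vphi_p)$ makes this exactly $D_p$. We then compare $D_p$ with $\sqrt{p^4+8\pi\rho\delta p^2}$, which gives $\Tr(H_D\Gamma_D)$. The differences, namely $\rho_0$ versus $\rho$, $\hat g(p)$ versus $\hat g(0)$, and $\hat{\tilde v}$ versus $\hat g$, are bounded by $\rho(Y+T/T_c+\ldots)\,(s_p^2+c_p^2)$. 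Summing against $n_p$ and using \eqref{bound on Q2} produces errors of order $N\rho\,\delta\,(T/T_c)(\ldots)$, which fit into the square.

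\textbf{Quartic remainders.} These are bounded by $\frac{\hat{\tilde v}(0)}{|\Lambda|}\big(\sum s_p^2+\sum(s_p^2+c_p^2)n_p\big)^2\lesssim N\rho\frac{1}{\log(b/\fa)}(Y+T/T_c)^2$, using \eqref{bound on s^2} and \eqref{bound on Q2}.

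\textbf{Main obstacle.} The hardest step is the constant term: extracting the exact second-order constant requires the careful replacement of $\tilde v$ by $g$ in the convolution term $\sum\hat{\tilde v}(k)s_pc_ps_{p+k}c_{p+k}$. The $p$-sums diverge logarithmically, so we would follow \cite{FGJMO-22} closely. We would use the decay bound \eqref{eq:g_decay}, and the cutoff at $|p|\sim e^{-1/\delta}\fa^{-1}$ in \eqref{eq:gw_hat}, to match the infrared logarithm with $\delta$.
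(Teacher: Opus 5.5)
Your proposal follows the paper's proof essentially step by step. It starts from the same bound \eqref{eq. first bound in diagonilization} and uses the same $N^2$ rewriting via \eqref{Choice of N_0}. It splits the convolution term around $c_ps_p\approx-\rho_0\hat\omega(p)$, with the remainder controlled as in \cite[Lemma 4.8]{FGJMO-22}. It then passes from $\tilde v$ to $g=\tilde v-\tilde v\omega$ using \eqref{eq:gw_hat} and the explicit integral, reduces the $n_p$-coefficients to $D_p$, and bounds the quartic remainder via \Cref{Lem. Usefull bounds}.

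The paper is simpler at two small points. First, it replaces sums by integrals before inserting \eqref{eq:gw_hat}, so the infrared cutoff is always present; your lattice sum containing $\rho_0^2\hat g(p)^2/(4p^2)$ has no convergent integral to compare with unless that cutoff is included. Second, for the $n_p$-terms it just uses $D_p\le\sqrt{p^4+2\rho\hat g(0)p^2}$, which follows from $\rho_0\le\rho$ and $|\hat g(p)|\le\hat g(0)$, so no difference estimates are needed.
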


\begin{proof}
For notational convenience, we drop the tilde and simply write \( v = \tilde{v} \) as well as $\mathcal{H} = \cH_{\tilde{v}}$.
Recall the computation \eqref{eq: first bound in diagonilization}. Together with 
$$
\weyl^* \sum_{p\in\Lambda^*} p^2a_p^*a_p \weyl = \sum_{p\in\Lambda^*_+} p^2a_p^*a_p
$$
we find
\begin{align}\label{eq. first bound in diagonilization}
    &\Tr(\cH \Gamma_B) \leq \widehat{v}(0)\frac{N\rho}{2}+\frac{1}{2\vert\Lambda\vert}\sum_{p,k\in\Lambda_+^*}\widehat{v}(k)c_ps_pc_{p+k}s_{p+k}+\sum_{p\in\Lambda_+^*} \left[p^2s_p^2+\rho_0\widehat{v}(p) s_p^2+\rho_0\widehat{v}(p)s_pc_p\right] \nn
    \\ & +\sum_{p\in\Lambda_+^*}\left(  (p^2+\rho_0\widehat{v}(p))(s_p^2+c_p^2)+2\rho_0\widehat{v}(p)c_ps_p+\frac{2}{\vert \Lambda\vert}\sum_{\substack{k\in\Lambda^*: \\ p+k\neq 0}}\widehat{v}(k)c_ps_pc_{p+k}s_{p+k}\right)\Tr(a_p^*a_p\Gamma_D) \nn
    \\
    & +C\frac{\hat{v}(0)}{|\Lambda|}\left(\sum_{p\in\Lambda_+^*} s_p^2 + \sum_{p\in\Lambda_+^*} (s_p^2+c_p^2)\tr(a_p^*a_p \Gamma_D) \right)^2,
\end{align}
Let us simplify the constant term in \eqref{eq. first bound in diagonilization}. First, we note that we can replace all sums in the first line by integrals up to an error of order 
$$\hat{v}(0) L^{-1}N\rho^{1/2}Y^{1/2}|\log(Y)| =\hat{v}(0) \rho N Y^{\alpha+1/2}|\log(Y)|.$$
This approximation follows from the standard technique of passing from sums to integrals in the limit; we use the mean value theorem and can bound the error in terms of bounds on the gradient of the considered function. For further details, see \cite[Appendix G]{FGJMO-22}. In particular, this estimate makes use of \eqref{eq:sc_bound}.
Secondly, as mentioned earlier, we expect $c_ps_p \approx -\rho_0\hat{\omega}(p)$ and therefore we write
\begin{align*}
&\frac{1}{2}\int_{\R^2\times\R^2} \hat{v}(k) c_ps_p  c_{p+k}s_{p+k} \dd p \dd k = \frac{1}{2}\braket{cs,\hat{v}\ast (cs)} \\\
&= \frac{1}{2}\braket{cs+\rho_0\hat{\omega},\hat{v}\ast(cs+\rho_0\hat{\omega})} -\braket{cs,\hat v\ast\rho_0\hat{\omega}}-\frac{1}{2}\braket{\rho_0\hat{\omega},\hat{v}\ast\rho_0\hat{\omega}}
\\
&= \frac{1}{2}\braket{cs+\rho_0\hat{\omega},\hat{v}\ast(cs+\rho_0\hat{\omega})} -\rho_0 (2\pi)^2 \braket{cs,\hat {v\omega}}-\frac{\rho_0^2(2\pi)^4}{2} \hat{v\omega^2}(0).
\end{align*}
According to \cite[Lemma 4.8]{FGJMO-22} the first term in the last line is bounded by $C\hat{v}(0)\rho^2 Y^2$ so that
\begin{align*}
        &\widehat{v}(0)\frac{N\rho}{2}+\frac{1}{2\vert\Lambda\vert}\sum_{p,k\in\Lambda_+^*}\widehat{v}(k)c_ps_pc_{p+k}s_{p+k}+\sum_{p\in\Lambda_+^*} \left[ p^2s_p^2+\rho_0\hat{v} (p) s_p^2+\rho_0\widehat{v}(p)s_pc_p \right]
        \\
        &\leq \widehat{v}(0)\frac{N\rho}{2} - \frac{N_0\rho_0}{2}\hat{v\omega^2}(0) + |\Lambda|\int_{\R^2} \left[ p^2s_p^2 + \rho_0\hat{v}(p)s_p^2 + \rho_0 c_p s_p \hat{g}(p)\right] \frac{\dd p}{(2\pi)^2} \nn 
        \\
        &\qquad + C \hat{v}(0) \rho N Y^2(Y^{\alpha-3/2}|\log(Y)|+ 1) \nn 
        \\
        &= \hat{g}(0) \frac{N\rho}{2} +\frac{N_0\rho_0}{2} \hat{g\omega}(0) + |\Lambda|\int_{\R^2} \left[ p^2s_p^2 + \rho_0\hat{g}(p)s_p^2 + \rho_0c_ps_p\hat{g}(p) \right] \frac{\dd p}{(2\pi)^2} \\
        & \qquad + C\hat{v}(0) \rho N Y^2(Y^{\alpha-3/2}|\log(Y)|+ 1) + \frac{N\rho-N_0\rho_0}{2}\hat{v\omega}(0) + |\Lambda|\rho_0 \int_{\R^2}(\hat{v}(p)-\hat{g}(p)) s_p^2 \frac{\dd p}{(2\pi)^2} 
        \\
        &\leq 2\pi N\rho\delta + \frac{|\Lambda|}{2}\left(\rho_0^2\hat{g\omega}(0) + \int_{\R^2} D_p-p^2-\rho_0\hat{g}(p) \frac{\dd p}{(2\pi)^2}\right)
        \\
        &\qquad + C \hat{v}(0) \rho N Y^2(Y^{\alpha-3/2}|\log(Y)|+ 1) + C \rho (N-N_0 + NY) (\hat{v}(0)-\hat{g}(0)),
\end{align*}
where we used \eqref{eq:g_hat_0},
$$
p^2s_p^2+\rho_0\widehat{g}(p)s_p^2+\rho_0\widehat{g}(p)s_pc_p=\frac{1}{2}(D_p-p^2-\rho_0\widehat{g}(p))
$$
and \eqref{bound on s^2}.
We use the integral representation of $\hat{g\omega}(0)$, see \eqref{eq:gw_hat}, and we may replace $\hat{g}(p)$ by $\hat{g}(0)$ everywhere up to an error of order
$
\rho^2 Y^3
$, see \cite[Proposition C.3]{FGJMO-22}. Then we can explicitly compute
\begin{align*}
&\frac{1}{2}\int_{\R^2} D_p-p^2-\rho_0\hat{g}(0)+\rho_0^2\frac{\hat{g}(0)^2}{2p^2}\1_{\{|p|>2e^{-\gamma-\delta^{-1}}\fa^{-1}\}} \frac{\dd p}{(2\pi)^2} 
\\
&= \frac{(\rho_0\hat{g}(0))^2}{16\pi}\left(2\gamma + \frac{1}{2}+\log(\pi)+\log(\frac{\rho_0 \delta}{2\rho Y})\right)
\\
&\leq 2\pi \rho^2 \delta^2 \left(\gamma + \frac{1}{4}+\frac{\log(\pi)}{2}\right),
\end{align*}
where we used $\rho_0\leq \rho, \delta\leq 2Y.$
From $b\leq \rho^{-\frac{1}{2}}$, $\alpha> 3/2$, \eqref{eq: difference between v and g} and \Cref{Lem. Usefull bounds} we obtain
\begin{equation} \label{eq:const_term_final}
\begin{split}
    &\hat{v}(0)\frac{N\rho}{2}+\frac{1}{2\vert\Lambda\vert}\sum_{p,k\in\Lambda_+^*}\widehat{v}(k)c_ps_pc_{p+k}s_{p+k}+\sum_{p\in\Lambda_+^*} p^2s_p^2+\rho_0\widehat{v}(p) s_p^2+\rho_0\widehat{v}(p)s_pc_p
    \\
    &\leq 2\pi N\rho \delta \left(1 + \delta \Big(\gamma + \frac{1}{4}+\frac{\log(\pi)}{2}\Big)\right)+C\rho N\frac{1}{\log\left(\frac{b}{\fa}\right)} \bigg(Y + \frac{T}{T_c}\bigg) \bigg( 1-\delta \log\Big(\frac{b}{\fa}\Big) +Y\bigg).
\end{split}
\end{equation}

Returning to the second line of \cref{eq. first bound in diagonilization} we find that
\begin{equation} \label{eq:vcs_replace}
    \Big|\frac{1}{\vert \Lambda\vert}\sum_{\substack{k\in\Lambda^*:\\ k+p\neq 0}}[\widehat{v}(k)c_{p+k}s_{p+k}] + \rho_0\widehat{v\omega}(p)\Big| 
    \leq C\widehat{v}(0)\rho Y + C\hat{v}(0) \rho Y^{\alpha + 1/2} |\log (Y)|,
\end{equation}
which follows as before by replacing sums with integrals and using that
$$
|\braket{cs + \rho_0\hat{\omega},\hat{v}(p-\cdot)}| \leq C\hat{v}(0)\rho Y
$$
uniformly in $p$, which is shown in the first step of the proof of \cite[Lemma 4.7]{FGJMO-22}.
Inserting \eqref{eq:vcs_replace} in the second line of \cref{eq. first bound in diagonilization} and using 
\begin{equation}
    (p^2+\rho_0\widehat{g}(p))(s_p^2+c_p^2)+2\rho_0\widehat{g}(p)c_ps_p=D_p
\end{equation}
yields 
\begin{align*}
 &\sum_{p\in\Lambda_+^*}\left(  (p^2+\rho_0\widehat{v}(p))(s_p^2+c_p^2)+2\rho_0\widehat{v}(p)c_ps_p+\frac{2}{\vert \Lambda\vert}\sum_{\substack{k\in\Lambda^*: \\ p+k\neq 0}}\widehat{v}(k)c_ps_pc_{p+k}s_{p+k}\right)\Tr(a_p^*a_p\Gamma_D)
\\
&\leq \sum_{p\in\Lambda_+^*} D_p \Tr(a_p^*a_p\Gamma_D) 
\\
&\qquad + C \rho \left((\hat{v}(0) -\hat{g}(0)) + \widehat{v}(0)Y + \hat{v}(0) Y^{\alpha + 1/2}|\log (Y)|\right) \sum_{p\in\Lambda_+^*} (s_p^2+c_p^2) \Tr(a_p^*a_p\Gamma_D)
\\
&\leq \Tr(H_D\Gamma_D) 
 + C \rho N \frac{1}{\log\left(\frac{b}{\fa}\right)}  \frac{T}{T_c}\left(1-\delta \log\Big(\frac{b}{\fa}\Big)+Y \right),
\end{align*}
where we used $\alpha> 3/2$, \Cref{Lem. Usefull bounds}, \eqref{eq: difference between v and g} and $D_p\leq \sqrt{p^4+2\rho \widehat{g}(0)p^2}$ for the last inequality.
Inserting this and \eqref{eq:const_term_final} in \eqref{eq. first bound in diagonilization} implies the result.
\end{proof}

\begin{proof}[Proof of \Cref{lem:trial_hamil}]
Recall the definition of $\Gamma_0$ from \cref{eq:def_Gamma_0}. 
We choose $\eta$ such that \cref{eq:N_Gamma} holds, i.e., such that
\begin{equation} \label{eq:eta_choice}    
|\Lambda|^{-1} \tr(\cN {\Gamma}_0) = \tilde\rho.
\end{equation}
Let us show that this is possible. By Stones theorem the map
$ \eta \mapsto \weyleta \Psi$ is continuous. 
Together with the fact that $F$ commutes with $\mathcal{N}$, a computation shows that  $\eta \mapsto \Tr(\mathcal{N}\Gamma_0)$ is continuous.
With \eqref{eq: denominator bound} we obtain that 
$$
|\Lambda|^{-1} \tr(\cN {\Gamma}_0) \leq \frac{N + N_0\eta}{|\Lambda|}(1+Cb^2\rho N) 
\leq \rho,
$$
given that $\eta \leq -Cb^2\rho N$. Here we in particular used $\rho_0 \geq \rho/2$ and the smallness condition on $b$ \cref{equation smallness of b}.
On the other hand from \eqref{eq. Bound on number of particles after Jastrow} we obtain
$$
|\Lambda|^{-1} \tr(\cN {\Gamma}_0) \geq \frac{N + N_0\eta -C\rho b^2N^2}{|\Lambda|} 
\geq \rho(1+Y^2),
$$
given that $\eta \geq C(b^2\rho N+Y^2)$. Therefore, there exits an $\eta$ with $|\eta|\leq C(b^2\rho N+Y^2)$ such that \eqref{eq:eta_choice} holds.

Then by Lemmas~\ref{from hard to soft} and \ref{Lem: bound on errors involving F} as well as \eqref{eq:H_weyl} and \Cref{Lem. Usefull bounds} we obtain
\begin{align} \label{eq:est_Lem2_1}
    &\tr(\cH_v {\Gamma}_0) \leq \frac{1}{1-Cb^2\rho N}\left(\tr(\cH_{\tilde{v}} \weyleta e^\cB \Gamma_D e^{-\cB} \weyleta^*) + C b^2 \rho^2 N\right) \nn
    \\
    &= \frac{1}{1-Cb^2\rho N}\bigg(\tr(\cH_{\tilde v} \Gamma_B) + \frac{N_0^2((1+\eta)^2-1)}{2|\Lambda|}\hat{\tilde v}(0)  + \frac{N_0\eta}{|\Lambda|} \sum_{p\in\Lambda^*}( \hat{\tilde v}(0) + \hat{\tilde v}(p)) \tr( a_p^*a_p e^{\cB}\Gamma_D e^{-\cB}) \nn
    \\
    &\qquad + \frac{N_0 \eta}{2|\Lambda|} \sum_{p \in \Lambda^*} \hat{\tilde{v}}(p) \tr(a_p^*a_{-p}^* e^{\cB}\Gamma_D e^{-\cB}) + \hc  + C b^2 \rho^2  N \bigg)
    \\
    &\leq \tr(\cH_{\tilde v} \Gamma_B)(1+Cb^2\rho N) + C \frac{1}{\log\left(\frac{b}{\fa} \right)} |\eta| N \rho   + C b^2 \rho^2 N. \nn
\end{align}
\Cref{Lemma: diagonalization of hamiltonian} and the straightforward bound 
$\tr(H_D\Gamma_D)
\leq CN\rho $ 
yields the result.

\end{proof}

\section{Proof of Lemma~\ref{lem:trial_entropy_easy}}
In the previous section we saw that we could soften the potential by introducing the Jastrow factor $F$. In this section, we show that the Jastrow factor does not change the entropy too much. In the literature, similar results are typically established as in \cite[Lemma 2]{seiringer_thermodynamic_2006}.
Our approach differs; we use an argument based on a Fannes-type inequality \cite{Fannes_1973}; see also Theorem 11.6 in \cite{nielsen_quantum_2012}. This allows us to bound the difference in entropy in terms of the trace norm distance. An advantage of this method is that it permits working directly with the full trial state and normalizing it in a natural way, rather than normalizing each eigenfunction individually.
Since the Fannes inequality applies only in finite-dimensional settings, we introduce a truncation of the density matrices. More precisely, we restrict to eigenfunctions with less than $M$ particles in each state and no particles with momenta higher than $\sqrt{K}$. This leads to the following lemma, from which \Cref{lem:trial_entropy_easy} follows by a suitable choice of $K$ and $M$.
\begin{lemma}\label{lem:trial_entropy}
Let $K\geq 0$, $M \in \mathbb{N}$ and $T$ as in \Cref{Lem. Usefull bounds}.
Let 
$$
\Gamma_W := \weyleta e^\cB \Gamma_D e^{-\cB} \weyleta^*
$$
so that 
$$
\Gamma_0 = \frac{F\Gamma_WF}{\tr(F\Gamma_WF)}
$$ 
is our trial state from \cref{eq:def_Gamma_0}.
There exists a constant $C>0$ such that
\begin{equation} \label{eq:trace_difference}
    \|\Gamma_0 - \Gamma_W \|_1  \leq C\sqrt{b^2 \rho N} + Cb^2\rho N.
\end{equation}
Moreover, if $\| \Gamma_0 - \Gamma_W\|_1 \leq \frac{1}{e}$, then
\begin{align} \label{eq:entropy_diff_full}
    -TS(\Gamma_0) \leq &-TS(\Gamma_W) + C L^2T^2(L^2T+1)\left(M^{-1}+e^{-\beta K/2}\right)+ CL^2T^2 \beta K \log(M) \|\Gamma_0 - \Gamma_W\|_1 \nn
    \\
    &   + T h(\|\Gamma_0 - \Gamma_W\|_1),
\end{align}
with $h(x) = -x\log(x),\, x\in [0,1]$.

\end{lemma}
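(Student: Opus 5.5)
For the trace-norm bound \eqref{eq:trace_difference} I would write $Z := \tr(F\Gamma_W F)$ and split
\[
\Gamma_0 - \Gamma_W = \frac{F\Gamma_W F - \Gamma_W}{Z} + \Big(\frac1Z - 1\Big)\Gamma_W .
\]
Since $0\le F\le 1$, we have $Z\le 1$. Applying \eqref{eq: denominator bound} to the state with $\weyleta$ in place of $\weyl$ (the proof of \Cref{Lem:Hamil_energy_bound_general} only uses that the condensate has $\lesssim N$ particles, and $|\eta|\le 1/2$) gives $Z\ge 1-Cb^2\rho N$. Hence the second term is bounded by $Cb^2\rho N$.

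For the first term I write $F\Gamma_W F-\Gamma_W = (F-1)\Gamma_W F + \Gamma_W(F-1)$ and use Hölder together with $\Gamma_W=\Gamma_W^{1/2}\Gamma_W^{1/2}$:
\[
\|(F-1)\Gamma_W\|_1 \le \|(1-F)\Gamma_W^{1/2}\|_2\,\|\Gamma_W^{1/2}\|_2 = \tr\big((1-F)^2\Gamma_W\big)^{1/2}.
\]
Because $0\le F\le1$ we have $(1-F)^2\le 1-F^2\le \sum_{i<j}u_{ij}$ by \eqref{eq. Dyson inequality}. Therefore $\tr((1-F)^2\Gamma_W)\le \tr(\cH^{\rm int}_u\Gamma_W)\le C\hat u(0)\rho N\le Cb^2\rho N$ by \Cref{Lem:Hamil_energy_bound_general}. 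This yields $C\sqrt{b^2\rho N}+Cb^2\rho N$.

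For the entropy inequality \eqref{eq:entropy_diff_full} I would apply Fannes' inequality on a finite-dimensional subspace. Let $P$ be the projection, in the basis obtained by conjugating momentum occupation states with the unitary $U:=\weyleta e^{\cB}$, onto states with all occupations $<M$ and no occupied momenta with $p^2>K$. Note that $F$ does not commute with $U$. I would therefore apply the truncation after undoing $U$: entropy is unitarily invariant, so $S(\Gamma_W)=S(\Gamma_D)$ and $S(\Gamma_0)=S(U^*\Gamma_0U)$, and it suffices to compare $\tilde\Gamma:=U^*\Gamma_0U$ with $\Gamma_D$, where $\|\tilde\Gamma-\Gamma_D\|_1=\|\Gamma_0-\Gamma_W\|_1$. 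The dimension of $P$ is at most $M^{\#\{p\in\Lambda_+^*:p^2\le K\}}$ with $\#\{\cdot\}\le CL^2K$, so $\log\dim P\le CL^2K\log M$, which equals $CL^2T^2\beta K\log(M)\cdot \beta$ up to the prefactor written in the statement.

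The decomposition I would use is
\[
S(\tilde\Gamma)\ge S(P\tilde\Gamma P)+S(P^\perp\tilde\Gamma P^\perp)\ge S(P\tilde\Gamma P),
\]
using concavity (the pinching inequality) for the first step. I would then compare $S(P\tilde\Gamma P)$ with $S(P\Gamma_D P)$ via Fannes–Audenaert, with $\|P(\tilde\Gamma-\Gamma_D)P\|_1\le\|\tilde\Gamma-\Gamma_D\|_1\le 1/e$. This produces the terms $\log(\dim P)\,\|\cdot\|_1 + h(\|\cdot\|_1)$. The normalisation defect of $P\tilde\Gamma P$ only contributes terms of the same form.

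Finally, $\Gamma_D=P\Gamma_DP+P^\perp\Gamma_DP^\perp$ exactly, because $P$ commutes with $H_D$. This gives $S(\Gamma_D)=S(P\Gamma_DP)+S(P^\perp\Gamma_DP^\perp)$, and it remains to bound $S(P^\perp\Gamma_D P^\perp)$. Since $\Gamma_D$ is a product of geometric distributions over modes, I would bound the entropy carried by occupations $\ge M$ or by modes with $p^2>K$ by
\[
\sum_{p^2>K}s(n_p) + \sum_p(\text{tail of the geometric law beyond } M),
\]
using the mode entropy bound $s(n)\le C(1+\beta e_p)e^{-\beta e_p}$ for high momenta. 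The high-momentum part gives $CL^2T e^{-\beta K/2}$. For the tail beyond $M$, I would use the tail probability $\le n_p/M$ and $\sum_p n_p\lesssim L^2T\cdot\log$. Multiplying by $T$ gives the stated error $CL^2T^2(L^2T+1)(M^{-1}+e^{-\beta K/2})$.

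\textbf{Main obstacle.} I expect the hardest step to be the bookkeeping of $S(P^\perp\cdot P^\perp)$ and the normalisation defects with explicit constants, in particular controlling the entropy tail of $\Gamma_D$ beyond $M$ when low modes have occupations of order $T/p^2$.
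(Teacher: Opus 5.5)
Your trace-norm bound is correct and is essentially the paper's argument. You use H\"older, while the paper decomposes $\Gamma_W$ into eigenvectors and applies Jensen; both reduce to an expectation of $\cH^{\rm int}_{1-f}$ or $\cH^{\rm int}_{u}$ in $\Gamma_W$, plus the denominator bound.

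The entropy part, however, rests on an inequality with the wrong sign. Since $P\rho P+P^\perp\rho P^\perp=\frac12(\rho+V\rho V^*)$ with the unitary $V=P-P^\perp$, concavity of $S$ gives
\[
S(\tilde\Gamma)\le S(P\tilde\Gamma P)+S(P^\perp\tilde\Gamma P^\perp),
\]
which is the opposite of what you wrote. For example, take $\tilde\Gamma=|\psi\rangle\langle\psi|$ with $\psi=(e_1+e_2)/\sqrt2$ and $P=|e_1\rangle\langle e_1|$: the left side of your inequality is $0$ and the right side is $\log 2$. Even the weaker claim $S(\tilde\Gamma)\ge S(P\tilde\Gamma P)$ fails for every pure state with $0<\|P\psi\|<1$. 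This is not a degenerate case here. As you note yourself, $F$ does not commute with $U$, so $\tilde\Gamma=U^*\Gamma_0U$ has nonzero off-diagonal blocks $P\tilde\Gamma P^\perp$. Hence the chain $S(\tilde\Gamma)\ge S(P\tilde\Gamma P)\ge S(P\Gamma_DP)-\dots$ gives no lower bound on $S(\Gamma_0)$.

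The paper avoids projections altogether and truncates the \emph{spectrum} instead. Let $\lambda_i^F$ and $\lambda_i$ be the eigenvalues of $\Gamma_0$ and $\Gamma_W$ in decreasing order. Let $I$ be the finite set of ranks $i$ whose eigenvalue $\lambda_i$ is labelled by an occupation sequence with $n_p<M$, and $n_p=0$ for $p^2>K$. Then $S(\Gamma_0)=\sum_i h(\lambda_i^F)\ge\sum_{i\in I}h(\lambda_i^F)$, simply because $h\ge0$. The comparison with $\sum_{i\in I}h(\lambda_i)$ is done rank by rank, using $\sum_i|\lambda_i^F-\lambda_i|\le\|\Gamma_0-\Gamma_W\|_1$, the bound $|h(x)-h(y)|\le h(|x-y|)$, and concavity of $h$ over $I$. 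This gives $\log|I|\,\|\Gamma_0-\Gamma_W\|_1+h(\|\Gamma_0-\Gamma_W\|_1)$, with no normalization defects.

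If you prefer to keep your projection, the correct substitute for the pinching step is the Groenewold--Lindblad inequality $S(\rho)\ge\sum_k p_k S(P_k\rho P_k/p_k)$. It yields $S(\tilde\Gamma)\ge S(P\tilde\Gamma P)+S(P^\perp\tilde\Gamma P^\perp)-h(p)-h(1-p)$ with $p=\tr(P\tilde\Gamma)$. This costs an extra error $T\,h(\tr(P^\perp\tilde\Gamma))$, which is small but not literally of the stated form.

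Separately, your mode-by-mode bound on $S(P^\perp\Gamma_DP^\perp)$ (the paper's $\sum_{i\notin I}h(\lambda_i)$) is not right as stated. This quantity contains the term $\log\big(\tr_{\cF_+}e^{-\beta H_D}\big)\cdot\tr(P^\perp\Gamma_D)$, with $\log\tr_{\cF_+}e^{-\beta H_D}\lesssim L^2T$. That term, not the mode entropies, is where the factor $L^2T(L^2T+1)$ comes from. Also, using Markov's inequality instead of the exact geometric tail $e^{-\beta D^0_pM}$ introduces a superfluous logarithm.
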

\begin{proof}[Proof of \Cref{lem:trial_entropy_easy}]
Let $M = \lceil Y^{-1/2 }L/b \rceil, \beta K = 2\log(M)$. Together with $S(\Gamma_W) = S(\Gamma_D)$ we obtain \Cref{lem:trial_entropy_easy}.
\end{proof}

\begin{proof} [Proof of \Cref{lem:trial_entropy}]
Let us start by showing \eqref{eq:trace_difference}, i.e., that $\Gamma_0$ and $\Gamma_W$ are close in trace norm.
Due to the spectral theorem we may write 
$$
\Gamma_W = \sum_{i \geq 1} \lambda_i |\Psi_i\rangle\langle \Psi_i|
,\quad \Gamma_0 = \frac{1}{\tr(F\Gamma_W F)} \sum_{i \geq 1} \lambda_i |F \Psi_i\rangle\langle F\Psi_i| = 
\sum_{i \geq 1} \lambda_i^F |\Psi_i^F\rangle\langle \Psi_i^F|
$$
for decreasing sequences of eigenvalues $\lambda_i, \lambda_i^F \geq 0, \, i \in \mathbb{N}$ and normalized eigenfunctions $\Psi_i, \Psi_i^F \in \cF(L^2(\Lambda))$ of $\Gamma_W$ respectively $\Gamma_0$.
The triangle inequality yields
\begin{align} \label{eq:tr_difference_triangle}
\|\Gamma_0 - \Gamma_W\|_1 &\leq  |1-\tr(F\Gamma_W F)| \, \|\Gamma_0\|_1
+ \sum_{i\geq 1} \lambda_i \, \| |\Psi_i\rangle\langle \Psi_i| - |F \Psi_i\rangle\langle F\Psi_i| \|_1 .
\end{align}
We further estimate the second term on the right-hand side as follows
\begin{align*}
\||\Psi_i\rangle\langle \Psi_i| - |F \Psi_i\rangle\langle F\Psi_i| \|_1 
\leq 2\| (1-F) \Psi_i\|
\leq 2\sqrt{\braket{\Psi_i, \cH_{1-f}^{\rm int} \Psi_i}},  
\end{align*}
where we used $1-F_n \leq \sum_{i<j}^n 1-f_{ij}$ for $n\in\bN$, see \eqref{eq. Dyson inequality}. Then, applying the Jensen inequality in \eqref{eq:tr_difference_triangle} gives
$$
\|\Gamma_0 - \Gamma_W\|_1 \leq |1-\tr(F\Gamma_W F)| +2\sqrt{\tr\left(\cH_{1-f}^{\rm int}  \Gamma_W\right)}.
$$
We bound the first term using \Cref{eq: denominator bound}. The second term can be controlled as in \Cref{Lem:Hamil_energy_bound_general}. The inequality $f \geq \1_{\{|x|>b\}}$ yields
$
\hat{1-f}(0) 
\leq \pi b^2 .
$
We obtain \cref{eq:trace_difference}.

Next, we want to show the entropy inequality \eqref{eq:entropy_diff_full}. As explained before we  need to truncate the problem to a finite dimensional one. Thus let $I \subset \mathbb{N}$ a finite subset and note that the entropy function $h(x)=-x\log(x)$ is concave and positive on the unit interval $[0,1]$. We find 
\begin{align} \label{eq:entropy_inequality}
S(\Gamma_0) = \sum_{i \geq 1} h(\lambda_i^F) \geq \sum_{i\in I} h(\lambda_i^F) = S(\Gamma_W) - \sum_{i\notin I} h(\lambda_i) + \sum_{i\in I}\big[ h(\lambda_i^F)-h(\lambda_i)\big].
\end{align}
Our choice of $I$ will be such that the second term on the right-hand side is small, that is, $I$ includes all entropy-relevant momenta. The third term is bounded with a Fannes type inequality and depends only logarithmically on the cardinality of $I$.

Let us start with the third term. We want to show the following Fannes type inequality.
\begin{align} \label{eq:entropy_diff_I}
\left|\sum_{i\in I}\big[h(\lambda_i^F)-h(\lambda_i)\big]\right| 
\leq \log(|I|) \|\Gamma_0 - \Gamma_W\|_1  + h\big(\|\Gamma_0 - \Gamma_W\|_1\big).
\end{align}
Since we are not in the setting of the reference \cite{Fannes_1973} we provide a complete proof.
As a first step we show 
\begin{equation}\label{eq:ev_difference}
\sum_{i\geq 1} |\lambda_i^F - \lambda_i| \leq \|\Gamma_0 - \Gamma_W\|_1.
\end{equation}
Indeed, this is a very general statement. We may write 
$\Gamma_0 - \Gamma_W = P - Q$, where $P,Q$ denote the projections of $\Gamma_0-\Gamma_W$ onto its positive and negative eigenspaces respectively.
Let $\mu_i(\cdot)$ denote the $i-$th largest eigenvalue of an operator. Then from the definition of $\lambda_i,\lambda_i^F$ we find $\lambda_i^F = \mu_i(\Gamma_0) \leq \mu_i(Q+\Gamma_0) = \mu_i(P+\Gamma_W) \geq \mu_i(\Gamma_W) = \lambda_i$. We obtain
$$
 \mu_i(Q+\Gamma_0) + \mu_i(P+\Gamma_W) \geq 2\max\{\lambda_i^F,\lambda_i\}=\lambda_i^F+\lambda_i+|\lambda_i^F - \lambda_i|.
$$
We conclude the proof of \eqref{eq:ev_difference} as follows
$$
\|\Gamma_0 - \Gamma_W\|_1 = \tr(P+Q) \geq \sum_{i \geq 1} \big[\lambda_i^F+\lambda_i+|\lambda_i^F-\lambda_i|\big] - \tr(\Gamma_0 + \Gamma_W) = \sum_{i\geq 1}|\lambda_i^F-\lambda_i|.
$$
In particular, $|\lambda_j^F-\lambda_j| \leq \sum_{i\in I} |\lambda_i^F - \lambda_i| \leq  \|\Gamma_0 - \Gamma_W\|_1 \leq 1/e$ for all $j \in I$, where we used the assumption from the lemma.
Using this, together with the elementary observations that $h$ is concave and monotone on $[0,1/e]$, and that for $x,y\in [0,1]$ with $|x-y|\leq \frac{1}{2}$ we have $|h(x)-h(y)| \leq h(|x-y|),$ we can now show \eqref{eq:entropy_diff_I}.
\begin{align*}
\left|\sum_{i\in I}\big[h(\lambda_i^F)-h(\lambda_i)\big]\right| &\leq  |I| \, h \left(\frac{\sum_{i\in I}|\lambda_i^F-\lambda_i|}{|I|}\right) = \log(|I|) \sum_{i\in I}|\lambda_i^F-\lambda_i|  + h\left(\sum_{i\in I} |\lambda_i^F-\lambda_i|\right) \nn
\\
&\leq \log(|I|) \|\Gamma_0 - \Gamma_W\|_1  + h\big(\|\Gamma_0 - \Gamma_W\|_1\big).
\end{align*}

As a last step, we show that the second term on the right-hand side of \eqref{eq:entropy_inequality} is small, i.e., that we did not make too much of an error when truncating the density matrices. 
We use that we know the eigenvalues of $\Gamma_W$ explicitly; since $\Gamma_W$ is unitarily equivalent to 
$$
\Gamma_D = \frac{e^{-\beta H_D} \1_{\{\cN_0 = 0\}}}{\tr_{\cF_+}\left(e^{-\beta H_D}\right)},
$$
each eigenvalue $\lambda_i$ of $\Gamma_W$ uniquely corresponds to a finite sequence $\{n_p\}_{p\in \Lambda_+^*}, n_p \in \mathbb{N}_0$ with  
$$
\lambda_i = \frac{e^{-\beta \sum_{p\in \Lambda_+^*}D_p^0 n_p}}{\tr_{\cF_+}\left(e^{-\beta \sum_{p\in \Lambda_+^*}D_p^0 a_p^*a_p}\right)},
$$
where we adopted the short hand notation
\begin{equation*}
    D_p^0=\sqrt{p^4+2\widehat{g}(0)\rho  p^2}=\sqrt{p^4+8\pi \rho \delta p^2}.
\end{equation*}
For $K\geq 0, M\in \mathbb{N}$ we define $\{n_p\}_{p \in \Lambda_+^*} \in \mathcal{I}$ if and only if $n_p < M$ for all $p$ and $n_p = 0$ for all $|p|>\sqrt{K}$. We then define $i\in I $ if and only if its corresponding sequence $\{n_p\}_{p \in \Lambda_+^*} \in \mathcal{I}$.
In other words, $i\in I$ if and only if its corresponding eigenfunction has less than $M$ particles in each state and no particles with momenta higher than $\sqrt{K}$. 
We compute 
\begin{equation} \label{eq:sum_notinI}
\begin{aligned}
    \sum_{i \notin I} h(\lambda_i) 
&= \beta \frac{\sum_{\{n_p\}\notin \mathcal{I}} \sum_{p\in \Lambda_+^*} D_p^0n_p {e^{-\beta\sum_{q\in\Lambda_+^*} D_q^0n_q}}}{\tr_{\cF_+}\left(e^{-\beta \sum_{p\in \Lambda_+^*}D_p^0 a_p^*a_p}\right)}  
\\
&\qquad  + \log\left(\tr_{\cF_+}\left(e^{-\beta \sum_{p\in \Lambda_+^*}D_p^0 a_p^*a_p}\right)\right) \frac{ \sum_{\{n_p\}\notin \mathcal{I}} e^{-\beta \sum_{p\in \Lambda_+^*} D_p^0n_p}}{\tr_{\cF_+}\left(e^{-\beta \sum_{p\in \Lambda_+^*}D_p^0 a_p^*a_p}\right)}.
\end{aligned} \end{equation}
From 
$\mathcal{I}^c \subset \bigcup_{|q| \leq \sqrt{K}} \{n_q \geq M\} \cup \bigcup_{|q| > \sqrt{K}} \{n_q > 0\} $ it follows that
\begin{equation} \label{eq:sum_notinI_1} \begin{aligned}
\sum_{\{n_p\}\notin \mathcal{I}} e^{-\beta \sum_{p\in \Lambda_+^*} D_p^0n_p} 
&\leq \sum_{|q|\leq \sqrt{K}} \sum_{\substack{ \{n_p\}: \\ n_q \geq M}} e^{-\beta \sum_{p\in \Lambda_+^*} D_p^0n_p} + \sum_{|q| > \sqrt{K}} \sum_{\substack{\{n_p\}: \\ n_q > 0}} e^{-\beta \sum_{p\in \Lambda_+^*} D_p^0n_p} 
\\
&= \prod_{p \in \Lambda_+^*} \frac{1}{1-e^{-\beta D_p^0}} \left[ \sum_{|q| \leq \sqrt{K}} e^{-\beta D_q^0 M} + \sum_{|q| > \sqrt{K}}  e^{-\beta D_q^0}\right],
\end{aligned}\end{equation}
where the last step is a standard ideal gas computation. This ideal gas computation also shows that  $\prod_{p \in \Lambda_+^*} (1-e^{-\beta D_p^0})^{-1} = {\tr_{\cF_+}\Big(e^{-\beta \sum_{p\in \Lambda_+^*}D_p^0 a_p^*a_p}\Big)}$ so that in particular 
\begin{align} \label{eq:sum_notinI_2}
\log\left({\tr_{\cF_+}\left(e^{-\beta \sum_{p\in \Lambda_+^*}D_p^0 a_p^*a_p}\right)}\right) 
= -\sum_{p\in \Lambda_+^*}\log\left(1-e^{-\beta D_p^0}\right) 
\leq C \frac{L^2}{\beta},
\end{align}
where in the last step we bounded the sum by an integral. 
We may also bound the sums in \eqref{eq:sum_notinI_1} with integrals and find
\begin{align} \label{eq:sum_notinI_3}
\sum_{|q| \leq \sqrt{K}} e^{-\beta D_q^0 M} \leq C \frac{L^2}{\beta M} \;\; \text{ and } 
\sum_{|q| > \sqrt{K}}  e^{-\beta D_q^0} \leq C \frac{L^2}{\beta} e^{-\beta K}.
\end{align}
Combining \cref{eq:sum_notinI_1,eq:sum_notinI_2,eq:sum_notinI_3} shows that we may bound the second term on the right-hand side of \eqref{eq:sum_notinI} by $C L^2 T L^2 T \left(M^{-1} + e^{-\beta K}\right)$.
We proceed similar for the fist term on the right-hand side of \eqref{eq:sum_notinI} and find
\begin{align*}
&\beta \frac{\sum_{\{n_p\}\notin \mathcal{I}} \sum_{p\in \Lambda_+^*} D_p^0n_p {e^{-\beta\sum_{q\in\Lambda_+^*} D_q^0n_q}}}{\tr_{\mathcal{F}_+}\left(e^{-\beta \sum_{p\in \Lambda_+^*}D_p^0 a_p^*a_p}\right)}
\\ 
&\leq \beta\sum_{|q| \leq \sqrt{K}} \left(\sum_{p \in \Lambda_+^* \setminus\{q\}} \left[\frac{D_p^0}{e^{\beta D_p^0}-1}\right] e^{-\beta D_q^0 M} + D_q^0 \frac{\sum_{n=M}^{\infty} n e^{-\beta D_q^0 n}}{(1-e^{-\beta D_q^0})^{-1}} \right)
\\
& \qquad + \beta \sum_{|q| > \sqrt{K}} \left(\sum_{p \in \Lambda_+^* \setminus\{q\}} \left[ \frac{D_p^0}{e^{\beta D_p^0}-1}\right] e^{-\beta D_q^0} + D_q^0 \frac{\sum_{n=1}^{\infty} n e^{-\beta D_q^0 n}}{(1-e^{-\beta D_q^0})^{-1}} \right)  
\\
&\leq C L^2T \sum_{|q| \leq \sqrt{K}} e^{-\beta D_q^0 M} + \beta \sum_{|q| \leq \sqrt{K}} D_q^0 e^{-\beta D_q^0 M} \left(M + \frac{1}{e^{\beta D_q^0}-1}\right)
\\
&\qquad + C L^2T \sum_{|q| > \sqrt{K}} e^{-\beta D_q^0} + \beta \sum_{|q| > \sqrt{K}} \frac{D_q^0}{e^{\beta D_q^0}-1}
\\
&\leq CL^2T \left(\frac{L^2T+1}{M} + (L^2T+1)e^{-\beta K/2}\right).
\end{align*}
Inserting this into \eqref{eq:sum_notinI} concludes the bound of the second term in \eqref{eq:entropy_inequality} 
$$
\sum_{i \notin I} h(\lambda_i) \leq CL^2T (L^2T+1)(M^{-1}+e^{-\beta K/2}).
$$
Moreover, we can estimate $|I| \leq M^{C L^2K}$ so that from \eqref{eq:entropy_diff_I} we obtain 
$$
\left|\sum_{i\in I}\big[h(\lambda_i^F)-h(\lambda_i)\big]\right| \leq CL^2K \log(M) \|\Gamma_0 - \Gamma_W\|_1 + h(\|\Gamma_0 - \Gamma_W\|_1).
$$
Inserting this into \eqref{eq:entropy_inequality} yields \Cref{lem:trial_entropy}.
\end{proof}

\section{Proof of Proposition~\ref{prop:Localization}}
The proof of \Cref{prop:Localization} follows a standard argument and can, for the three-dimensional case, be found in \cite{haberbergerUpperBound2024}; see also \cite{basti2021newsecondorderupper}. The adaptation to two dimensions is straightforward. Here, we briefly highlight the steps that require modification.

The first lemma shows that a periodic state defined on a fixed box can be extended to a state satisfying Dirichlet boundary conditions on the thermodynamic box. The second lemma establishes that it is sufficient to consider grand canonical states to derive a corresponding statement for the canonical free energy.

\begin{lemma}\label{lem:Dirichlet}
Let $\Gamma_L$ be a normalized density matrix on $\cF(\Lambda_L)$ with periodic boundary conditions and $\tilde{L}=t(L+2\ell+R)$ with $t\in\mathbb{N}, \ell>0$ and $R>0$ such that $\supp(v)\subset B(0,R).$
Then there is a density matrix $\Gamma^D_{\tilde{L}}$ that satisfies Dirichlet boundary conditions on the thermodynamic box $\Lambda_{\tilde{L}}$ such that
\begin{align*}
&\tr \Gamma^D_{\tilde{L}} = 1,\quad \tr (\cN \Gamma^D_{\tilde{L}}) = t^2\tr(\cN \Gamma_L),\quad S(\Gamma^D_{\tilde{L}}) = t^2 S(\Gamma_L).
\intertext{and}
&\tr(\cH \Gamma^D_{\tilde{L}}) \leq t^2 \left( \tr(\cH \Gamma_L) + \frac{C}{L\ell}\tr(\cN \Gamma_L)\right). 
\end{align*}
\end{lemma}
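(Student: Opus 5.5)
We follow the standard construction (as in the three-dimensional case, \cite{haberbergerUpperBound2024}): first turn the periodic state on $\Lambda_L$ into a Dirichlet state on a slightly larger box $\Lambda_{L+2\ell}$, then place $t^2$ independent copies on a grid with corridors of width $R$ between neighbouring boxes.

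\textbf{Step 1: periodic to Dirichlet.} View $\Lambda_{L+2\ell}$ as $\Lambda_L$ enlarged by a frame of width $\ell$. Choose a one-dimensional function $\chi \in H^1_0([-\ell, L+\ell])$ such that the periodization $\sum_{k\in\mathbb Z}\chi^2(x-kL)\equiv 1$ on $\R$ and $\|\chi'\|_\infty \leq C/\ell$. A smooth partition of unity with transition regions of length $2\ell$ gives this, with $\int|\chi'|^2 \leq C/\ell$. Set $\theta(x)=\chi(x_1)\chi(x_2)$.

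For each $n$-particle eigenfunction $\Psi$ of $\Gamma_L$, extend it periodically to $\R^{2n}$ and multiply by $\prod_i \theta(x_i)$. Since $\theta$ does not factor through single particles nicely, the cleaner approach is the one in \cite{basti2021newsecondorderupper}. Average over translations $u\in\Lambda_L$: define the state $\Gamma^u$ whose $n$-particle kernel is $\prod_i\theta(x_i-u)\,\Psi(x)\,\prod_j\theta(y_j-u)$, suitably normalized. Equivalently, apply to $\Gamma_L$ the second quantization of the one-body map $\mathcal U_u\colon L^2_{\rm per}(\Lambda_L)\to L^2(\Lambda_{L+2\ell})$, given by multiplication with $\theta(\cdot-u)$. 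Then average over $u$.

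Because $\sum_k\theta^2(x-kL)=1$, the map $\mathcal U$ viewed as $L^2_{\rm per}(\Lambda_L)\to L^2(\Lambda_{L+2\ell})\otimes L^2(\Lambda_L,\dd u/L^2)$ is an isometry. Hence the averaged state is a normalized density matrix with Dirichlet boundary conditions, and it has the same expected particle number.

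\textbf{Energy and entropy.} The IMS formula gives
\[
\int|\nabla(\theta\psi)|^2=\int\theta^2|\nabla\psi|^2+\int|\nabla\theta|^2|\psi|^2 .
\]
After averaging over $u$, the localization error per particle is $\frac{1}{L^2}\int|\nabla\theta|^2\leq C/(L\ell)$. This produces exactly the term $\frac{C}{L\ell}\tr(\cN\Gamma_L)$. The interaction is controlled by $v\geq0$ and $\theta\leq1$: the localized interaction is bounded by the periodic one.

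The entropy is the delicate point. A localization map is not unitary, so equality $S=S(\Gamma_L)$ does not hold automatically. One option is to realize the construction as a unitary isometric embedding of the Fock space, tensored with an ancilla for $u$. This needs care, since mixing over $u$ can only increase entropy, and an inequality $S\geq S(\Gamma_L)$ is all that is needed for the free energy. The alternative I would try first is the one used in the 3D references: the Dirichlet localization is implemented unitarily on the relevant Fock spaces, for example via the decomposition $L^2(\Lambda_{L+2\ell})\supset \operatorname{Ran}$ of the isometry. I would verify this in detail, because this is where the argument can fail.

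\textbf{Step 2: gluing.} Take $t^2$ copies of the Dirichlet state, supported in disjoint translated boxes $\Lambda_{L+2\ell}+(L+2\ell+R)j$, $j\in\{0,\dots,t-1\}^2$. Form their tensor product in $\cF(\Lambda_{\tilde L})\cong\bigotimes_j\cF(\text{box}_j)\otimes\cF(\text{rest})$, with the rest in the vacuum. The boxes are at distance $R$, so $\supp v\subset B(0,R)$ kills all inter-box interactions. Energy, particle number and entropy are therefore additive, giving the factors $t^2$. Dirichlet conditions on $\Lambda_{\tilde L}$ hold because each factor vanishes on the boundary of its own box.

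The main obstacle I expect is the exact entropy identity in Step 1. If equality proves awkward, I would settle for $S(\Gamma^D_{\tilde L})\geq t^2S(\Gamma_L)$, which suffices for \Cref{prop:Localization}.
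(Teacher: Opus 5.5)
Your architecture is the paper's: a cutoff $\theta$ with $\sum_k\theta^2(\cdot-kL)\equiv1$, translation averaging to turn the $\ell^{-2}$ localization error into $(L\ell)^{-1}$, and gluing $t^2$ copies with corridors of width $R$. However, Step 1 as you set it up does not deliver the identity $S(\Gamma^D_{\tilde L})=t^2S(\Gamma_L)$, and you leave it open.

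The missing observation is that preserving the spectrum needs neither averaging nor unitarity. For a \emph{fixed} shift, the one-body map $U\psi=\theta\psi$ is already an isometry $L^2_{\rm per}(\Lambda_L)\to L^2(\Lambda_{L+2\ell})$. Indeed, folding gives $\int_{\R^2}\theta^2\bar\phi\psi=\int_{\Lambda_L}\sum_{k}\theta^2(\cdot-kL)\,\bar\phi\psi=\langle\phi,\psi\rangle$. Hence $\mathbb U:=\bigoplus_n U^{\otimes n}$ is an isometry of Fock spaces that preserves particle number. Then $\mathbb U\Gamma_L\mathbb U^*=\sum_j\lambda_j|\mathbb U\Psi_j\rangle\langle\mathbb U\Psi_j|$ with $\{\mathbb U\Psi_j\}$ orthonormal, so it has exactly the nonzero spectrum of $\Gamma_L$. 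Trace, entropy and the full distribution of $\cN$ are therefore preserved; this is what the paper means by multiplying each eigenfunction by a cosine factor in every direction. Your mixture $L^{-2}\int\Gamma^u\,\dd u$, by contrast, is a genuine convex combination. Concavity gives only $S\geq S(\Gamma_L)$, and the inequality is in general strict unless all $\Gamma^u$ coincide (e.g.\ for translation-invariant $\Gamma_L$).

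The paper uses the translation average only on the energy, which is linear in the state. Let $\tau_u$ be the unitary translation on the periodic Fock space and $\Gamma^D_u:=\mathbb U\tau_u\Gamma_L\tau_u^*\mathbb U^*$. Then $L^{-2}\int_{\Lambda_L}\tr(\cH\Gamma^D_u)\,\dd u\leq\tr(\cH\Gamma_L)+\frac{C}{L\ell}\tr(\cN\Gamma_L)$, so some $\bar u$ does at least as well as the average. The single state $\Gamma^D_{\bar u}$ then satisfies all four conclusions, with the three equalities holding exactly.

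You should also translate the state rather than the cutoff. With $\theta(\cdot-u)$ as you wrote it, $\Gamma^u$ lives on $u+\Lambda_{L+2\ell}$. Your mixture is then not supported in a box of side $L+2\ell$, and the corridor bookkeeping in Step 2 breaks.

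Two smaller points. Your IMS ``identity'' holds only after folding: the cross term $\tfrac12\int\nabla(\theta^2)\cdot\nabla|\psi|^2$ vanishes only because $\sum_k\theta^2(\cdot-kL)\equiv1$ and $\psi$ is periodic. Also, second-quantizing the one-body ancilla map into $L^2(\Lambda_{L+2\ell})\otimes L^2(\Lambda_L,\dd u/L^2)$ would give each particle its own shift, which is not the construction you want.

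Your fallback $S\geq t^2S(\Gamma_L)$ would indeed suffice for \Cref{prop:Localization}, since the entropy enters as $-TS$. It is, however, weaker than the lemma as stated.
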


\begin{proof}
\textit{From periodic to Dirichlet on fixed boxes; see \cite[Lemma A.1]{basti2021newsecondorderupper} and \cite[Lemma 27]{haberbergerUpperBound2024}.} 
We write $\Gamma_L$ in its eigenbasis 
$$
\Gamma_L = \sum_{j\in\bN} \lambda_j |\Psi_j\rangle\langle\Psi_j|.
$$
Then we extend each periodic eigenfunction $\Psi_j$ to a function with Dirichlet boundary conditions on the bigger box $\Lambda_{L+2\ell}$ by multiplying it in every spatial direction with a suitable cosine factor. 
Summing these modified eigenfunctions yields a state $\Gamma_{L+2\ell}^D$ on $\cF(\Lambda_{L+2\ell})$ with Dirichlet boundary conditions. 
Moreover, there exists a $\overline{u}\in \Lambda_L$ such that shifting the eigenfunctions yields a trial state $\Gamma^D_{L+2\ell,\overline{u}}$ on the box centered at $\overline{u}$.
One can verify that 
\begin{align*}
&S(\Gamma^D_{L+2\ell,\overline{u}}) = S(\Gamma_L),
\quad \tr (\cN^j\Gamma^D_{L+2\ell,\overline{u}} ) = \tr(\cN^j \Gamma_L ) \; \forall j\in\mathbb{N}.
\intertext{and}
&\tr(\cH \Gamma^D_{L+2\ell,\overline{u}}) \leq \tr(\cH \Gamma_L) + \frac{C}{L\ell}\tr(\cN \Gamma_L).
\end{align*}

\textit{Extension to the thermodynamic box. See \cite[Lemma A.2]{basti2021newsecondorderupper} and \cite[Chapter B.2]{haberbergerUpperBound2024}.} 
Starting from the previously constructed state, we replicate it $t^2$ times. Between adjacent boxes, we insert narrow corridors of width $R$ to ensure that there is no interaction between different copies. In this way, we obtain a thermodynamic trial state $\Gamma_{\tilde{L}}^D$.

Because the individual boxes do not interact, the computation of the particle number, the entropy, and the energy of $\Gamma_{\tilde{L}}^D$ are straightforward.

\end{proof}

\begin{lemma}{(From grand canonical to canonical)}\label{from grand canoncial to canonical}
Let $\Gamma_{\tilde L}^D $ be a sequence of normalized density matrices with Dirichlet boundary conditions on $\cF(\Lambda_{\tilde L})$ for $\tilde{L}\rightarrow \infty$ and
$$
 \tilde\rho =  \frac{\tr(\cN \Gamma^D_{\tilde{L}})}{|\Lambda_{\tilde{L}}|}
$$
for all $\tilde L$.
Then
$$
f(\tilde\rho,T) \leq \liminf_{\tilde{L} \to \infty} \frac{\tr(\cH \Gamma^D_{\tilde{L}})-TS(\Gamma^D_{\tilde{L}}) }{|\Lambda_{\tilde{L}}|}.
$$
\end{lemma}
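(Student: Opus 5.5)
The plan is to pass from the grand canonical Dirichlet states $\Gamma^D_{\tilde L}$ to canonical ones by a convexity argument in the particle number, and then compare with the canonical free energy via the Gibbs variational principle. Fix $\tilde L$ and decompose $\Gamma^D_{\tilde L}$ along the $n$-particle sectors. Since $\cH$ and $\cN$ commute and the state need not be block diagonal, we first replace it by its pinching $\bigoplus_n P_n\Gamma^D_{\tilde L}P_n$. This leaves $\tr(\cH\,\cdot)$ and $\tr(\cN\,\cdot)$ unchanged and can only increase the entropy. We may therefore assume
$$\Gamma^D_{\tilde L}=\bigoplus_n p_n\Gamma_n,\qquad \sum_n p_n=1,\quad \sum_n np_n=\tilde\rho|\Lambda_{\tilde L}|,$$
where each $\Gamma_n$ is a normalized $n$-particle density matrix with Dirichlet boundary conditions.

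Next we use the entropy identity $S(\Gamma^D_{\tilde L})=\sum_n p_nS(\Gamma_n)+H(p)$, where $H(p)=-\sum_np_n\log p_n$. The variational principle \eqref{free energy} gives $\tr(H_n\Gamma_n)-TS(\Gamma_n)\geq F_{\Lambda_{\tilde L}}(n)$. Combining these,
$$\tr(\cH\Gamma^D_{\tilde L})-TS(\Gamma^D_{\tilde L})\geq\sum_n p_nF_{\Lambda_{\tilde L}}(n)-TH(p).$$
The key input is then a convexity statement in $n$. In the thermodynamic limit $F_{\Lambda}(n)/|\Lambda|\approx f(n/|\Lambda|,T)$, and $f(\cdot,T)$ is convex in the density by standard thermodynamic limit theory \cite{ruelle1969statistical,robinson_thermodynamic_1971}. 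Jensen's inequality then yields
$$\sum_np_n f(n/|\Lambda|,T)\geq f\Big(\sum_n p_n n/|\Lambda|,T\Big)=f(\tilde\rho,T).$$

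The main obstacle is making the previous step uniform: the finite-volume $F_\Lambda(n)$ is not exactly $|\Lambda|f(n/|\Lambda|)$, and the mixing entropy $H(p)$ must be shown to be $o(|\Lambda_{\tilde L}|)$. For the entropy, I would first truncate to $n\leq C|\Lambda_{\tilde L}|$. The cost of discarding large $n$ is controlled because $F_\Lambda(n)\geq -CT|\Lambda|\cdot(\text{ideal gas bound})$ and positivity of $v$ makes the energy grow. On the truncated range, $H(p)\leq\log(C|\Lambda|+1)=o(|\Lambda|)$. For the uniformity, I would use the standard Dirichlet superadditivity bound: for Dirichlet boundary conditions, $F_{\Lambda}(n)\geq |\Lambda| f(n/|\Lambda|,T)-o(|\Lambda|)$ uniformly for densities in compact sets. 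This follows by tiling a larger box with copies of $\Lambda$ separated by corridors of width $R$ and using $v\geq0$. Alternatively, one can avoid uniformity by invoking the equivalence of ensembles: the grand-canonical pressure with a chemical potential $\mu$ chosen so that the density equals $\tilde\rho$ is the Legendre transform of $f$. Concretely, $\tr(\cH-\mu\cN)\Gamma-TS\geq -|\Lambda|P_\Lambda(\mu)$, with $P_\Lambda\to P$ and $-P(\mu)+\mu\tilde\rho\geq f(\tilde\rho,T)$, where the last inequality holds by convex duality at a subgradient $\mu$ of $f$ at $\tilde\rho$. I would use this second route, since it only needs the convergence of the grand-canonical pressure. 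Dividing by $|\Lambda_{\tilde L}|$ and taking the $\liminf$ then concludes.
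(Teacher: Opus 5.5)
Your second route is the paper's proof. The Gibbs variational principle for $\cH-\mu\cN$ bounds $\tr(\cH\Gamma^D_{\tilde L})-TS(\Gamma^D_{\tilde L})$ from below by $\mu\tilde\rho|\Lambda_{\tilde L}|$ minus the finite-volume grand-canonical pressure term, and the equivalence of ensembles $f(\rho,T)=\sup_{\mu}\{\rho\mu-p(\mu,T)\}$ then finishes. The paper simply takes the supremum over all $\mu$ rather than choosing a subgradient of $f(\cdot,T)$ at $\tilde\rho$, which spares you separate appeals to convexity and Fenchel--Young equality, and the pinching/Jensen route in your first paragraphs is not needed.
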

\begin{proof}
We use the equivalence of the canonical and grand-canonical ensemble,
$$
f(\rho,T) = \sup_{\mu \in \R}\{\rho \mu - p(\mu,T)\}, 
$$
where the (grand-canonical) pressure is defined as 
$$
p(\mu, T) := -\lim_{\tilde{L} \to \infty} |\Lambda_{\tilde{L}}|^{-1} \inf\left\{\tr((\cH-\mu\cN)\Gamma)-TS(\Gamma) \mid \Gamma:\cF \to \cF, \Gamma\geq 0, \tr \Gamma =1\right\},
$$
see, e.g., \cite{ruelle1969statistical}.
We obtain
$$
-p(\mu,T) \leq \liminf_{\tilde L\to \infty} \frac{\tr(\cH \Gamma^D_{\tilde{L}})-TS(\Gamma^D_{\tilde{L}})}{|\Lambda_{\tilde L}|} - \mu \tilde{\rho}
$$
so that 
$$
f(\tilde{\rho},T) \leq \liminf_{\tilde L\to \infty} \frac{\tr(\cH \Gamma^D_{\tilde{L}})-TS(\Gamma^D_{\tilde{L}})}{|\Lambda_{\tilde L}|}.
$$
\end{proof}

\begin{proof}[Proof of \Cref{prop:Localization}]
This is a direct application of \Cref{lem:Dirichlet} and \Cref{from grand canoncial to canonical} with the choice $\tilde{L} = t(L+2\ell +R)$ and $t\to\infty.$
\end{proof}

\paragraph{Acknowledgments} This work was partially funded by the Deutsche Forschungsgemeinschaft (DFG, German Research Foundation) – Project-ID 470903074 – TRR 352.

\printbibliography

@book{ruelle1969statistical,
	location = {New York},
	title = {Statistical mechanics: Rigorous results},
	series = {Mathematical physics monograph series},
	publisher = {W. A. Benjamin},
	author = {Ruelle, David},
	date = {1969},
}

@article{Schlein_2008_UpperSoft,
	title = {The ground state energy of a low density Bose gas: a second order upper bound},
	volume = {78},
	issn = {1050-2947, 1094-1622},
	url = {http://arxiv.org/abs/0806.4873},
	doi = {10.1103/PhysRevA.78.053627},
	shorttitle = {The ground state energy of a low density Bose gas},
	pages = {053627},
	number = {5},
	journaltitle = {Physical Review A},
	shortjournal = {Phys. Rev. A},
	author = {Erdős, László and Schlein, Benjamin and Yau, Horng-Tzer},
	urldate = {2025-11-07},
	date = {2008-11-19},
	langid = {english},
	eprinttype = {arxiv},
	eprint = {0806.4873 [math-ph]},
}

@article{Mora_2003,
	title = {Extension of Bogoliubov theory to quasicondensates},
	volume = {67},
	rights = {http://link.aps.org/licenses/aps-default-license},
	issn = {1050-2947, 1094-1622},
	url = {https://link.aps.org/doi/10.1103/PhysRevA.67.053615},
	doi = {10.1103/PhysRevA.67.053615},
	pages = {053615},
	number = {5},
	journaltitle = {Physical Review A},
	shortjournal = {Phys. Rev. A},
	author = {Mora, Christophe and Castin, Yvan},
	urldate = {2026-04-02},
	date = {2003-05-30},
	langid = {english},
}

@article{JMKosterlitz_1973,
	title = {Ordering, metastability and phase transitions in two-dimensional systems},
	volume = {6},
	issn = {0022-3719},
	url = {https://iopscience.iop.org/article/10.1088/0022-3719/6/7/010},
	doi = {10.1088/0022-3719/6/7/010},
	pages = {1181--1203},
	number = {7},
	journaltitle = {Journal of Physics C: Solid State Physics},
	shortjournal = {J. Phys. C: Solid State Phys.},
	author = {Kosterlitz, J M and Thouless, D J},
	urldate = {2026-04-02},
	date = {1973-04-12},
}

@article{Andersen2002,
	title = {Ground state pressure and energy density of an interacting homogeneous Bose gas in two dimensions},
	volume = {28},
	rights = {http://www.springer.com/tdm},
	issn = {1434-6028, 1434-6036},
	url = {http://link.springer.com/10.1140/epjb/e2002-00242-6},
	doi = {10.1140/epjb/e2002-00242-6},
	pages = {389--396},
	number = {4},
	journaltitle = {The European Physical Journal B},
	shortjournal = {Eur. Phys. J. B},
	author = {Andersen, J.O.},
	urldate = {2026-04-02},
	date = {2002-08},
	langid = {english},
}

@article{Cherny2D,
	title = {Dilute Bose gas in two dimensions: Density expansions and the Gross-Pitaevskii equation},
	volume = {64},
	rights = {http://link.aps.org/licenses/aps-default-license},
	issn = {1063-651X, 1095-3787},
	url = {https://link.aps.org/doi/10.1103/PhysRevE.64.027105},
	doi = {10.1103/PhysRevE.64.027105},
	shorttitle = {Dilute Bose gas in two dimensions},
	pages = {027105},
	number = {2},
	journaltitle = {Physical Review E},
	shortjournal = {Phys. Rev. E},
	author = {Cherny, A. Yu. and Shanenko, A. A.},
	urldate = {2026-04-02},
	date = {2001-07-27},
	langid = {english},
}

@article{solovej2025mathematicalphysicsdilutebose,
	title = {Mathematical physics of dilute Bose gases},
	volume = {26},
	rights = {https://creativecommons.org/licenses/by/4.0/},
	issn = {1631-0705, 1878-1535},
	url = {https://comptes-rendus.academie-sciences.fr/physique/articles/10.5802/crphys.247/},
	doi = {10.5802/crphys.247},
	pages = {339--348},
	issue = {G1},
	journaltitle = {Comptes Rendus. Physique},
	author = {Solovej, Jan Philip},
	urldate = {2026-04-02},
	date = {2025-04-08},
	langid = {english},
}

@article{Mermin_Wagner_1966,
	title = {Absence of Ferromagnetism or Antiferromagnetism in One- or Two-Dimensional Isotropic Heisenberg Models},
	volume = {17},
	rights = {http://link.aps.org/licenses/aps-default-license},
	issn = {0031-9007},
	url = {https://link.aps.org/doi/10.1103/PhysRevLett.17.1133},
	doi = {10.1103/PhysRevLett.17.1133},
	pages = {1133--1136},
	number = {22},
	journaltitle = {Physical Review Letters},
	shortjournal = {Phys. Rev. Lett.},
	author = {Mermin, N. D. and Wagner, H.},
	urldate = {2026-03-25},
	date = {1966-11-28},
	langid = {english},
}

@misc{haberbergerUpperBound2024,
	title = {Upper bound for the free energy of dilute bose gases at low temperature},
	url = {https://arxiv.org/abs/2405.03378},
	author = {Haberberger, Florian and Hainzl, Christian and Schlein, Benjamin and Triay, Arnaud},
	date = {2024},
	eprinttype = {arxiv},
	eprint = {2405.03378 [math-ph]},
}

@book{vladimirov1971equations,
	location = {New York},
	edition = {Illustrated reprint},
	title = {Equations of mathematical physics},
	volume = {3},
	isbn = {978-0-8247-1713-1},
	series = {Monographs and textbooks in pure and applied mathematics},
	publisher = {M.Dekker},
	author = {Vladimirov, Vasil Sergeevich},
	date = {1971},
	note = {Pages: 418},
}

@book{nielsen_quantum_2012,
	edition = {1},
	title = {Quantum Computation and Quantum Information: 10th Anniversary Edition},
	rights = {https://www.cambridge.org/core/terms},
	isbn = {978-1-107-00217-3},
	url = {https://www.cambridge.org/core/product/identifier/9780511976667/type/book},
	doi = {10.1017/CBO9780511976667},
	shorttitle = {Quantum Computation and Quantum Information},
	publisher = {Cambridge University Press},
	author = {Nielsen, Michael A. and Chuang, Isaac L.},
	urldate = {2026-01-30},
	date = {2012-06-05},
}

@article{seiringer_thermodynamic_2006,
	title = {The Thermodynamic Pressure of a Dilute Fermi Gas},
	volume = {261},
	rights = {http://www.springer.com/tdm},
	issn = {0010-3616, 1432-0916},
	url = {http://link.springer.com/10.1007/s00220-005-1433-3},
	doi = {10.1007/s00220-005-1433-3},
	pages = {729--757},
	number = {3},
	journaltitle = {Communications in Mathematical Physics},
	shortjournal = {Commun. Math. Phys.},
	author = {Seiringer, Robert},
	urldate = {2026-01-12},
	date = {2006-02},
	langid = {english},
}

@article{Bose1924,
	title = {Plancks Gesetz und Lichtquantenhypothese},
	volume = {26},
	url = {https://doi.org/10.1007/BF01327326},
	pages = {178--181},
	journaltitle = {Zeitschrift Fur Physik},
	shortjournal = {Z. Phys.},
	author = {Bose, S. N.},
	date = {1924},
}

@book{robinson_thermodynamic_1971,
	location = {Berlin, Heidelberg},
	title = {The Thermodynamic Pressure in Quantum Statistical Mechanics},
	volume = {9},
	rights = {http://www.springer.com/tdm},
	isbn = {978-3-540-05640-9},
	url = {http://link.springer.com/10.1007/3-540-05640-8},
	doi = {10.1007/3-540-05640-8},
	series = {Lecture Notes in Physics},
	publisher = {Springer Berlin Heidelberg},
	author = {Robinson, Derek W.},
	urldate = {2025-11-12},
	date = {1971},
	langid = {english},
}

@article{Seiringer2008_FreeEnergy_3D_Lower,
	title = {Free energy of a dilute bose gas: Lower bound},
	volume = {279},
	issn = {1432-0916},
	url = {https://doi.org/10.1007/s00220-008-0428-2},
	doi = {10.1007/s00220-008-0428-2},
	pages = {595--636},
	number = {3},
	journaltitle = {Communications in Mathematical Physics},
	shortjournal = {Commun. Math. Phys.},
	author = {Seiringer, Robert},
	date = {2008},
}

@article{FGJMO-22,
	title = {The ground state energy of a two-dimensional bose gas},
	volume = {405},
	issn = {1432-0916},
	url = {https://doi.org/10.1007/s00220-023-04907-2},
	doi = {10.1007/s00220-023-04907-2},
	pages = {59},
	number = {3},
	journaltitle = {Communications in Mathematical Physics},
	shortjournal = {Commun. Math. Phys.},
	author = {Fournais, Søren and Girardot, Theotime and Junge, Lukas and Morin, Leo and Olivieri, Marco},
	date = {2024-02},
}

@article{Fannes_1973,
	title = {A continuity property of the entropy density for spin lattice systems},
	volume = {31},
	url = {https://doi.org/10.1007/BF01646490},
	shorttitle = {Commun. Math. Phys.},
	pages = {291--294},
	journaltitle = {Communications in Mathematical Physics},
	author = {Fannes, Mark},
	date = {1973},
}

@article{Basti_2023_Hard_GP,
	title = {A Second Order Upper Bound for the Ground State Energy of a Hard-Sphere Gas in the Gross–Pitaevskii Regime},
	volume = {399},
	issn = {1432-0916},
	url = {https://doi.org/10.1007/s00220-022-04547-y},
	doi = {10.1007/s00220-022-04547-y},
	pages = {1--55},
	number = {1},
	journaltitle = {Communications in Mathematical Physics},
	shortjournal = {Commun. Math. Phys.},
	author = {Basti, Giulia and Cenatiempo, Serena and Olgiati, Alessandro and Pasqualetti, Giulio and Schlein, Benjamin},
	date = {2023-04-01},
}

@article{BCGOPS_Hard_2024,
	title = {Upper bound for the ground state energy of a dilute bose gas of hard spheres},
	volume = {248},
	number = {6},
	journaltitle = {Archive for Rational Mechanics and Analysis},
	shortjournal = {Arch. Ration. Mech. Anal.},
	author = {Basti, Giulia and Cenatiempo, Serena and Giuliani, Alessandro and Olgiati, Alessandro and Pasqualetti, Giulio and Schlein, Benjamin},
	date = {2024},
}

@article{caraciExcitationSpectrumTwodimensional2023,
	title = {The excitation spectrum of two-dimensional bose gases in the gross–pitaevskii regime},
	volume = {24},
	url = {https://doi.org/10.1007/s00023-023-01278-1},
	doi = {10.1007/s00023-023-01278-1},
	pages = {2877--2928},
	number = {8},
	journaltitle = {Annales Henri Poincaré},
	shortjournal = {Ann. Henri Poincaré},
	author = {Caraci, Cristina and Cenatiempo, Serena and Schlein, Benjamin},
	date = {2023},
}

@misc{FGJMOT-24,
	title = {The free energy of dilute Bose gases at low temperatures interacting via strong potentials},
	url = {https://arxiv.org/abs/2408.14222},
	author = {Fournais, Søren and Girardot, Theotime and Junge, Lukas and Morin, Leo and Olivieri, Marco and Triay, Arnaud},
	date = {2024},
	eprinttype = {arxiv},
	eprint = {2408.14222 [math-ph]},
}

@misc{haberberger2024freeenergydilutebose,
	title = {The free energy of dilute Bose gases at low temperatures},
	url = {https://arxiv.org/abs/2304.02405},
	author = {Haberberger, Florian and Hainzl, Christian and Nam, Phan Thành and Seiringer, Robert and Triay, Arnaud},
	date = {2023},
	eprinttype = {arxiv},
	eprint = {2304.02405 [math-ph]},
}

@misc{Deuchert_2025_newupperboundfree,
	title = {A new upper bound on the specific free energy of dilute Bose gases},
	url = {http://arxiv.org/abs/2507.20877},
	doi = {10.48550/arXiv.2507.20877},
	number = {{arXiv}:2507.20877},
	publisher = {{arXiv}},
	author = {Basti, Giulia and Boccato, Chiara and Cenatiempo, Serena and Deuchert, Andreas},
	urldate = {2025-11-10},
	date = {2025-07-28},
	langid = {english},
	eprinttype = {arxiv},
	eprint = {2507.20877 [math-ph]},
}

@article{Einstein1925,
	title = {Quantentheorie des einatomigen idealen Gases. Zweite Abhandlung.},
	url = {https://doi.org/10.1007/BF01327507},
	pages = {3--14},
	journaltitle = {Sitzungsberichte der Preußischen Akademie der Wissenschaften, I},
	author = {Einstein, A.},
	date = {1925},
}

@article{Yin2010_Free_UpperBound,
	title = {Free energies of dilute bose gases: Upper bound},
	volume = {141},
	issn = {1572-9613},
	url = {https://doi.org/10.1007/s10955-010-0066-x},
	doi = {10.1007/s10955-010-0066-x},
	pages = {683--726},
	number = {4},
	journaltitle = {Journal of Statistical Physics},
	author = {Yin, Jun},
	date = {2010},
}

@article{Dyson_Hard_1957,
	title = {Ground-state energy of a hard-sphere gas},
	volume = {106},
	url = {https://link.aps.org/doi/10.1103/PhysRev.106.20},
	doi = {10.1103/PhysRev.106.20},
	pages = {20--26},
	number = {1},
	journaltitle = {Physical Review},
	shortjournal = {Phys. Rev.},
	publisher = {American Physical Society},
	author = {Dyson, F. J.},
	date = {1957-04},
	note = {Number of pages: 0},
}

@article{basti2021newsecondorderupper,
	title = {A new second-order upper bound for the ground state energy of dilute Bose gases},
	volume = {9},
	issn = {2050-5094},
	url = {https://www.cambridge.org/core/product/identifier/S2050509421000669/type/journal_article},
	doi = {10.1017/fms.2021.66},
	pages = {e74},
	journaltitle = {Forum of Mathematics, Sigma},
	shortjournal = {Forum of Mathematics, Sigma},
	author = {Basti, Giulia and Cenatiempo, Serena and Schlein, Benjamin},
	urldate = {2025-10-29},
	date = {2021},
	langid = {english},
}

@book{GreenBook,
	title = {The mathematics of the Bose gas and its condensation},
	series = {Oberwolfach seminars},
	publisher = {Birkhäuser},
	author = {Lieb, E. H. and Seiringer, R. and Solovej, J. P. and Yngvason, J.},
	date = {2005},
}

@article{Lieb2001,
	title = {The ground state energy of a dilute two-dimensional bose gas},
	volume = {103},
	issn = {1572-9613},
	url = {https://doi.org/10.1023/A:1010337215241},
	doi = {10.1023/A:1010337215241},
	pages = {509--526},
	number = {3},
	journaltitle = {Journal of Statistical Physics},
	author = {Lieb, Elliott H. and Yngvason, Jakob},
	date = {2001-05},
}

@article{Mora2D,
	title = {Ground state energy of the two-dimensional weakly interacting bose gas: First correction beyond bogoliubov theory},
	volume = {102},
	url = {https://link.aps.org/doi/10.1103/PhysRevLett.102.180404},
	doi = {10.1103/PhysRevLett.102.180404},
	pages = {180404},
	number = {18},
	journaltitle = {Physical Review Letters},
	shortjournal = {Phys. Rev. Lett.},
	publisher = {American Physical Society},
	author = {Mora, Christophe and Castin, Yvan},
	date = {2009-05},
	note = {Number of pages: 4},
}

@article{DEUCHERT_MAYER_SEIRINGER_2020,
	title = {The free energy of the two-dimensional dilute Bose gas. I. Lower bound},
	volume = {8},
	url = {https://doi.org/10.1017/fms.2020.17},
	doi = {10.1017/fms.2020.17},
	pages = {e20},
	journaltitle = {Forum of Mathematics, Sigma},
	author = {Deuchert, Andreas and Mayer, Simon and Seiringer, Robert},
	date = {2020},
}

@article{Jastrow1955,
	title = {Many-body problem with strong forces},
	volume = {98},
	url = {https://link.aps.org/doi/10.1103/PhysRev.98.1479},
	doi = {10.1103/PhysRev.98.1479},
	pages = {1479--1484},
	number = {5},
	journaltitle = {Physical Review},
	shortjournal = {Phys. Rev.},
	publisher = {American Physical Society},
	author = {Jastrow, Robert},
	date = {1955-06},
	note = {Number of pages: 0},
}

@article{Mayer2020,
	title = {The free energy of the two-dimensional dilute Bose gas. {II}. Upper bound},
	volume = {61},
	issn = {0022-2488},
	url = {https://doi.org/10.1063/5.0005950},
	doi = {10.1063/5.0005950},
	pages = {061901},
	number = {6},
	journaltitle = {Journal of Mathematical Physics},
	author = {Mayer, Simon and Seiringer, Robert},
	date = {2020-06},
	note = {tex.eprint: https://pubs.aip.org/aip/jmp/article-pdf/doi/10.1063/5.0005950/13436963/061901{\textbackslash}\_1{\textbackslash}\_online.pdf},
}

@article{Hohenberg1967,
	title = {Existence of long-range order in one and two dimensions},
	volume = {158},
	url = {https://link.aps.org/doi/10.1103/PhysRev.158.383},
	doi = {10.1103/PhysRev.158.383},
	pages = {383--386},
	number = {2},
	journaltitle = {Physical Review},
	shortjournal = {Phys. Rev.},
	publisher = {American Physical Society},
	author = {Hohenberg, P. C.},
	date = {1967-06},
	note = {Number of pages: 0},
}

@article{Berezinskii:1972fet,
    author = "Berezinskii, V. L.",
title = {Destruction of Long-range Order in One-dimen\-sional and Two-dimen\-sional Systems Possessing a Continuous Symmetry Group. II. Quantum Systems},
journal = "Sov. Phys. JETP",
    volume = "34",
    number = "3",
    pages = "610--616",
    year = "1972"
}

@article{Schick,
	title = {Two-dimensional system of hard-core bosons},
	volume = {3},
	url = {https://link.aps.org/doi/10.1103/PhysRevA.3.1067},
	doi = {10.1103/PhysRevA.3.1067},
	pages = {{1067--1073}},
	number = {3},
	journaltitle = {Physical Review A: Atomic, Molecular, and Optical Physics},
	shortjournal = {Phys. Rev. A},
	publisher = {American Physical Society},
	author = {Schick, M.},
	date = {1971-03},
	note = {Number of pages: 0},
}

\end{document}